\newcommand{\contentmode}{1}   
\newcommand{\classmode}{1}     
\newif\ifincludeAppPolicy   \includeAppPolicytrue   
\newif\ifincludeAppSolver   \includeAppSolvertrue   
\newif\ifincludeAppProofs   \includeAppProofstrue   
\newif\ifincludeAppPerf     \includeAppPerftrue     
\newif\iffullcontent
\newif\ifshortcontent
\newif\ifarticleclass
  \DeclareMathOperator*{\argmin}{arg\,min}
\DeclareMathOperator*{\argmin}{arg\,min}
  \let\c@subfigure\undefined
  \let\c@subtable\undefined
  \let\p@subfigure\undefined
  \let\p@subtable\undefined
  \renewcommand*{\@jmlrenddoc}{%
    \FloatBarrier
    \global\let\@reprint\@empty
  }
  \newcommand{\jmlrendhere}{%
    \phantomsection
    \protected@edef\@currentlabelname{end of \@shorttitle}%
    \label{jmlrend}%
  }
\let\csname compactenvsloaded\endcsname\empty
\newlength{\compactthmtopsep}\setlength{\compactthmtopsep}{4pt}
\newlength{\compactthmbotsep}\setlength{\compactthmbotsep}{4pt}
\newlength{\compactthmheadsep}\setlength{\compactthmheadsep}{0.5em}
\newlength{\compactsectopsep}\setlength{\compactsectopsep}{6pt}
\newlength{\compactsecbotsep}\setlength{\compactsecbotsep}{4pt}
\newcommand{\newshorttheorem}[4]{%
  \newenvironment{#1}[1][]{%
    \par
    \vspace{\compactthmtopsep}%
    \refstepcounter{#3}%
    \def\@compactthm@arg{##1}%
    \ifx\@compactthm@arg\@empty
      \def\@compactthm@head{#2~\csname the#3\endcsname}%
    \else
      \def\@compactthm@head{#2~\csname the#3\endcsname\space(##1)}%
    \fi
    \noindent\textbf{\@compactthm@head}\hspace{\compactthmheadsep}#4\ignorespaces
  }{\par\vspace{\compactthmbotsep}}%
}
\providecommand{\compactsecfont}{\large\bfseries}
\providecommand{\compactsubsecfont}{\normalsize\bfseries}
\providecommand{\compactsubsubsecfont}{\normalsize\bfseries}
\providecommand{\compactsectopsep}{6pt plus 2pt minus 1pt}
\providecommand{\compactsecbotsep}{4pt plus 1pt minus 1pt}
\newcommand{\sectionshort}[1]{%
  \par
  \vspace{\compactsectopsep}%
  \refstepcounter{section}%
  \setcounter{subsection}{0}%
  \setcounter{subsubsection}{0}%
  \addcontentsline{toc}{section}%
    {\protect\numberline{\thesection}#1}%
  \noindent{\compactsecfont\thesection\quad #1}\par
  \vspace{\compactsecbotsep}%
  \@afterheading\ignorespaces
}
\newcommand{\subsectionshort}[1]{%
  \par
  \vspace{\compactsectopsep}%
  \refstepcounter{subsection}%
  \setcounter{subsubsection}{0}%
  \addcontentsline{toc}{subsection}%
    {\protect\numberline{\thesubsection}#1}%
  \noindent{\compactsubsecfont\thesubsection\quad #1}\par
  \vspace{\compactsecbotsep}%
  \@afterheading\ignorespaces
}
\newcommand{\subsubsectionshort}[1]{%
  \par
  \vspace{\compactsectopsep}%
  \refstepcounter{subsubsection}%
  \addcontentsline{toc}{subsubsection}%
    {\protect\numberline{\thesubsubsection}#1}%
  \noindent{\compactsubsubsecfont\thesubsubsection\quad #1}\par
  \vspace{\compactsecbotsep}%
  \@afterheading\ignorespaces
}
\providecommand{\compactparfont}{\normalsize\bfseries}
\providecommand{\compactpartopsep}{3pt plus 1pt minus 0.5pt}
\newcommand{\sectionshortstar}[1]{%
  \par\vspace{\compactsectopsep}%
  {\noindent\textbf{#1}\par}%
  \vspace{\compactsecbotsep}%
  \@afterheading
}
\newcommand{\eqgentle}{%
  \setlength{\abovedisplayskip}{4pt plus 1pt minus 1pt}%
  \setlength{\belowdisplayskip}{4pt plus 1pt minus 1pt}%
  \setlength{\abovedisplayshortskip}{0pt plus 1pt}%
  \setlength{\belowdisplayshortskip}{4pt plus 1pt minus 1pt}%
  \setlength{\jot}{2pt}%
}
\newcommand{\eqtight}{%
  \setlength{\abovedisplayskip}{2pt plus 1pt}%
  \setlength{\belowdisplayskip}{2pt plus 1pt}%
  \setlength{\abovedisplayshortskip}{0pt plus 1pt}%
  \setlength{\belowdisplayshortskip}{2pt plus 1pt}%
  \setlength{\jot}{1pt}%
}
\newcommand{\eqgentleon}{\begingroup\eqgentle}
\newcommand{\eqtighton}  {\begingroup\eqtight}
\newcommand{\eqend}{\endgroup}
\newcommand{\subeqtighton}{\ifshortcontent\eqtighton\fi}
\newcommand{\subeqgentleon}{\ifshortcontent\eqgentleon\fi}
\newcommand{\subeqend}{\ifshortcontent\eqend\fi}
  \newtheorem{theorem}{Theorem}
  \newtheorem{lemma}{Lemma}
  \newtheorem{proposition}{Proposition}
  \newtheorem{corollary}{Corollary}
  \newtheorem{definition}{Definition}
  \newtheorem{problem}{Problem}
  \newtheorem{remark}{Remark}
  \newtheorem{example}{Example}
  \newtheorem{assumption}{Assumption}
  \let\c@theorem\undefined    
  \let\c@lemma\undefined      
  \let\c@proposition\undefined 
  \let\c@corollary\undefined  
  \let\c@definition\undefined 
  \let\c@remark\undefined     
  \let\c@example\undefined    
  \newtheorem{theorem}{Theorem}
  \newtheorem{lemma}{Lemma}
  \newtheorem{proposition}{Proposition}
  \newtheorem{remark}{Remark}
  \newtheorem{assumption}{Assumption}
  \numberwithin{equation}{section}
\newcommand{\R}{\mathbb{R}}
\newcommand{\calR}{\mathcal{R}}
\newcommand{\norm}[1]{\left\lVert#1\right\rVert}
\newcommand{\abs}[1]{\left\lvert#1\right\rvert}
\newcommand{\blue}[1]{#1}
\newcounter{pitem}[theorem]
\renewcommand{\thepitem}{P\arabic{pitem}}
\newcommand{\pitem}[2]{%
  \par\smallskip                         
  \noindent                              
  \refstepcounter{pitem}\label{#1}
  \textbf{(\thepitem) #2}\ \ignorespaces
}
\newcommand{\pitemshort}[2]{%
  \par                                   
  \vspace{\compactpartopsep}
  \noindent                              
  \refstepcounter{pitem}\label{#1}
  {\compactparfont (\thepitem) #2}\hspace{0.5em}\ignorespaces
}
\let\csname versiontoggleloaded\endcsname\empty
\newcommand{\fullonly}[1]{\iffullcontent #1\fi}
\newcommand{\shortonly}[1]{\ifshortcontent #1\fi}
\newcommand{\articleonly}[1]{\ifarticleclass #1\fi}
\newcommand{\ldconly}[1]{\ifarticleclass\else #1\fi}
\newcommand{\fulleq}[2]{%
  \iffullcontent
    \begin{equation}#1\end{equation}%
  \else
    #2%
  \fi
}
  \title{Safe Planning in Interactive Environments via Iterative Policy Updates and Adversarially Robust Conformal Prediction}
  \author[1]{Omid Mirzaeedodangeh}
  \author[2]{Eliot Shekhtman}
  \author[2,3]{Nikolai Matni}
  \author[1]{Lars Lindemann}
  \affil[1]{Automatic Control Laboratory, ETH Z\"urich, Switzerland}
  \affil[2]{Computer and Information Science, University of Pennsylvania, USA}
  \affil[3]{Electrical and Systems Engineering, University of Pennsylvania, USA}
  \date{\vspace{-5ex}}
  \providecommand{\keywords}[1]{\textbf{\textit{Index terms---}} #1}
  \title[Safe Planning in Interactive Environments]{Safe Planning in Interactive Environments via Iterative Policy Updates and Adversarially Robust Conformal Prediction}
\author{%
 \Name{Omid Mirzaeedodangeh} \Email{omirzaeedoda@ethz.ch}\\
 \addr Automatic Control Laboratory, ETH Zürich, Switzerland
 \AND
 \Name{Eliot Shekhtman} 
 \Email{shekhe@seas.upenn.edu}\\
 \addr Computer and Information Science, University of Pennsylvania, USA
 \AND
 \Name{Nikolai Matni}\thanks{Eliot Shekhtman and Nikolai Matni were partially supported by AFOSR Award FA9550-24-1-0102, SF Award SLES-2331880, and NSF CAREER award ECCS-2045834.}
 \Email{nmatni@seas.upenn.edu}\\
 \addr Electrical and Systems Engineering, University of Pennsylvania, USA\\
  \addr Computer and Information Science, University of Pennsylvania, USA
 \AND
 \Name{Lars Lindemann} \Email{llindemann@ethz.ch}\\
 \addr Automatic Control Laboratory, ETH Zürich, Switzerland
\vspace{-0.5cm}
}
\begin{document}

\maketitle
\ifarticleclass
  \thispagestyle{plain}\pagestyle{plain}
  \renewcommand{\baselinestretch}{0.97}
\fi

\begin{abstract}
Safe planning of an autonomous agent in interactive environments -- such as the control of a self-driving vehicle among pedestrians -- poses a major challenge as the behavior of the environment is unknown and reactive to the behavior of the autonomous agent. This coupling gives rise to interaction-driven distribution shifts where the autonomous agent's control policy may change the environment's behavior, thereby invalidating safety guarantees in existing work. Indeed, recent works have used conformal prediction (CP) to generate distribution-free safety guarantees using observed data of the environment. However, CP's assumption on data exchangeability is violated in interactive settings due to a circular dependency where a control policy update changes the environment's behavior, and vice versa. To address this gap, we propose an iterative framework that robustly maintains safety guarantees across policy updates by  quantifying the potential impact of a planned policy update on the environment's behavior. We realize this via adversarially robust CP where we perform a regular CP step  in each episode using observed data under the current policy, but then transfer safety guarantees across policy updates by analytically adjusting the CP result to account for distribution shifts. This adjustment is performed based on a policy-to-trajectory sensitivity analysis, resulting in a safe, episodic open-loop planner. We further conduct a contraction analysis of the system providing conditions under which both the CP results and the policy updates are guaranteed to converge. We empirically demonstrate these safety and convergence guarantees on a two-dimensional car-pedestrian and a high-dimensional quadcopter case study. To the best of our knowledge, these are the first results that
provide valid safety guarantees in such interactive settings.
\end{abstract}


\ifarticleclass
  \keywords{Safe planning in interactive environments, distribution shifts, adversarially robust conformal prediction, iterative control policy updates.}
\else
  \begin{keywords}
  Safe planning in interactive environments, distribution shifts, adversarially robust conformal prediction, iterative control policy updates.
  \end{keywords}
\fi

\section{Introduction} \label{sec:introduction}

Autonomous agents, e.g., self-driving vehicles and service robots, are increasingly deployed in human-centric, multi-agent environments \cite{mavrogiannis2023core,feng2025multiagentembodiedaiadvances}. 
The safe control of an autonomous agent  is challenging as the behavior of uncontrollable agents, e.g., pedestrians, are unknown and interactive, i.e., they may react to the behavior of the autonomous agent. This creates an intricate coupling and interaction-driven distribution shift where the distribution of uncontrollable agent behaviors changes with the control policy of the autonomous agent. In this paper, we address this   “chicken-and-egg” problem where changing the control policy changes the behavior of uncontrollable agents, and vice versa, see Figure \ref{fig:motivation}.

\ifarticleclass
    \begin{wrapfigure}[17]{r}{0.4\textwidth}
    \centering
    \vspace{-0.2cm}
    \includegraphics[width=1\linewidth]     {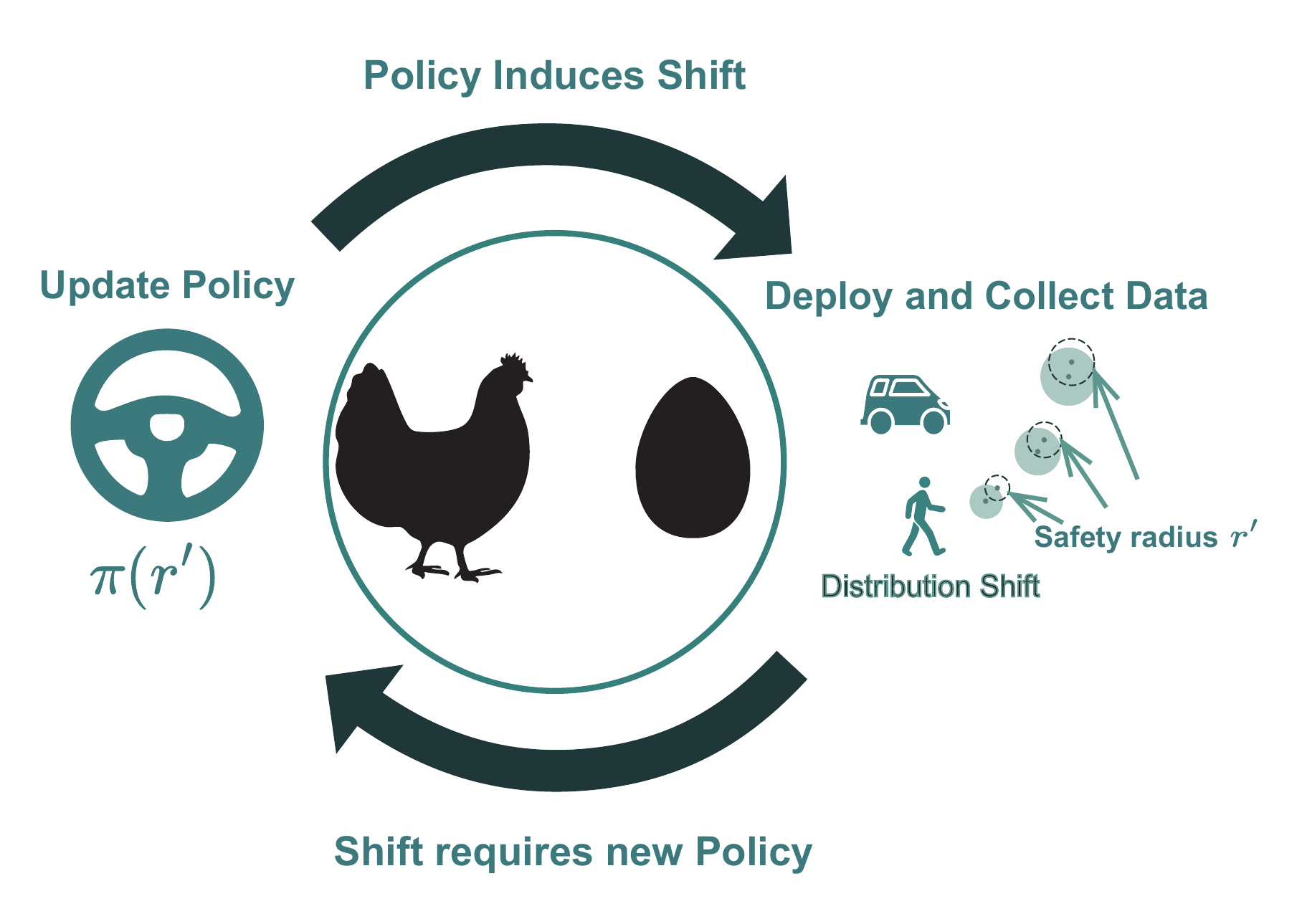}
    \caption{“Chicken-and-egg” problem: A change in policy can induce a distribution shift in the environment, here a pedestrian. This shift results in a modified safety radius $r'$ that captures the pedestrian's behavior under this  policy. This modified safety radius in turn requires a policy update $\pi(r')$.}
    \label{fig:motivation}
    \end{wrapfigure}
\else
  \iffullcontent
    \begin{wrapfigure}[15]{r}{0.315\textwidth}
      \centering
      \vspace{-0.1cm}%
      \includegraphics[width=\linewidth]{figure/OMID_SHAPE_X_2.pdf}
      \caption{``Chicken-and-egg'' problem: A policy change can induce a distribution shift in the environment, here a pedestrian. This shift results in a modified safety radius $r'$ that captures the pedestrian's behavior under this policy. This modified safety radius in turn requires a policy update $\pi(r')$.}
      \label{fig:motivation}
    \end{wrapfigure}
  \else
    \begin{wrapfigure}[15]{r}{0.315\textwidth}
      \centering
      \vspace{-0.1cm}%
      \includegraphics[width=\linewidth]{figure/OMID_SHAPE_X_2.pdf}
      \caption{``Chicken-and-egg'' problem: A policy change can induce a distribution shift in the environment, here a pedestrian. This shift results in a modified safety radius $r'$ that captures the pedestrian's behavior under this policy. This modified safety radius in turn requires a policy update $\pi(r')$.}
      \label{fig:motivation}
    \end{wrapfigure}
  \fi
\fi
Non-interactive approaches sequentially integrate predictions of uncontrollable agents into the design of the control policy \cite{trautman2010unfreezing,du2011robot}, while interactive approaches simultaneously predict uncontrollable agents and design control policies to take interactions directly into account \cite{kretzschmar2016socially,everett2021collision}. Non-interactive approaches usually assume worst-case interaction bounds to provide safety guarantees,  tending to be conservative. Interactive approaches perform better in practice, but fail to provide any safety guarantees. We instead propose a planning framework that takes interaction into account and enjoys  safety guarantees via iterative policy updates and robust conformal prediction.

\textbf{Safe planning with conformal prediction.} Conformal prediction (CP) is an uncertainty quantification technique that provides  statistical guarantees for test data after a one-time calibration on held-out calibration data \cite{vovk2005algorithmic, angelopoulos2021gentle}. CP has been used to construct prediction sets for uncontrollable agents that contain their true but unknown behavior with a user-specified probability \cite{cleaveland2024conformal,sun2022copula,zecchin2024forking}. These works learn a trajectory predictor  for uncontrollable agents and perform calibration over a nonconformity score that measures the deviation between the predicted and the true motion on held-out trajectories. The resulting calibration result defines a set around new predictions that is probabilistically valid and thereby operational, i.e., suitable for motion planning. This has  been explored via model predictive control \cite{lindemann2022safe,dixit2023adaptive,shin2025egocentric}, reinforcement learning \cite{yao2024sonic,sun2023conformal}, control barrier functions \cite{zhang2025conformal,hsu2025statistical}, sampling-based search \cite{sheng2024safe}, and LLM planning \cite{doula2025safepath,wang2024probabilistically}, see \cite{lindemann2024formal} for a survey.

\textbf{Interaction breaks conformal prediction guarantees.} CP presumes that the data at test time  is exchangeable with the data at calibration time.\footnote{We note that independent and identically distributed data is automatically also exchangeable.} Indeed, simply using non-exchangeable calibration data can break conformal prediction guarantees, see e.g., \cite{tibshirani2019conformal,zhao2024robust}. In interactive settings the premise of exchangeability is violated due to a circular dependency induced by interaction-driven distribution shifts: changing the control policy changes the data distribution at test time, while updating the held-out calibration data changes the control policy. Many CP extensions aim to mitigate this issue, e.g., via adaptive CP \cite{gibbs2021adaptive, zaffran2022adaptive}, CP for covariate shifts \cite{tibshirani2019conformal, yang2024doubly, alijani2025wqlcp}, or robust CP \cite{cauchois2024robust, aolaritei2025conformal,GendlerEtAl2022ICLR}. However, these extensions either require continual recalibration or additional estimators, yield undersirable time-averaged or worst-case guarantees, or do not close the loop with a planner whose policy changes drive the distribution shift. Our interactive setting demands a mechanism that (i) directly addresses policy-induced distribution shifts, and (ii) closes the loop with the planner.

\textbf{Positioning and differences to prior work on planning with CP.} Prior work embeds calibrated CP sets into the planner by either assuming exchangeability \cite{lindemann2022safe,tonkens2023scalable,YU2026112616} or by using adaptive CP \cite{dixit2023adaptive,yao2024sonic,shin2025egocentric,sheng2024safe}.  Conceptually closest to our work are \cite{wang2025learning} and \cite{huang2025interaction}, which also address agent interactions with CP; \cite{wang2025learning} discretizes the joint agent space and uses CP to capture state-dependent interaction. However, data requirements grow exponentially with the state dimension without addressing the aforementioned core circular dependency. The work \cite{huang2025interaction} proposes iterative policy updates similar to our method, but without providing episodic safety guarantees as we enable via adversarially robust CP. Additionally, \cite{huang2025interaction} provide safety guarantees only after convergence, which is generally not guaranteed.  In contrast, we provide: (i) episodic safety guarantees by {analytically bounding} the distribution shift,  and (ii)  explicit conditions on episodic convergence via a contraction analysis.

\begin{fullblock}
\section{Problem Formulation}
\label{sec:pf}
\end{fullblock}
\begin{shortblock}
\sectionshort{Problem Formulation}
\label{sec:pf}
\end{shortblock}

We consider the state of an ego agent $x_t\in\R^{d_x}$, its control input $u_t\in\R^{d_u}$, and the state of uncontrollable agents' $y_t\in\R^{d_y}$ at discrete times $t=0,\ldots,T$. Their coupled dynamics are
\begin{fullblock}
\begin{subequations}\label{eq:pf:dyn}
\begin{align}
x_{t+1}&=f_X(x_t,u_t)\\
y_{t+1}&=f_Y(y_t,x_t,u_t,\nu_t)
\end{align}
\end{subequations}
\end{fullblock}
\begin{shortblock}
\subeqgentleon
\begin{equation}\label{eq:pf:dyn}
x_{t+1}=f_X(x_t,u_t),\qquad y_{t+1}=f_Y(y_t,x_t,u_t,\nu_t), \qquad t = 0,...,T-1.
\end{equation}
\eqend
\end{shortblock}
with exogenous noise $\nu_t$, which is a random variable that models the uncontrollable agents' intentions. The mapping $(x_t,u_t)\mapsto f_Y(\cdot,x_t,u_t,\cdot)$ captures agent interaction. Here, safety is encoded by a  safety function $H(x_{0:T},y_{0:T})$ and defined over the entire trajectories $x_{0:T}$ and $y_{0:T}$. The system is considered safe if $H(x_{0:T},y_{0:T}) \le 0$. We assume that the system \eqref{eq:pf:dyn} is Lipschitz continuous.
\begin{assumption}
\label{ass:lipschitz_dynamics}
There exist Lipschitz constants $L_{Xx}, L_{Xu}, L_{Yy}, L_{Yx},L_{Yu}\ge 0$ such that
\iffullcontent
\begin{align*}
 \|f_{X}(x,u)-f_{X}(x^{\prime},u^{\prime})\|_2 &\le L_{Xx}\|x-x^{\prime}\|_2+L_{Xu}\|u-u^{\prime}\|_2 \\
 \|f_{Y}(y,x,u,\nu)-f_{Y}(y^{\prime},x^{\prime},u^{\prime},\nu)\|_2 &\le L_{Yy}\|y-y^{\prime}\|_2+L_{Yx}\|x-x^{\prime}\|_2+L_{Yu}\|u-u^{\prime}\|_2
\end{align*}
\else
$\|f_{X}(x,u)-f_{X}(x^{\prime},u^{\prime})\|_2 \le L_{Xx}\|x-x^{\prime}\|_2+L_{Xu}\|u-u^{\prime}\|_2$ and $\|f_{Y}(y,x,u,\nu)-f_{Y}(y^{\prime},x^{\prime},u^{\prime},\nu)\|_2 \le L_{Yy}\|y-y^{\prime}\|_2+L_{Yx}\|x-x^{\prime}\|_2+L_{Yu}\|u-u^{\prime}\|_2$
\fi
for all $x,x'\in \R^{d_x}$, $u,u'\in\R^{d_u}$, and  $y,y'\in\R^{d_y}$.
\end{assumption}

\textbf{Running Example. }We use a running example of a {self-driving vehicle}  and a {pedestrian} with states $x_t$ and $y_t$ (e.g., position, velocity) and controls $u_t$ (e.g., acceleration, steering).  The dynamics $f_Y(y_t,x_t,u_t,\nu_t)$ capture interaction as the pedestrian's future position $y_{t+1}$ depends on their own state $y_t$, the vehicle's states $x_t$ and actions $u_t$, and their own intentions $\nu_t$.
 A common way to define $H$ is as the maximum violation of a separation constraint over time, e.g.,
$
H(x_{0:T},y_{0:T})\ :=\ \max_{0\le t\le T}\big\{\, b_t - c(x_t,y_t)\,\big\}\ \le\ 0$)
where $c(x_t,y_t)$ represents a separation measure such as the distance between the vehicle's position and the pedestrian's position and $b_t \ge 0$ is a safety margin. 

\begin{fullblock}
\subsection{The Idealized Chance-Constrained Planning Problem}\label{sec:idealized}
\end{fullblock}
\begin{shortblock}
\subsectionshort{The Idealized Chance-Constrained Planning Problem}\label{sec:idealized}
\end{shortblock}
Our ideal (unfortunately unattainable) objective is to find a control policy $u_{0:T-1}$ that minimizes a performance cost $J(x_{0:T},u_{0:T-1})$ while satisfying a probabilistic safety guarantee:
\begin{equation}
\min_{u_{0:T-1}}\,J(x_{0:T},u_{0:T-1})
\quad\text{s.t.}\quad
\begin{cases}
\mathbb{P}\!\left\{H(x_{0:T},y_{0:T})\le 0\right\}\ge 1-\alpha \\
\text{Dynamics in equation } \eqref{eq:pf:dyn}
\end{cases}
\label{eq:pf:chance}
\end{equation}
The problem \eqref{eq:pf:chance} is generally intractable. First, the uncontrollable agents' dynamics $f_Y$ and the noise distribution of $\nu_t$ are typically unknown, e.g., we cannot precisely model interaction and intentions, requiring a {model- and distribution-free approach}. Second, the {high-dimensionality of the problem} induces computational complexity: even if $f_Y$ and $\nu_t$ were known, the chance constraint \eqref{eq:pf:chance} would require solving a complex, high-dimensional integral over the  distribution of $y_{0:T}$. 

While the first two challenges have been addressed in the literature, the third and most difficult challenge arises due to \textbf{interaction-induced distribution shifts}. Effectively, the dynamics $f_Y$ and the distribution of $y_{0:T}$ are {policy-dependent}, meaning they change when the ego-agent's states $x_t$ and control inputs $u_t$ change, e.g., if the car accelerates aggressively, the pedestrian's decision to cross an intersection will change. Robust control, which enforces safety for all permissible trajectories, is overly conservative. Statistical uncertainty quantification techniques -- such as in \cite{lindemann2022safe} which use conformal prediction (CP) -- fail as the policy-driven distribution shift breaks exchangeability assumptions needed in CP, e.g., data of a pedestrian crossing in front of a {slow} car is not {exchangeable} with the new scenario where the car {accelerates}. 

\begin{fullblock}
\subsection{Non-Interactive Planning with Distribution-Free Certificates via Conformal Prediction}\label{subsec:noninteractive}
\end{fullblock}
\begin{shortblock}
\subsectionshort{Non-Interactive Planning with Distribution-Free Certificates via Conformal Prediction}\label{subsec:noninteractive}
\end{shortblock}
We now summarize existing work on non-interactive planning -- primarily following \cite{lindemann2022safe} -- in which case $f_Y$ does not depend on $x_t$ and $u_t$, i.e., not addressing the third of the aforementioned challenges. These techniques will serve as a starting point for our proposed method.

First, a pre-designed offline predictor is used to produce a single, nominal estimate $\hat y_{0:T}=(\hat y_0,\ldots,\hat y_T)$ of the environment trajectory $y_{0:T}=(y_0,\ldots,y_T)$. To assess the accuracy of this estimate, we define the nonconformity score and the induced trajectory tube, respectively, as
\begin{fullblock}
\begin{subequations}\label{eq:score-tube}
\begin{align}
s(\hat y_{0:T},y_{0:T})\ &:=\ \max_{0\le t\le T}\|\hat y_t-y_t\|_2,\\
\mathcal{C}_r(\hat y_{0:T})\ &:=\ \big\{\,y_{0:T}:\ s(\hat y_{0:T},y_{0:T})\le r\,\big\}.
\end{align}
\end{subequations}
\end{fullblock}
\begin{shortblock}
\subeqtighton
\begin{equation}\label{eq:score-tube}
s(\hat y_{0:T},y_{0:T}) \,:=\, \max_{0\le t\le T}\|\hat y_t-y_t\|_2,\qquad
\mathcal{C}_r(\hat y_{0:T}) \,:=\, \big\{\,y_{0:T}:\ s(\hat y_{0:T},y_{0:T})\le r\,\big\}.
\end{equation}
\eqend
\end{shortblock}
The threshold $r\ge 0$ is computed using CP to obtain probabilistic guarantees on the correctness of the set $\mathcal{C}_r(\hat y_{0:T})$. The main idea is simple: we use a set of $N$ held-out calibration trajectories $\{y_{0:T}^{(i)}\}_{i=1}^N$ generated by the non-interactive dynamics $y_{t+1}^{(i)}=f_Y(y_t^{(i)},\nu_t^{(i)})$. We then compute $r$ as the $(1-\alpha)$-quantile of the held-out nonconformity scores, denoted by $q_{1-\alpha}(\{s(\hat y_{0:T}, y_{0:T}^{(i)})\}_{i=1}^N\cup \{\infty\})$.\footnote{The split-conformal quantile $q_{1-\alpha}$ is computed from the set of nonconformity scores $\{s(\hat y_{0:T}, y_{0:T}^{(i)})\}_{i=1}^N\cup \{\infty\}$. Let $s_{(k)}$ be the $k$-th smallest nonconformity score (the $k$-th order statistic). Then, the quantile is $q_{1-\alpha} = s_{(k)}$ with $k = \lceil (N+1)(1-\alpha) \rceil$. }
Since the test trajectory $y_{0:T}$ and the held-out trajectories $\{y_{0:T}^{(i)}\}_{i=1}^N$ are exchangeable, we have that
\articleonly{
  \begin{align*}
\mathbb{P}_{N+1}\big\{\,y_{0:T}\in\mathcal{C}_{q_{1-\alpha}}(\hat y_{0:T})\,\big\}\ \ge\ 1-\alpha,
\end{align*}}
\ldconly{$\mathbb{P}_{N+1}\big\{\,y_{0:T}\in\mathcal{C}_{q_{1-\alpha}}(\hat y_{0:T})\,\big\}\ \ge\ 1-\alpha$,}
where now $\mathbb{P}_{N+1}\{\cdot\}$  is a product probability measure that captures the randomness in test $y_{0:T}$ and calibration data $\{y_{0:T}^{(i)}\}_{i=1}^N$, and as such can approximate the chance constraint in equation \eqref{eq:pf:chance}. This guarantee motivates the formulation of the following robust planning problem:
\begin{fullblock}
\begin{subequations}
\begin{align}
\min_{u_{0:T-1}} \quad & J(x_{0:T},u_{0:T-1})\\[-0.15em]
\text{s.t.}\quad & x_{t+1}=f_X(x_t,u_t),\quad t=0,\ldots,T-1,\\[-0.15em]
& H(x_{0:T},\zeta)\le 0\quad \forall\,\zeta\in \mathcal{C}_{q_{1-\alpha}}(\hat y_{0:T}),
\label{eq:robust-noninteractive}
\end{align}
\end{subequations}
\end{fullblock}
\begin{shortblock}
\subeqtighton
\begin{subequations}
\begin{align}
\min_{u_{0:T-1}} \quad & J(x_{0:T},u_{0:T-1})\\[-0.15em]
\text{s.t.}\quad & x_{t+1}=f_X(x_t,u_t),\quad t=0,\ldots,T-1,\\[0.15em]
& H(x_{0:T},\zeta)\le 0\quad \forall\,\zeta\in \mathcal{C}_{q_{1-\alpha}}(\hat y_{0:T}),
\label{eq:robust-noninteractive}
\end{align}
\end{subequations}
\eqend
\end{shortblock}
which ensures $\mathbb{P}_{N+1}\{H(x_{0:T},y_{0:T})\le 0\}\ge 1-\alpha$ in the non-interactive case whenever 
\eqref{eq:robust-noninteractive} is feasible \cite{lindemann2022safe}.
For instance, for the collision avoidance-type safety constraints $H(x_{0:T},y_{0:T})=\max_{0\le t\le T}\{\, b_t - c(x_t,y_t)\}$,
the constraint \eqref{eq:robust-noninteractive}
reduces to a step-wise tightening of the form $c(x_t,\hat y_t) \ge b_t +\ q_{1-\alpha}$ for all times $t=0,\ldots,T$, which can easily be implemented.
\ifarticleclass\else\iffullcontent
  \begin{wrapfigure}[16]{r}{0.5\textwidth}
    \centering
    \includegraphics[width=\linewidth]{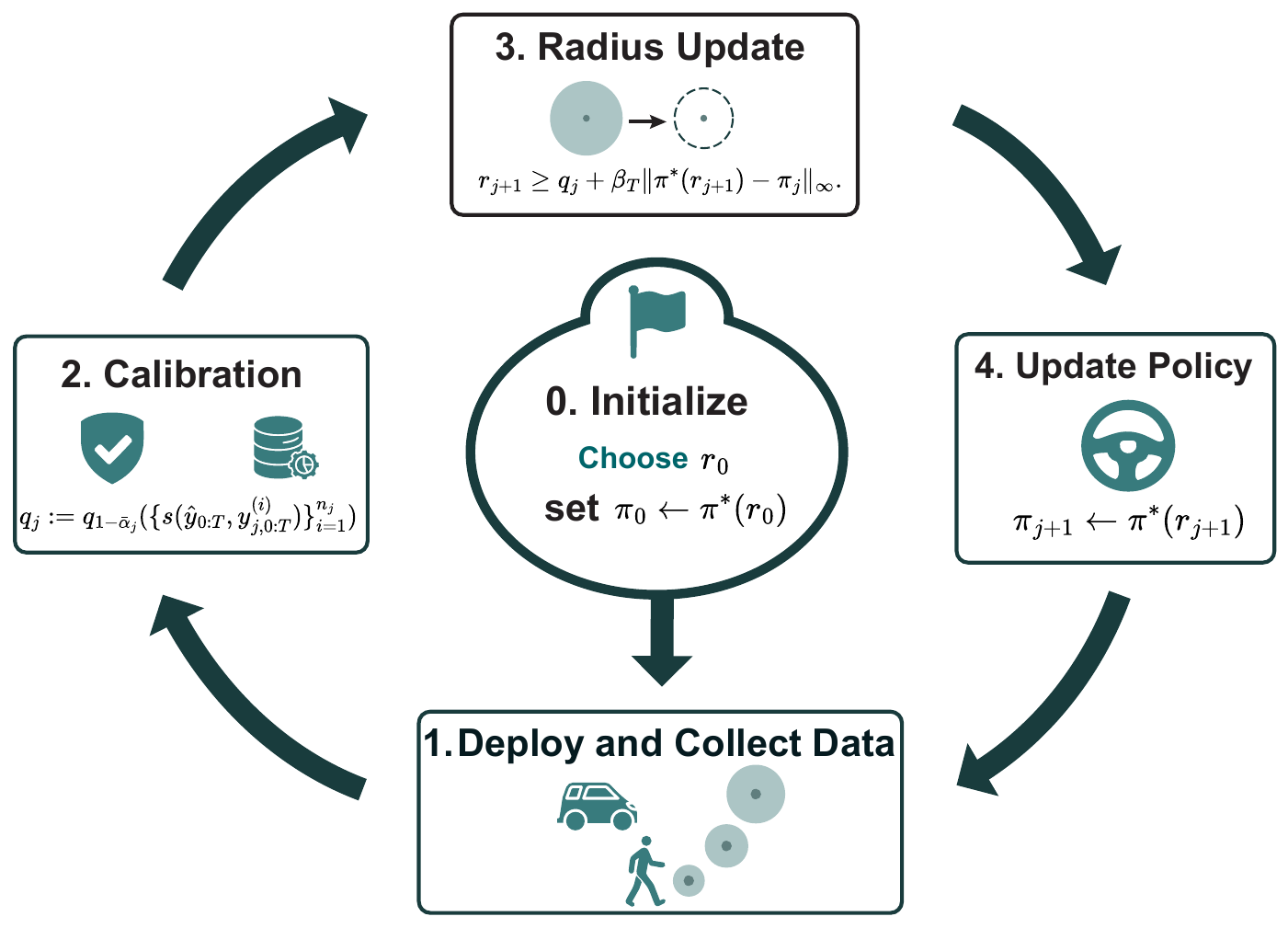}
    \caption{Our iterative algorithm: (0) initialize with a safe (yet conservative) radius $r_0$ and compute policy $\pi^\star(r_0)$; (1) deploy $\pi^\star(r_j)$ and collect data; (2) calibrate $q_j$; (3) compute new radius $r_{j+1}$ from $q_j$; (4) update policy to $\pi^\star(r_{j+1})$; (5) repeat.}
    \label{fig:overview}
  \end{wrapfigure}
\fi\fi

However, in the interactive case where the coupled agent dynamics are $y_{t+1}=f_Y(y_t,x_t,u_t,\nu_t)$ instead of $y_{t+1}=f_Y(y_t,\nu_t)$, a circular dependency -- or "chicken-and-egg" problem -- emerges. Changing the control inputs $u_t$ and ego-agent trajectory $x_t$ changes the distribution of $y_{0:T}$, violating the exchangeability assumption with the held-out trajectories $\{y_{0:T}^{(i)}\}_{i=1}^N$.
On the other hand, updating the held-out trajectories $\{y_{0:T}^{(i)}\}_{i=1}^N$ via $y_{t+1}^{(i)}=f_Y(y_t^{(i)},x_t,u_t,\nu_t^{(i)})$, changes the quantile $q_{1-\alpha}$ and thereby $u_t$ and $x_t$.
\articleonly{\\}
\ifarticleclass
  \begin{wrapfigure}[17]{r}{0.5\textwidth}
    \centering
    \includegraphics[width=\linewidth]{figure/Algorithm_new_no.pdf}
    \caption{Our iterative algorithm: (0) initialize with a safe (yet conservative) radius $r_0$ and compute policy $\pi^\star(r_0)$; (1) deploy $\pi^\star(r_j)$ and collect data; (2) calibrate $q_j$; (3) compute new radius $r_{j+1}$ from $q_j$; (4) update policy to $\pi^\star(r_{j+1})$; (5) repeat.}
    \label{fig:overview}
  \end{wrapfigure}
\fi
To address this issue, we propose an episodic framework in which we iteratively compute new held-out trajectories while updating the control inputs.

\textbf{Running Example (cont.).}
In our running  example this non-interactive approach would first predict a {nominal path} $\hat y_{0:T}$ for the pedestrian (e.g., walking straight on the sidewalk). 
\ifarticleclass\else\ifshortcontent
    \begin{wrapfigure}[14]{r}{0.4\textwidth}
    \centering
    \includegraphics[width=\linewidth]{figure/Algorithm_new_no.pdf}
    \caption{Our iterative algorithm: (0) initialize with a safe (yet conservative) radius $r_0$ and compute policy $\pi^\star(r_0)$; (1) deploy $\pi^\star(r_j)$ and collect data; (2) calibrate $q_j$; (3) compute new radius $r_{j+1}$ from $q_j$; (4) update policy to $\pi^\star(r_{j+1})$; (5) repeat.}
    \label{fig:overview}
  \end{wrapfigure}
\fi\fi
The {trajectory tube} $\mathcal{C}_r$ represents a tube of radius $r = q_{1-\alpha}$ around this nominal path that guarantees the {actual} pedestrian path $y_{0:T}$ will be inside 
this tube with $1-\alpha$ probability, {assuming the car's actions do not influence the pedestrian}. The {"chicken-and-egg" problem} arises in the interactive case: the car's plan $u_t$ depends on the size of the pedestrian's uncertainty set (the radius $q_{1-\alpha}$), but the pedestrian's actual behavior, which determines the size of that tube, depends on the car's plan $u_t$.
\articleonly{\vspace{.2cm}}
\begin{fullblock}
\subsection{Episodic Problem Formulation and Design Goals}
\label{subsec:episodic-formulation-goals}
\end{fullblock}
\begin{shortblock}
\subsectionshort{Episodic Problem Formulation and Design Goals}
\label{subsec:episodic-formulation-goals}
\end{shortblock}
To address the aforementioned circular dependency, we reframe the problem into an iterative, episodic framework which we explain below and summarize in Figure \ref{fig:overview}. Planning proceeds in episodes $j=0,1,2,\ldots$. We use a fixed nominal predictor $\hat y_{0:T} = (\hat y_0, \ldots, \hat y_T)$ to anchor the geometry of the uncertainty. Updating the predictor episodically is possible with minimal modifications, but omitted here for simplicity. At each episode $j$, we solve the following robust optimization problem, parameterized by an uncertainty radius $r_j \ge 0$:
\subeqtighton
\begin{equation}
\boxed{
\begin{aligned}
\textbf{P}[j;\,r_j]\qquad
\min_{u_{j,0:T-1}}\quad & J(x_{j,0:T},u_{j,0:T-1})\\[-0.2em]
\text{s.t.}\quad & x_{j,t+1}=f_X(x_{j,t},u_{j,t}),\quad t=0,\ldots,T-1,\\[-0.2em]
& H(x_{j,0:T},\zeta)\le 0\quad \forall\,\zeta\in \mathcal{C}_{r_j}(\hat y_{0:T}).
\end{aligned}}
\label{eq:episodic}
\end{equation}
\subeqend
Solving \eqref{eq:episodic} yields the policy $\pi_j:=u_{j,0:T-1}$ and the resulting optimal cost $J_j := J(x_{j,0:T},u_{j,0:T-1})$. This policy is then executed on the physical system $n_j$ times, which produces $n_j$ i.i.d. rollouts of the environment's trajectories $\left\{ y_{j, 0:T}^{(i)} \right\}_{i=1}^{n_j} \sim \mathcal{D}(\pi_j)$ under the ego-agent's policy, as governed by \eqref{eq:pf:dyn} and denoted by $\mathcal{D}(\pi_j)$.  
The design of this iterative process is driven by three high-level objectives:
\subeqtighton
\begin{align}
&\textbf{Per-episode safety:}
&& \mathbb{P}\left\{\,H(x_{j,0:T},y_{j,0:T})\le 0\,\right\} \ \ge\ 1-\alpha
\quad \text{for all } j,
\label{eq:goal-safety}\\[-0.25em]
&\textbf{Performance improvement:}
&& J_{j+1} \ \le\ J_j \ -\ \Delta_j
\quad\text{with}\ \Delta_j\ge 0,
\label{eq:goal-performance}\\[-0.25em]
&\textbf{Stability of uncertainty:}
&& r_{j+1} \ \approx\ r_j \quad \text{eventually.}
\label{eq:goal-convergence}
\end{align}
\subeqend
The key challenge is to determine how to update $r_j$ to $r_{j+1}$. 
This update must leverage the new data $\{ y_{j, 0:T}^{(i)} \}$ to reduce conservatism, while also guaranteeing that the {next} policy $\pi_{j+1}$ will be safe.
Because $r_j$ is implicitly {tied} to the policy $\pi_j$ through interaction, the policy-induced distribution shift from $j$ to $j+1$ must be accounted for to maintain \eqref{eq:goal-safety}. We make the following assumption.
\begin{fullblock}
\begin{assumption}\label{ass1}
At iteration $j=0$, we know $r_0$ s.t.\ $\pi_0$ satisfies $\mathbb{P}\{H(x_{0,0:T},y_{0,0:T})\leq 0\}\geq 1-\alpha$.
\end{assumption}
\end{fullblock}
\begin{shortblock}
\begin{assumptionshort}\label{ass1}
At iteration $j=0$, we know $r_0$ s.t.\ $\pi_0$ satisfies $\mathbb{P}\{H(x_{0,0:T},y_{0,0:T})\leq 0\}\geq 1-\alpha$.
\end{assumptionshort}
\end{shortblock}

 This assumption is natural to our problem formulation. In practice, we can typically find this constant by selecting a sufficiently large $r_0$ that is valid for any permissible ego agent policy.
 
\textbf{Running Example (cont.). } In episode $j=0$, assume that we have selected a valid {uncertainty radius $r_0$} for the ego agent's policy $\pi_0$, e.g., that the pedestrian will deviate at most $r_0=2$ meters from its predicted path under $\pi_0$ with probability of at least $1-\alpha$. We then observe realized pedestrian trajectories $\{ y_{0, 0:T}^{(i)} \}$ under $\pi_0$,  where we may now notice that the pedestrian may deviate less than $r_0$ meters from the predicted path. The key challenge is hence to iteratively update the next radii $r_{1}, r_2, \hdots$ and compute the next policies $\pi_{1},\pi_2$ while adhering to the objectives in \eqref{eq:goal-safety}-\eqref{eq:goal-convergence}.
\begin{fullblock}
\section{An Iterative Planning Framework for Interactive Environments}
\label{sec:iterative-framework}
\end{fullblock}
\begin{shortblock}
\sectionshort{An Iterative Planning Framework for Interactive Environments}
\label{sec:iterative-framework}
\end{shortblock}

In our interactive setting, the exchangeability assumption underlying CP is violated at every policy update.  When we solve \eqref{eq:episodic} and update the policy from $\pi_{j}$ to $\pi_{j+1}$, the realized trajectory $y_{j+1,0:T}$ is drawn from a different distribution. This re-introduces the "chicken-and-egg" problem.
\begin{fullblock}
\subsection{Adversarial Conformal Prediction for Policy-Induced Shifts}
\label{sec:acp_policy_shift}
\end{fullblock}
\begin{shortblock}
\subsectionshort{Adversarial Conformal Prediction for Policy-Induced Shifts}
\label{sec:acp_policy_shift}
\end{shortblock}
To carry safety certificates across such policy-induced shifts, our approach uses adversarial conformal prediction (ACP), which provides guarantees under bounded perturbations \cite{GendlerEtAl2022ICLR}. Let $z\sim\mathcal{D}$ be a random variable and  $\hat z$ be an estimate of $z$. Given a nonconformity score $s(\hat z, z)$ and a \textcolor{black}{perturbation radius} $\rho \ge 0$, ACP provides coverage for $s\big(\hat z, z + \Delta\big)$ where the random variable $z$ may be perturbed by any $\Delta$ with $\|\Delta\| \le \rho$. The next result follows  by combining ACP with conditional CP \cite{vovk2012conditional,duchi2025sample}, as summarized in \cite[Lemma 2]{lindemann2024formal}.

\begin{fullblock}
\begin{lemma}[\cite{GendlerEtAl2022ICLR} and \cite{vovk2012conditional,duchi2025sample}]\label{lemma_ccp}
    Let $z,z^{(1)},\hdots, z^{(N)}{\sim}\mathcal{D}$ be exchangeable random variables, $\rho>0$ be a perturbation radius, $\alpha,\delta\in (0,1)$ be failure probabilities, and
    \articleonly{\begin{align*}
        r := q_{1-\bar{\alpha}}\big(\{s(\hat z, z^{(i)})\}_{i=1}^N\big) := s_{(\lceil N(1-\bar{\alpha})\rceil)}
    \end{align*}}
    \ldconly{$r := q_{1-\bar{\alpha}}\big(\{s(\hat z, z^{(i)})\}_{i=1}^N\big) := s_{(\lceil N(1-\bar{\alpha})\rceil)}$}
    be the empirical $(1-\bar{\alpha})$-quantile of the unperturbed nonconformity scores $\{s(\hat z, z^{(i)})\}_{i=1}^N$, where $s_{(1)}\le\cdots\le s_{(N)}$ denote the associated order statistics and $\bar{\alpha}:= \alpha-\sqrt{\ln(1/\delta)/2N}$.\footnote{The empirical quantile $s_{(\lceil N(1-\bar{\alpha})\rceil)}$ here differs slightly from the empirical quantile $s_{(\lceil (N+1)(1-\alpha)\rceil)}$ from Section~\ref{sec:pf}.} Assume there exists a constant $M\ge 0$ such that for all $z$ it holds that
    $
    \sup_{\|\Delta\|\le \rho} s(\hat z, z+\Delta)\ \le\ s(\hat z, z)+M
    $.\footnote{If the nonconformity score $s$ is Lipschitz continuous in its second argument with constant $L$, then $M= L\rho$.} Then, we have that
\begin{align}
    \mathbb{P}_N\big\{\,\mathbb{P}\big\{\, \forall\,\Delta\ \text{with }\|\Delta\|\le\rho:\ z+\Delta \in \mathcal{C}^{\text{adv}}_{r}(\hat z;\rho)\big\} \ge 1-\alpha\big\}\ge 1-\delta.
    \label{eq:rob_pred}
\end{align}
where
$
\mathcal{C}^{\text{adv}}_{r}(\hat z;\rho)
:= \big\{ z : s(\hat z, z) \le r+ M \big\}
$ is a robustified prediction set.
\end{lemma}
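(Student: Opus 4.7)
The plan is to decompose the statement into two independent ingredients: (i) a training-conditional (PAC-style) CP bound that converts a marginal $(1-\bar{\alpha})$-quantile construction into a high-probability-over-calibration statement about the true conditional coverage, and (ii) a deterministic translation from ordinary coverage to adversarially-inflated coverage using the hypothesis $\sup_{\|\Delta\|\le \rho} s(\hat z, z+\Delta) \le s(\hat z, z)+M$. Step (i) reproduces the Vovk-style conditional CP guarantee with the specific inflation $\sqrt{\ln(1/\delta)/(2N)}$; step (ii) cleanly absorbs the ACP perturbation into the nonconformity level.

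For step (i), let $F$ denote the CDF of $s(\hat z, z)$ under $\mathcal{D}$ and let $q^{\star} := F^{-1}(1-\alpha)$ be the true $(1-\alpha)$-quantile. I would upper-bound the probability of the bad event $\{F(r) < 1-\alpha\}$. Monotonicity of $F$ implies that this event forces $r < q^{\star}$; since $r$ is the empirical $(1-\bar{\alpha})$-quantile of $N$ i.i.d.\ scores, $r<q^{\star}$ in turn implies $\hat F_N(q^{\star}) \ge 1-\bar{\alpha}$. Now apply Hoeffding's inequality at the \emph{fixed} threshold $q^{\star}$ to the Bernoulli indicators $\mathbb{1}[s(\hat z, z^{(i)}) \le q^{\star}]$, whose mean equals $F(q^{\star}) = 1-\alpha$. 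This yields $\mathbb{P}\{\hat F_N(q^{\star}) \ge 1-\alpha + \varepsilon\} \le \exp(-2N\varepsilon^2)$, and choosing $\varepsilon = \sqrt{\ln(1/\delta)/(2N)}$ together with the definition $\bar{\alpha} = \alpha - \varepsilon$ gives $\mathbb{P}\{F(r) < 1-\alpha\} \le \delta$. Equivalently, with probability at least $1-\delta$ over the calibration sample, $\mathbb{P}\{s(\hat z, z) \le r\} \ge 1-\alpha$.

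For step (ii), the hypothesis gives a purely deterministic set inclusion: on the event $\{s(\hat z, z) \le r\}$, every $\Delta$ with $\|\Delta\| \le \rho$ satisfies $s(\hat z, z+\Delta) \le s(\hat z, z) + M \le r + M$, so $z+\Delta \in \mathcal{C}^{\text{adv}}_{r}(\hat z; \rho)$ by definition of the inflated set. Hence $\{s(\hat z, z) \le r\} \subseteq \{\forall\,\Delta,\ \|\Delta\|\le \rho:\ z+\Delta \in \mathcal{C}^{\text{adv}}_{r}(\hat z; \rho)\}$, and the conditional coverage of the latter event inherits the $\ge 1-\alpha$ lower bound established in step (i). Combining the two steps delivers \eqref{eq:rob_pred}.

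The main obstacle is the book-keeping in step (i): the empirical quantile $r$ depends on the same calibration data over which we wish to concentrate, so naively concentrating $\hat F_N(r)$ to $F(r)$ would be circular. The fix is to pivot through the deterministic level $q^{\star}$ and apply Hoeffding at a \emph{fixed} threshold rather than at the random $r$, using monotonicity of $F$ to translate the event $F(r) < 1-\alpha$ into the fixed-threshold event $\hat F_N(q^{\star}) \ge 1-\bar{\alpha}$. After this pivot, the adversarial translation in step (ii) is a one-line consequence of the assumed modulus-of-continuity bound.
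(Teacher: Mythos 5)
The paper does not supply its own proof of Lemma~\ref{lemma_ccp}; it cites \cite{GendlerEtAl2022ICLR} and \cite{vovk2012conditional} and points to a summary in \cite{lindemann2024formal}. Against that backdrop, your reconstruction is the right argument: step~(i) is exactly Vovk's training-conditional bound (pivot the bad event $\{F(r)<1-\alpha\}$ through the \emph{deterministic} population quantile $q^\star$, reduce it to $\{\hat F_N(q^\star)\ge 1-\bar\alpha\}$, and apply Hoeffding at the fixed threshold), and step~(ii) is the one-line Gendler-style absorption of the adversarial budget $M$ into the inflated set. The implication chain $r<q^\star \Rightarrow \hat F_N(q^\star)\ge k/N \ge (N+1)(1-\bar\alpha)/N \ge 1-\bar\alpha$ with $k=\lceil(N+1)(1-\bar\alpha)\rceil$ is correct, and combining the two steps gives \eqref{eq:rob_pred}.

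Two small points worth flagging. First, the Hoeffding step implicitly uses $F(q^\star)=1-\alpha$; this is only exact if the score CDF is continuous (otherwise $F(q^\star)$ may exceed $1-\alpha$ at a jump, and the bound $1-\bar\alpha = 1-\alpha+\varepsilon$ need not lie above $F(q^\star)+\varepsilon'$ for any useful $\varepsilon'$). The standard fix---adding a small amount of independent tie-breaking noise or assuming an absolutely continuous score---should be stated explicitly, since the lemma as written makes no such assumption. Second, the lemma statement says ``exchangeable'' but your argument (and Vovk's training-conditional result) actually requires i.i.d.\ calibration data: Hoeffding needs independence, not mere exchangeability. The paper's own notation $\mathbb{P}_N$ as a product measure implicitly assumes i.i.d., so you have interpreted the hypothesis correctly, but it is worth noting that the ``exchangeable'' wording is strictly weaker than what the proof uses. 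Neither issue affects the overall correctness of the decomposition; both are inherited from the cited sources.
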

\end{fullblock}

\begin{shortblock}
\begin{lemmashort}[\cite{GendlerEtAl2022ICLR} and \cite{vovk2012conditional,duchi2025sample}]\label{lemma_ccp}
    Let $z,z^{(1)},\hdots, z^{(N)}{\sim}\mathcal{D}$ be exchangeable random variables, $\rho>0$ be a perturbation radius, $\alpha,\delta\in (0,1)$ be failure probabilities, and
    $r := q_{1-\bar{\alpha}}\big(\{s(\hat z, z^{(i)})\}_{i=1}^N\big) := s_{(\lceil N(1-\bar{\alpha})\rceil)}$
    be the empirical $(1-\bar{\alpha})$-quantile of the unperturbed nonconformity scores $\{s(\hat z, z^{(i)})\}_{i=1}^N$, where $s_{(1)}\le\cdots\le s_{(N)}$ denote the associated order statistics and $\bar{\alpha}:= \alpha-\sqrt{\ln(1/\delta)/2N}$.\footnote{The empirical quantile $s_{(\lceil N(1-\bar{\alpha})\rceil)}$ here differs slightly from the empirical quantile $s_{(\lceil (N+1)(1-\alpha)\rceil)}$ from Section~\ref{sec:pf}.} Assume there exists a constant $M\ge 0$ such that for all $z$ it holds that
    $
    \sup_{\|\Delta\|\le \rho} s(\hat z, z+\Delta)\ \le\ s(\hat z, z)+M
    $.\footnote{If the nonconformity score $s$ is Lipschitz continuous in its second argument with constant $L$, then $M= L\rho$.} Then, we have that
\subeqgentleon
\begin{align}
    \mathbb{P}_N\big\{\,\mathbb{P}\big\{\, \forall\,\Delta\ \text{with }\|\Delta\|\le\rho:\ z+\Delta \in \mathcal{C}^{\text{adv}}_{r}(\hat z;\rho)\big\} \ge 1-\alpha\big\}\ge 1-\delta.
    \label{eq:rob_pred}
\end{align}
\eqend
where
$
\mathcal{C}^{\text{adv}}_{r}(\hat z;\rho)
:= \big\{ z : s(\hat z, z) \le r+ M \big\}
$ is a robustified prediction set.
\end{lemmashort}
\end{shortblock}

We note that the guarantees in \eqref{eq:rob_pred} are conditional in the sense that $\mathbb{P}_N\{\cdot\}$ is a product probability measure that captures randomness in the calibration data $\{z^{(i)}\}_{i=1}^N$, so that the inner probability measure $\mathbb{P}\{\cdot\}$ captures randomness over test data $z$ that holds with probability no less than $1-\delta$.
To address the "chicken-and-egg" problem, we will apply robustification via ACP iteratively. At each episode $j$, we perform a recalibration step using the new data $\{ y_{j, 0:T}^{(i)} \}_{i=1}^{n_j} \sim \mathcal{D}(\pi_j)$ to compute the empirical quantile
\fulleq{
  q_j := q_{1-\bar{\alpha}_j}\!\big(\{s(\hat{y}_{0:T},\, y_{j,0:T}^{(i)})\}_{i=1}^{n_j}\big),
  \qquad
  \bar{\alpha}_j := \alpha-\sqrt{\tfrac{\ln(1/\delta_j)}{2n_j}},
  \label{eq:episodic_quantile}
}{$q_j := q_{1-\bar{\alpha}_j}\!\big(\{s(\hat{y}_{0:T},\, y_{j,0:T}^{(i)})\}_{i=1}^{n_j}\big)$ where $\delta_j\in(0,1)$ and $\bar{\alpha}_j := \alpha-\sqrt{\ln(1/\delta_j)/(2n_j)}$.} By Lemma~\ref{lemma_ccp}, this empirical quantile $q_j$ satisfies
$
\mathbb{P}_{n_j}\big\{\mathbb{P}\big\{
y_{j,0:T} \in \mathcal{C}_{q_j}(\hat y_{0:T})
\big\}
\ge 1-\alpha\big\}\ge 1-\delta_j,
$
where $y_{j,0:T} \sim \mathcal{D}(\pi_j)$ and $\mathcal{C}_{q_j}(\hat y_{0:T}) := \{ y_{0:T} : s(\hat y_{0:T}, y_{0:T}) \le q_j \}$. Thus, $q_j$ is a valid safety radius for policy $\pi_j$. To maintain safety for the next policy $\pi_{j+1}$, we must now account for policy-induced distribution shifts caused by the change from $\pi_j$ to $\pi_{j+1}$. We interpret and treat the policy update from $\pi_j$ to $\pi_{j+1}$ as the adversarial perturbation term $\rho_j$.

\textbf{High-level intuition.} Let us provide some intuition before we present the iterative algorithm and a detailed analysis in Sections~\ref{sec:iterative_planning} and \ref{sec:analysis}, respectively. Under Assumption~\ref{ass:lipschitz_dynamics}, we can bound the change in the uncontrollable agents trajectory by a constant $\beta_T\ge 0$ such that
\articleonly{\begin{align*}
\max_{0 \le t \le T}
\big\| y_t(\pi_{j+1}) - y_t(\pi_j) \big\|_2
\;\le\;
\beta_T \, \|\pi_{j+1} - \pi_j\|_\infty,
\end{align*}}
\ldconly{$\max_{0 \le t \le T}
\big\| y_t(\pi_{j+1}) - y_t(\pi_j) \big\|_2
\;\le\;
\beta_T \, \|\pi_{j+1} - \pi_j\|_\infty$,}
where $\|\pi_{j+1} - \pi_j\|_\infty := \max_{0 \le t \le T-1} \|u_{j+1,t} - u_{j,t}\|_2$.\footnote{We use the notation of $y_t(\pi_{j})$ instead of $y_{j,t}$ to highlight that the trajectory $y_t$ is a function of $\pi_{j}$.} We formalize the existence of $\beta_T$ next and provide the proof in Appendix~\ref{app:proof_beta_t}.

\begin{fullblock}
\begin{lemma}[Episode Coupling Sensitivity]
\label{prop:coupling_sensitivity}
Let the system in \eqref{eq:pf:dyn} be given and Assumption~\ref{ass:lipschitz_dynamics} hold. Then there exists a constant $\beta_{T}\ge0$ which depends on the horizon $T$ and the Lipschitz constants $L_{Xx}, L_{Xu}, L_{Yy}, L_{Yx},L_{Yu}$ such that for two policies $\pi = \{u_t\}_{t=1}^{T-1}$ and $\pi' = \{u'_t\}_{t=1}^{T-1}$:
\articleonly{\begin{align*}
\|y_{0:T}(\pi') - y_{0:T}(\pi)\|_{\infty} \le \beta_{T}\|\pi' - \pi\|_{\infty}.
\end{align*}}
\ldconly{$\|y_{0:T}(\pi') - y_{0:T}(\pi)\|_{\infty} \le \beta_{T}\|\pi' - \pi\|_{\infty}$.}
\end{lemma}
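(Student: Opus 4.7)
The plan is a standard two-stage Grönwall/Lipschitz induction, first on the ego state, then on the environment state, exploiting that the two rollouts are coupled through shared initial conditions $x_0,y_0$ and shared noise realizations $\nu_{0:T-1}$ so that trajectory differences are driven entirely by the input difference $\pi'-\pi$.

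\emph{Step 1 (ego-state deviation).} Let $\delta^X_t := \|x_t(\pi')-x_t(\pi)\|_2$ and $\delta^U_t := \|u_t'-u_t\|_2$. Using the first inequality in Assumption~\ref{ass:lipschitz_dynamics} together with $\delta^X_0=0$, I get the one-step recursion
\begin{align*}
\delta^X_{t+1}\ \le\ L_{Xx}\,\delta^X_t + L_{Xu}\,\delta^U_t.
\end{align*}
Unrolling this linear scalar recursion and bounding each $\delta^U_k$ by $\|\pi'-\pi\|_\infty$ yields
\begin{align*}
\delta^X_t\ \le\ L_{Xu}\!\sum_{k=0}^{t-1} L_{Xx}^{\,t-1-k}\ \|\pi'-\pi\|_\infty
\ =:\ \gamma^X_t\,\|\pi'-\pi\|_\infty,
\end{align*}
where $\gamma^X_t$ has the closed form $(L_{Xx}^t-1)/(L_{Xx}-1)\cdot L_{Xu}$ when $L_{Xx}\neq 1$ and $t L_{Xu}$ otherwise.

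\emph{Step 2 (environment-state deviation).} Let $\delta^Y_t := \|y_t(\pi')-y_t(\pi)\|_2$. Because both rollouts use the same noise $\nu_t$, the second Lipschitz bound in Assumption~\ref{ass:lipschitz_dynamics} applied with equal noise arguments gives
\begin{align*}
\delta^Y_{t+1}\ \le\ L_{Yy}\,\delta^Y_t + L_{Yx}\,\delta^X_t + L_{Yu}\,\delta^U_t,
\end{align*}
with $\delta^Y_0=0$. Substituting the Step~1 bound and $\delta^U_t\le\|\pi'-\pi\|_\infty$ and unrolling again gives
\begin{align*}
\delta^Y_t\ \le\ \Bigl(\sum_{k=0}^{t-1} L_{Yy}^{\,t-1-k}\bigl(L_{Yx}\,\gamma^X_k + L_{Yu}\bigr)\Bigr)\,\|\pi'-\pi\|_\infty
\ =:\ \gamma^Y_t\,\|\pi'-\pi\|_\infty.
\end{align*}

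\emph{Step 3 (horizon bound).} Taking the maximum over $t\in\{0,\ldots,T\}$ and setting
\begin{align*}
\beta_T\ :=\ \max_{0\le t\le T}\gamma^Y_t
\end{align*}
yields $\|y_{0:T}(\pi')-y_{0:T}(\pi)\|_\infty\le \beta_T\,\|\pi'-\pi\|_\infty$, and $\beta_T$ depends only on $T$ and the Lipschitz constants, as claimed.

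\emph{Anticipated obstacles.} The proof is essentially routine Grönwall-type bookkeeping; the only subtlety is the coupling argument. One must fix a single probability space and realize both policies' trajectories using the same $(x_0,y_0,\nu_{0:T-1})$ so that the $\nu$-terms cancel inside $f_Y$ and the bound becomes deterministic in $\|\pi'-\pi\|_\infty$. If the paper allowed different noise realizations under the two policies, the statement would need to be reinterpreted as a pathwise (coupled) bound or an almost-sure bound along a common sample path, which is the natural reading given that Lemma~\ref{prop:coupling_sensitivity} is used later to quantify an adversarial budget $\rho_j$ in Lemma~\ref{lemma_ccp}. The closed form of $\beta_T$ is dominated by a term of order $L_{Yy}^{T-1}\bigl(L_{Yu}+L_{Yx}L_{Xu}\,\tfrac{L_{Xx}^{T}-1}{L_{Xx}-1}\bigr)$ when the Lipschitz constants exceed one, so one can already anticipate that contraction in the later analysis will require these constants to be suitably small.
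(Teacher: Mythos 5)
Your proposal is correct and follows essentially the same two-stage Lipschitz unrolling as the paper: bound the ego-state deviation recursion, substitute into the environment-state recursion, and unroll again, with the same implicit coupling through a shared noise realization (which you make helpfully explicit). The only cosmetic difference is that you retain the step-dependent bound $\gamma^X_k$ inside the second sum, yielding a slightly tighter $\beta_T$, whereas the paper replaces every $\|\Delta x_k\|_2$ with the uniform constant $A_T$ and uses monotonicity of the partial sums to identify the max with the terminal term.
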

\end{fullblock}
\begin{shortblock}
\begin{lemmashort}[Episode Coupling Sensitivity]
\label{prop:coupling_sensitivity}
Let the system in \eqref{eq:pf:dyn} be given and Assumption~\ref{ass:lipschitz_dynamics} hold. Then there exists a constant $\beta_{T}\ge0$ which depends on the horizon $T$ and the Lipschitz constants $L_{Xx}, L_{Xu}, L_{Yy}, L_{Yx},L_{Yu}$ such that for two policies $\pi = \{u_t\}_{t=1}^{T-1}$ and $\pi' = \{u'_t\}_{t=1}^{T-1}$:
\articleonly{\begin{align*}
\|y_{0:T}(\pi') - y_{0:T}(\pi)\|_{\infty} \le \beta_{T}\|\pi' - \pi\|_{\infty}.
\end{align*}}
\ldconly{$\|y_{0:T}(\pi') - y_{0:T}(\pi)\|_{\infty} \le \beta_{T}\|\pi' - \pi\|_{\infty}$.}
\end{lemmashort}
\end{shortblock}
Next, note that our nonconformity score in \eqref{eq:score-tube} is $1$-Lipschitz continuous. Therefore, we have
\articleonly{\begin{align*}
|s(\hat y_{0:T},y_{j,0:T})-s(\hat y_{0:T},y_{j+1,0:T})|\le \beta_T \, \|\pi_{j+1} - \pi_j\|_\infty,
\end{align*}}
\ldconly{$|s(\hat y_{0:T},y_{j,0:T})-s(\hat y_{0:T},y_{j+1,0:T})|\le \beta_T \, \|\pi_{j+1} - \pi_j\|_\infty$,}
i.e., a policy change can increase the nonconformity score by at most
$
M_{j+1} := \beta_T \, \|\pi_{j+1} - \pi_j\|_\infty.
$

Using Lemma~\ref{lemma_ccp} and combining this term with the empirical quantile $q_j$ yields a high-probability upper bound on the required radius for the next policy $\pi_{j+1}$ that is of the form:
\subeqtighton
\begin{equation}
\mathbb{P}_{n_j}
\big\{\,\mathbb{P}
\big\{\,
s(\hat y_{0:T}, y_{j+1,0:T})
\;\le\;
q_j + M_{j+1}
\big\}
\;\ge\;
1-\alpha\big\} \ge 1-\delta_j.
\label{eq:episodic_transfer}
\end{equation}
\subeqend
This motivates our safe radius update rule $r_{j+1} := q_j + M_{j+1}$, which is constructed to be a high-probability upper bound on the true, unknown $(1-\alpha)$-quantile of the next distribution $\mathcal{D}(\pi_{j+1})$. Before deploying $\pi_{j+1}$, we solve the next planning problem $P[j{+}1; r_{j+1}]$ using $r_{j+1}$ as the required safety tube radius. By construction, any policy that is feasible for $P[j{+}1; r_{j+1}]$ is guaranteed, with high probability, to remain safe against all environment trajectories in that inflated tube.

\begin{fullblock}
\subsection{The Iterative Planning Algorithm}
\label{sec:iterative_planning}
\end{fullblock}
\begin{shortblock}
\subsectionshort{The Iterative Planning Algorithm}
\label{sec:iterative_planning}
\end{shortblock}
We follow the previously described ``\emph{recalibrate each episode, then transfer}'' approach. Algorithmically, the central challenge is to compute the next radius $r_{j+1}$, which must ensure safety for the next policy $\pi_{j+1}$ before we have even computed $\pi_{j+1}$. Formally, to compute $\pi_{j+1}$ we need to know $r_{j+1}$ for which we have to know $M_{j+1}$, which in turn depends on $\pi_{j+1}$, creating an implicit problem. We first define this problem and then present two ways to solve it: a computationally expensive but exact "implicit solver" and a computationally cheap "explicit solution" (used in our main analysis).

\textbf{Implicit Safety.}
From \eqref{eq:episodic_transfer}, $r_{j+1}$ must cover ${q}_j$ and the \textcolor{black}{robustification term $M_{j+1}$} such that $r_{j+1} \ge {q}_j + M_{j+1}$. However, the \textcolor{black}{robustification term} $M_{j+1}$ depends on the policy $\pi_{j+1}$ that we are trying to find, since $\pi_{j+1} = \pi^\star(r_{j+1})$. This gives the true, implicit safety requirement:

\begin{equation}
\label{eq:implicit_problem}
r_{j+1} \ge {q}_j + \beta_T \norm{\pi^\star(r_{j+1}) - \pi_j}_{\infty}.
\end{equation}
We are now looking for the smallest $r_{j+1} \in \calR:=[r_{\min}, r_{\max}]$ that satisfies this inequality. We next present details for both solutions, which are summarized in Algorithm~1 in Appendix~\ref{sec:iterative_algorithm}.

\textbf{Approach 1: The Implicit Solver.} We enforce the implicit inequality in \eqref{eq:implicit_problem} directly, treating $\pi^\star(r)$ as a black box. At episode $j$, the quantities $q_j$, $\pi_j$, and $\beta_T$ are fixed, so computing $r_{j+1}$ reduces to a scalar one-dimensional program: we search for the smallest $r\in[q_j,r_{\max}]$ satisfying \eqref{eq:implicit_problem}, which we solve by a bracketed line search and bisection or by a constrained solver. The  procedure is detailed in Appendix~\ref{sec:implicit_solver}. The theoretical guarantees of this solver use the global sensitivity bound $\beta_T$ from Lemma~\ref{prop:coupling_sensitivity}, but one may also substitute a data-driven estimate $\widehat\beta_T$ (see Appendix~\ref{app:practical_instantiations}).

\textbf{Approach 2: The Tractable Explicit Solver.} We here solve the implicit inequality \eqref{eq:implicit_problem} analytically. To get such an analytical bound, we assume that there exists a Lipschitz constant $L_U\ge 0$ that bounds how much the optimal policy $\pi^*(r)$ changes in response to a change in the radius $r\in\calR$.
\begin{fullblock}
\begin{assumption}\label{ass:planner_sensitivity}
\iffullcontent
   There exists a Lipschitz constant $L_U>0$ such that
   \begin{align*}
       \norm{\pi^*(r) - \pi^*(r')}_{\infty} \le L_U \abs{r - r'} \qquad \;\forall r,r'\in\calR
   \end{align*}
\else
   There exists a constant $L_U>0$ s.t.\ $\|\pi^*(r)-\pi^*(r')\|_\infty \le L_U|r-r'|, \;\forall r,r'\in\calR$.
\fi
\end{assumption}
\end{fullblock}
\begin{shortblock}
\begin{assumptionshort}\label{ass:planner_sensitivity}
There exists a constant $L_U>0$ s.t.\ $\|\pi^*(r)-\pi^*(r')\|_\infty \le L_U|r-r'|, \;\forall r,r'\in\calR$.
\end{assumptionshort}
\end{shortblock}
The Lipschitz continuity property in Assumption~\ref{ass:planner_sensitivity} holds for many optimization-based planners. We provide a detailed analysis in Appendix~\ref{app:proof_planner_sensitivity} for the common case in which the optimization problem $P[j; r]$ is feasible, convex, and has a unique, regular optimizer. From here, we now get
\subeqgentleon
\begin{align*}
\beta_T \norm{\pi^\star(r_{j+1}) - \pi_j}_{\infty} = \beta_T \norm{\pi^\star(r_{j+1}) - \pi^\star(r_j)}_{\infty} \le \beta_T L_U \abs{r_{j+1} - r_j} = \kappa \abs{r_{j+1} - r_j},
\end{align*}
\subeqend
where $\kappa := \beta_T L_U$ is a closed-loop gain that will help us analyze properties of our iterative planner. Instead of the implicit inequality \eqref{eq:implicit_problem}, we can use this upper bound to get the sufficient inequality:
\subeqtighton
\begin{equation}\label{eq:r_update}
r_{j+1} \ge {q}_j + \kappa \abs{r_{j+1} - r_j}.
\end{equation}
\subeqend
The inequality \eqref{eq:r_update} is a simple scalar inequality for $r_{j+1}$. As we will show in Section~\ref{sec:analysis}, this inequality has a unique, minimal (least conservative) solution, which is given by a closed-form expression. This is the algorithm we use in our analysis in the next section. \articleonly{It is computationally efficient (one-shot calculation) and is guaranteed to be safe (though it may be more conservative).} Lastly, we remark that we can estimate $L_U$ similarly to $\beta_T$ (recall Appendix~\ref{app:practical_instantiations}).
\begin{fullblock}
\subsection{Main Guarantees: Safety, Algorithmic Stability, Convergence, and Performance}
\label{sec:analysis}
\end{fullblock}
\begin{shortblock}
\subsectionshort{Main Guarantees: Safety, Algorithmic Stability, Convergence, and Performance}
\label{sec:analysis}
\end{shortblock}

Throughout this section, let the system in \eqref{eq:pf:dyn} be given and Assumptions~\ref{ass:lipschitz_dynamics} and \ref{ass1} hold. We start our analysis by stating our episodic safety guarantees which are proven in Appendix~\ref{app:thm-pe-proof}.

\begin{fullblock}
\begin{theorem}[Per-Episode Safety Guarantee]\label{thm:per_episode_safety}
Let $r_{j+1}$ follow the implicit safety requirement \eqref{eq:implicit_problem}. Then
\iffullcontent
\begin{equation}\label{eq:thm-pe:main-prob}
\mathbb{P}_{n_j}\!\left\{\,\mathbb{P}\!\left\{\,s(\hat y_{0:T}, y_{j+1,0:T}) \le r_{j+1}\,\right\} \ge 1-\alpha\,\right\} \;\ge\; 1-\delta_j.
\end{equation}
Furthermore, if $\mathbf{P}[j{+}1;\,r_{j+1}]$ is feasible, then
\begin{equation}\label{eq:thm-pe:chance-H}
\mathbb{P}_{n_j}\!\left\{\,\mathbb{P}\!\left\{\,H(x_{j+1,0:T},y_{j+1,0:T})\le 0\,\right\} \ge 1-\alpha \right\} \;\ge\; 1-\delta_j.
\end{equation}
\else
$\mathbb{P}_{n_j}\{\mathbb{P}\{s(\hat y_{0:T}, y_{j+1,0:T}) \le r_{j+1}\} \ge 1-\alpha\} \ge 1-\delta_j$. Furthermore, if $\mathbf{P}[j{+}1;\,r_{j+1}]$ is feasible, then $\mathbb{P}_{n_j}\{\mathbb{P}\{H(x_{j+1,0:T},y_{j+1,0:T})\le 0\} \ge 1-\alpha\} \ge 1-\delta_j$.
\fi
\end{theorem}
\end{fullblock}

\begin{shortblock}
\begin{theoremshort}[Per-Episode Safety Guarantee]\label{thm:per_episode_safety}
Let $r_{j+1}$ follow the implicit safety requirement \eqref{eq:implicit_problem}. Then $\mathbb{P}_{n_j}\{\mathbb{P}\{s(\hat y_{0:T}, y_{j+1,0:T}) \le r_{j+1}\} \ge 1-\alpha\} \ge 1-\delta_j$. Furthermore, if $\mathbf{P}[j{+}1;\,r_{j+1}]$ is feasible, then $\mathbb{P}_{n_j}\{\mathbb{P}\{H(x_{j+1,0:T},y_{j+1,0:T})\le 0\} \ge 1-\alpha\} \ge 1-\delta_j$.
\end{theoremshort}
\end{shortblock}
If $H$ is Lipschitz continuous with Lipschitz constant $L_H$, we note that a sufficient condition for the feasibility of $\mathbf{P}[j{+}1;\,r_{j+1}]$ is that $H(x_{j+1,0:T},\hat y_{0:T}) \le -\,L_H\,r_{j+1}$, see Appendix~\ref{app:sufficiency}.

In the remainder of this section, let additionally Assumption \ref{ass:planner_sensitivity} hold. We proceed our analysis using the explicit solver for the implicit inequality \eqref{eq:implicit_problem} which resulted in the tractable inequality in \eqref{eq:r_update}. We show its solution next and present the proof in Appendix~\ref{app:proof_explicit_update}.
\begin{fullblock}
\begin{lemma}[Safe Explicit Radius Update]
\label{thm:explicit_update_rule}
Let $\kappa < 1$. The minimal (i.e., least conservative) radius ${r}_{j+1}$ that satisfies the tractable inequality \eqref{eq:r_update} is given by the following closed-form solution:
\begin{equation}\label{r_update_explicit}
{r}_{j+1} =
\begin{cases}
      \dfrac{{q}_j + \kappa r_j}{1+\kappa} & \text{if } {q}_j \le r_j \quad (\text{Shrinkage Branch}) \\
      \dfrac{{q}_j - \kappa r_j}{1-\kappa} & \text{if } {q}_j > r_j \quad (\text{Expansion Branch})
\end{cases}
\end{equation}
\end{lemma}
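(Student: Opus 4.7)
The plan is to resolve the absolute value in \eqref{eq:r_update} by splitting into cases on the sign of $r_{j+1}-r_j$, solving the resulting linear inequality for the minimum feasible $r_{j+1}$ in each case, and then reconciling the two minima via the sign of $q_j-r_j$. Writing $r := r_{j+1}$ for brevity, the inequality to be solved is $r \ge q_j + \kappa|r - r_j|$ with $\kappa \in [0,1)$, and the goal is to locate its smallest admissible root in $\calR$.

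I first treat the \emph{expansion case} $r \ge r_j$. Here \eqref{eq:r_update} becomes $r(1-\kappa) \ge q_j-\kappa r_j$ which, since $\kappa<1$, rearranges to $r \ge (q_j-\kappa r_j)/(1-\kappa)$. The minimum feasible $r$ in this branch is therefore $\max\{r_j,(q_j-\kappa r_j)/(1-\kappa)\}$, and a one-line algebraic check shows $(q_j-\kappa r_j)/(1-\kappa) \ge r_j$ iff $q_j\ge r_j$. Thus the expansion branch attains its minimum at $(q_j-\kappa r_j)/(1-\kappa)$ precisely when $q_j> r_j$, and collapses to the trivial $r=r_j$ otherwise. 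Analogously, the \emph{shrinkage case} $r \le r_j$ reduces to $r(1+\kappa)\ge q_j+\kappa r_j$, giving $r \ge (q_j+\kappa r_j)/(1+\kappa)$; consistency with $r\le r_j$ requires $q_j\le r_j$, and in that regime the minimum is $(q_j+\kappa r_j)/(1+\kappa)$.

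Combining the two cases yields the claim. When $q_j\le r_j$, the shrinkage-branch value $(q_j+\kappa r_j)/(1+\kappa)$ lies at or below $r_j$ and hence at or below any expansion-branch candidate (which is bounded below by $r_j$), so it is the global minimum, matching the first branch of \eqref{r_update_explicit}. When $q_j> r_j$, the shrinkage case is infeasible and the expansion branch returns $(q_j-\kappa r_j)/(1-\kappa)$, matching the second branch. I do not anticipate any genuine obstacle: the computation is piecewise linear and the only mildly subtle point is verifying that each case hypothesis is compatible with the minimizer it produces, which is dispatched by the two consistency checks above.
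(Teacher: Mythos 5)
Your proof is correct and takes essentially the same route as the paper: split the absolute value into the shrinkage ($r_{j+1}\le r_j$) and expansion ($r_{j+1}>r_j$) cases, solve the resulting linear inequality in each, and use the consistency check ($q_j\le r_j$ versus $q_j>r_j$) to select the branch. If anything, your reconciliation step—taking the $\max$ with $r_j$ in the expansion branch and arguing globality of the shrinkage minimum—is a touch more careful than the paper's brief justification, but the substance is the same.
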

\end{fullblock}
\begin{shortblock}
\eqtighton
\begin{lemmashort}[Safe Explicit Radius Update]
\label{thm:explicit_update_rule}
Let $\kappa < 1$. The minimal (i.e., least conservative) radius ${r}_{j+1}$ that satisfies the tractable inequality \eqref{eq:r_update} is given by the following closed-form solution:
\begin{equation}\label{r_update_explicit}
{r}_{j+1} =
\begin{cases}
      \dfrac{{q}_j + \kappa r_j}{1+\kappa} & \text{if } {q}_j \le r_j \quad (\text{Shrinkage Branch}) \\
      \dfrac{{q}_j - \kappa r_j}{1-\kappa} & \text{if } {q}_j > r_j \quad (\text{Expansion Branch})
\end{cases}
\end{equation}
\end{lemmashort}
\eqend
\end{shortblock}
If we limit $r_{j+1}$ to an admissible interval $\calR = [r_{\min}, r_{\max}]$ (e.g., as needed in Appendix~\ref{app:proof_planner_sensitivity} to derive $L_U$), then we can project $r_{j+1}$ onto $\calR$ via the operator $\Pi_{\calR}({r}_{j+1}) = \min(r_{\max}, \max(r_{\min}, {r}_{j+1}))$. The rule \eqref{r_update_explicit} leads to the following guarantees, for which we provide the proof in Appendix~\ref{app:proof_stability_shrinkage}.

\begin{fullblock}
\begin{theorem}[Episode-to-Episode Stability and Shrinkage]
\label{thm:stability_shrinkage}
Let $\kappa < 1$ and the radius $r_{j+1}$ satisfy \eqref{r_update_explicit}. Then, the per-episode change in the radius $r_{j+1}$ is bounded by:
\articleonly{\begin{align*}
\abs{{r}_{j+1} - r_j} \le \frac{1}{1-\kappa} \abs{{q}_j - r_j}.
\end{align*}}
\ldconly{$\abs{{r}_{j+1} - r_j} \le \frac{1}{1-\kappa} \abs{{q}_j - r_j}$.}
Furthermore, if ${q}_j < r_j$, then it holds that ${r}_{j+1} < r_j$.
\end{theorem}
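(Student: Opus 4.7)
The plan is to directly verify both claims by algebraic computation in each of the two branches of the piecewise update rule \eqref{r_update_explicit}, exploiting the fact that in either case the difference $r_{j+1}-r_j$ reduces to a clean linear expression in $q_j - r_j$.

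First, I would handle the shrinkage branch $q_j \le r_j$. Substituting the closed-form ${r}_{j+1} = ({q}_j + \kappa r_j)/(1+\kappa)$ and simplifying gives $r_{j+1} - r_j = (q_j - r_j)/(1+\kappa)$. Taking absolute values yields $|r_{j+1}-r_j| = |q_j - r_j|/(1+\kappa)$. Since $\kappa = \beta_T L_U \ge 0$ by construction and $\kappa < 1$ by hypothesis, we have $1+\kappa \ge 1-\kappa > 0$, so $1/(1+\kappa) \le 1/(1-\kappa)$ and the stability bound follows. In addition, the strict inequality $q_j < r_j$ makes $(q_j-r_j)/(1+\kappa) < 0$, which immediately yields $r_{j+1} < r_j$, proving the shrinkage claim.

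Next I would handle the expansion branch $q_j > r_j$. Substituting ${r}_{j+1} = ({q}_j - \kappa r_j)/(1-\kappa)$ and simplifying gives $r_{j+1} - r_j = (q_j - r_j)/(1-\kappa)$, so $|r_{j+1}-r_j| = |q_j - r_j|/(1-\kappa)$. This meets the claimed bound with equality, so the stability inequality holds on this branch as well. Combining the two branches gives the unified bound $|r_{j+1}-r_j| \le |q_j-r_j|/(1-\kappa)$ as stated.

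There is no real obstacle: the theorem is a direct consequence of the explicit formula in Lemma \ref{thm:explicit_update_rule}, and the only subtlety is ensuring $\kappa \ge 0$ so that $1+\kappa \ge 1-\kappa$ (needed to absorb the shrinkage-branch constant into the weaker expansion-branch constant). If a projection $\Pi_{\calR}$ onto $[r_{\min}, r_{\max}]$ is applied, I would add one short line noting that projection onto a closed interval is $1$-Lipschitz and contractive with respect to any point in $\calR$, so neither bound is degraded by projecting.
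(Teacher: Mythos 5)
Your proof is correct and follows essentially the same route as the paper's: a case split on the two branches of the closed-form update, direct simplification to $r_{j+1}-r_j = (q_j-r_j)/(1\pm\kappa)$, and comparison of the denominators $1+\kappa \ge 1-\kappa > 0$ to obtain the unified bound, with the shrinkage claim falling out of the sign of the shrinkage-branch expression.
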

\end{fullblock}

\begin{shortblock}
\begin{theoremshort}[Episode-to-Episode Stability and Shrinkage]
\label{thm:stability_shrinkage}
Let $\kappa < 1$ and the radius $r_{j+1}$ satisfy \eqref{r_update_explicit}. Then, the per-episode change in the radius $r_{j+1}$ is bounded by:
\articleonly{\begin{align*}
\abs{{r}_{j+1} - r_j} \le \frac{1}{1-\kappa} \abs{{q}_j - r_j}.
\end{align*}}
\ldconly{$\abs{{r}_{j+1} - r_j} \le \frac{1}{1-\kappa} \abs{{q}_j - r_j}$.}
Furthermore, if ${q}_j < r_j$, then it holds that ${r}_{j+1} < r_j$.
\end{theoremshort}
\end{shortblock}

Finally, we analyze conditions under which our algorithms converge. We study the true quantile of the nonconformity score $s(\hat y_{0:T},y_{0:T})$ for $y_{0:T}\sim\mathcal D(\pi^\star(r))$. For simplicity, let
\articleonly{\begin{align*}
F_r(t):=\mathbb{P}\{s(\hat y_{0:T}, y_{j+1,0:T})\le t\}
\end{align*}}
\ldconly{$F_r(t):=\mathbb{P}\{s(\hat y_{0:T}, y_{j+1,0:T})\le t\}$}
denote the cumulative distribution function (CDF) of the score and define the true quantile as
\articleonly{\begin{align*}
Q_{1-\alpha}(\pi^\star(r)) \;\coloneqq\; \inf\{\,t\in\mathbb R:\ F_s(t)\ge 1-\alpha\,\}.
\end{align*}}
\ldconly{$Q_{1-\alpha}(\pi^\star(r)) \;\coloneqq\; \inf\{\,t\in\mathbb R:\ F_s(t)\ge 1-\alpha\,\}$.}
We now provide conditions for the convergence of $r_j$ and  $q_j$  and provide the proof in Appendix~\ref{app:thm8-proof}.

\begin{fullblock}
\begin{theorem}[Convergence with Explicit Error Summation]\label{thm:convergence-main}
Let
$
T(r):=Q_{1-\alpha}\!\big(\pi^\star(r)\big)
$
be the $(1-\alpha)$ true quantile of the nonconformity score. Assume that $T(r)$ is Lipschitz continuous so that
\begin{align*}
|T(r)-T(r')|\le \kappa\,|r-r'|\quad\text{for all }r,r'\in \mathbb{R}_{\ge 0},\qquad \kappa=\beta_TL_U\in(0,1).
\end{align*}
Suppose further that there is a fixed point $r^\star\in \mathbb{R}_{\ge 0}$ with $T(r^\star)=r^\star$. At episode $j$, let the radius $r_{j+1}$ satisfy \eqref{r_update_explicit}. Define
$
e_j=|r_j-r^\star|
$ and $
\eta_j=q_j-T(r_j).
$
Then, the following hold:
\pitem{thm:conv:P1}{Finite-horizon error bound.}
At each episode $j$, we are guaranteed that
\begin{equation}\label{eq:explicit-sum-main}
e_{j+1}\le \gamma_\kappa^{\,j+1}e_0
+B_\kappa\sum_{m=0}^{j}\gamma_\kappa^{\,j-m}\,|\eta_m|,\
\gamma_\kappa=\frac{2\kappa}{1-\kappa},\ 
B_\kappa=\frac{1}{1-\kappa},
\ \ 
\gamma_\kappa<1\iff \kappa<\tfrac13.
\end{equation}
\pitem{thm:conv:P2}{Closed form under a uniform perturbation bound.}
If $|\eta_m|\le C$ for all $m\in\{0,\dots,j\}$, then
\begin{equation}\label{eq:closed-form-const-eta-main}
e_{j+1}\le \gamma_\kappa^{\,j+1}e_0+B_\kappa C\,\frac{1-\gamma_\kappa^{\,j+1}}{1-\gamma_\kappa},
\
\text{and if }\kappa<\tfrac13:\
\limsup_{j\to\infty}e_j\le \frac{B_\kappa}{1-\gamma_\kappa}\,C=\frac{1}{1-3\kappa}\,C.
\end{equation}
\pitem{thm:conv:P3}{High-probability control of $\eta_j$.}
If the CDF $F_{r_j}(s)$ is differentiable and has density $f_{r_j}(s)$ no less than $f_\star>0$ for all $s$ in a sufficiently large neighborhood of its $(1-\alpha)$ quantile $Q_{1-\alpha}\!\big(\pi^\star(r_j)\big)$, then we have
\begin{equation}\label{eq:eta-bound-main}
\mathbb{P}_{n_j}\!\left\{|\eta_j|
\;\le\;
\frac{|\alpha-\bar\alpha_j|}{f_\star}
\;+\;
\frac{1}{f_\star}\sqrt{\frac{\ln(2/\delta_j)}{2 n_j}}\right\}
\;\ge\; 1-\delta_j.
\end{equation}
Furthermore, we get
\begin{equation}\label{eq:explicit-sum-split-main}
\mathbb{P}_{\sum_{m=0}^{j}n_m}\!\Big\{e_{j+1}\le \gamma_\kappa^{\,j+1}e_0+B_\kappa\sum_{m=0}^{j}\gamma_\kappa^{\,j-m}A_m\Big\}\ge 1-\sum_{m=0}^{j}\delta_m,
\end{equation}
where $A_m:=\tfrac{|\alpha-\bar\alpha_m|}{f_\star}+\tfrac{1}{f_\star}\sqrt{\tfrac{\ln(2/\delta_m)}{2n_m}}$.
\pitem{thm:conv:P4}{Asymptotics.}
If $n_j\to\infty$ and $\bar\alpha_j\to\alpha$, then $\mathbb{P}_{n_j}\{|\eta_j|\to 0\}\ge 1-\delta_j$, and for $\kappa<\tfrac13$,
\begin{equation}\label{eq:e-asymptotic-zero}
\mathbb{P}_{\sum_{m=0}^{\infty}n_m}\{e_j\to 0\}\;\ge\;1-\sum_{m=0}^{\infty}\delta_m.
\end{equation}
 If instead $\bar\alpha_j\equiv\bar\alpha<\alpha$ and $n_j\to\infty$, then $\mathbb{P}_{n_j}\{|\eta_j|\to (\alpha-\bar\alpha)/f_\star\}\ge 1-\delta_j$, and
\begin{equation}\label{eq:e-asymptotic-limsup}
\mathbb{P}_{\sum_{m=0}^{\infty}n_m}\!\Big\{\limsup_{j\to\infty}e_j\le \tfrac{1}{1-3\kappa}\cdot\tfrac{\alpha-\bar\alpha}{f_\star}\Big\}\;\ge\;1-\sum_{m=0}^{\infty}\delta_m.
\end{equation}
\end{theorem}
\end{fullblock}

\begin{shortblock}
\begin{theoremshort}[Convergence with Explicit Error Summation]\label{thm:convergence-main}
Let
$
T(r):=Q_{1-\alpha}\!\big(\pi^\star(r)\big)
$
be the $(1-\alpha)$ true quantile of the nonconformity score. Assume that $T(r)$ is Lipschitz continuous so that
\subeqtighton
\begin{align*}
|T(r)-T(r')|\le \kappa\,|r-r'|\quad\text{for all }r,r'\in \mathbb{R}_{\ge 0},\qquad \kappa=\beta_TL_U\in(0,1).
\end{align*}
\eqend
Suppose further that there is a fixed point $r^\star\in \mathbb{R}_{\ge 0}$ with $T(r^\star)=r^\star$. At episode $j$, let the radius $r_{j+1}$ satisfy \eqref{r_update_explicit}. Define
$
e_j=|r_j-r^\star|
$ and $
\eta_j=q_j-T(r_j).
$
Then, the following hold:
\pitemshort{thm:conv:P1}{Finite-horizon error bound.}
At each episode $j$, we are guaranteed that
\subeqtighton
\begin{equation}\label{eq:explicit-sum-main}
e_{j+1}\le \gamma_\kappa^{\,j+1}e_0
+B_\kappa\sum_{m=0}^{j}\gamma_\kappa^{\,j-m}\,|\eta_m|,\
\gamma_\kappa=\frac{2\kappa}{1-\kappa},\ 
B_\kappa=\frac{1}{1-\kappa},
\ \ 
\gamma_\kappa<1\iff \kappa<\tfrac13.
\end{equation}

\pitemshort{thm:conv:P2}{Closed form under a uniform perturbation bound.}
If $|\eta_m|\le C$ for all $m\in\{0,\dots,j\}$, then
\begin{equation}\label{eq:closed-form-const-eta-main}
e_{j+1}\le \gamma_\kappa^{\,j+1}e_0+B_\kappa C\,\frac{1-\gamma_\kappa^{\,j+1}}{1-\gamma_\kappa},
\
\text{and if }\kappa<\tfrac13:\
\limsup_{j\to\infty}e_j\le \frac{B_\kappa}{1-\gamma_\kappa}\,C=\frac{1}{1-3\kappa}\,C.
\end{equation}

\pitemshort{thm:conv:P3}{High-probability control of $\eta_j$.}
If the CDF $F_{r_j}(s)$ is differentiable and has density $f_{r_j}(s)$ no less than $f_\star>0$ for all $s$ in a sufficiently large neighborhood of its $(1-\alpha)$ quantile $Q_{1-\alpha}\!\big(\pi^\star(r_j)\big)$, then  we have $\mathbb{P}_{n_j}\{|\eta_j|\le A_j\}\ge 1-\delta_j$ and $\mathbb{P}_{\sum_{m=0}^{j}n_m}\big\{e_{j+1}\le \gamma_\kappa^{\,j+1}e_0+B_\kappa\sum_{m=0}^{j}\gamma_\kappa^{\,j-m}A_m\big\}\ge 1-\sum_{m=0}^{j}\delta_m$ where $A_m:=\tfrac{|\alpha-\bar\alpha_m|}{f_\star}+\tfrac{1}{f_\star}\sqrt{\tfrac{\ln(2/\delta_m)}{2n_m}}$.
\pitemshort{thm:conv:P4}{Asymptotics.}
If $n_j\to\infty$ and $\bar\alpha_j\to\alpha$, then $\mathbb{P}_{n_j}\{|\eta_j|\to 0\}\ge 1-\delta_j$, and for $\kappa<\tfrac13$, $\mathbb{P}_{\sum_{m=0}^{\infty}n_m}\{e_j\to 0\}\ge 1-\sum_{m=0}^{\infty}\delta_m$.
If instead $\bar\alpha_j\equiv\bar\alpha<\alpha$ and $n_j\to\infty$, then $\mathbb{P}_{n_j}\{|\eta_j|\to (\alpha-\bar\alpha)/f_\star\}\ge 1-\delta_j$, and $\mathbb{P}_{\sum_{m=0}^{\infty}n_m}\{\limsup_{j\to\infty}e_j\le \tfrac{1}{1-3\kappa}\cdot\tfrac{\alpha-\bar\alpha}{f_\star}\}\ge 1-\sum_{m=0}^{\infty}\delta_m$.
\eqend
\end{theoremshort}
\end{shortblock}

Lastly, we study study convergence of the performance as measured via the function $J(\cdot)$. We note that $\mathbf{P}[j;r]$ in~\eqref{eq:episodic} depends on $j$ only through $r$. Hence, let
\fulleq{
  V(r):=\min_{\pi}\,J\big(x_{0:T}(\pi),\pi\big)\ \ \text{s.t.}\ \ H\big(x_{0:T}(\pi),\zeta\big)\le 0\ \ \forall\,\zeta\in\mathcal{C}_r(\hat y_{0:T}),\qquad J_j:=V(r_j),
  \label{eq:V_definition}
}{$V(r):=\min_{\pi}J(x_{0:T}(\pi),\pi)$ s.t.\ $H(x_{0:T}(\pi),\zeta)\le 0\ \forall\,\zeta\in\mathcal{C}_r(\hat y_{0:T})$ and $J_j:=V(r_j)$.}
The next result is proven in Appendix~\ref{app:proof-performance-convergence}.
\begin{fullblock}
\vspace{0\baselineskip}
\begin{theorem}[Performance convergence and improvement]
\label{thm:performance_convergence}
Let $V(r)$ have Lipschitz constant $L_V$ on $\calR$ so that $|V(r)-V(r')|\le L_V|r-r'|$ for all $r,r'\in \calR$. Then, it holds that $|J_{j+1}-J^\star|\le L_V\,e_{j+1}$ where $J^\star:=V(r^\star)$ and $r^\star$, $e_j$, $\eta_j$, $A_j$ are defined in Theorem~\ref{thm:convergence-main}. Furthermore, the following hold:
\pitem{thm:perf:P1}{Improvement over the initial policy.}
If $r_0>r^\star$ and $V(r)$ is strictly increasing on $[r^\star,r_0]$, then $J_j<J_0$ for $r_j\in[r^\star,r_0)$. If additionally $V(r)-V(r')\ge m_V(r-r')$ for some $m_V>0$ and all $r\ge r'$ in $[r^\star,r_0]$, then every $r_j\le r_0$ satisfies
\fulleq{
  J_0-J_j\;\ge\;m_V\,(r_0-r^\star-e_j).
  \label{eq:perf-improvement-initial-main}
}{$J_0-J_j\ge m_V(r_0-r^\star-e_j)$.}
\pitem{thm:perf:P2}{One-step improvement.}
If $r_j\ge r^\star$ and $|\eta_j|<(1-\kappa)(r_j-r^\star)$, then $q_j<r_j$ and $J_{j+1}\le J_j$. If the sufficient condition $A_j<(1-\kappa)(r_j-r^\star)$ holds for all $j\in\{J,\ldots,K\}$, then
\fulleq{
  \mathbb{P}_{\sum_{j=J}^{K}n_j}\!\left\{J_{j+1}\le J_j\ \forall\,j=J,\ldots,K\right\}
  \;\ge\;1-\sum_{j=J}^{K}\delta_j.
  \label{eq:perf-tail-monotone-hp}
}{$\mathbb{P}_{\sum_{j=J}^{K}n_j}\!\big\{J_{j+1}\le J_j\ , \forall\,j=J,\ldots,K\big\}\ge 1-\sum_{j=J}^{K}\delta_j$.}
\pitem{thm:perf:P3}{Asymptotics.}
Under the assumption in Theorem~\ref{thm:convergence-main}(\ref{thm:conv:P4}),
\fulleq{
  \mathbb{P}_{\sum_{m=0}^{\infty}n_m}\!\{J_j\to J^\star\}\;\ge\;1-\sum_{m=0}^{\infty}\delta_m.
  \label{eq:perf-asymptotic-J-conv}
}{$\mathbb{P}_{\sum_{m=0}^{\infty}n_m}\{J_j\to J^\star\}\ge 1-\sum_{m=0}^{\infty}\delta_m$.}
If $r_0>r^\star$, then there exists $j_0$ such that
\fulleq{
  \mathbb{P}_{\sum_{m=0}^{\infty}n_m}\!\{J_j\le J_0\ , \forall\,j\ge j_0\}\;\ge\;1-\sum_{m=0}^{\infty}\delta_m.
  \label{eq:perf-asymptotic-j0}
}{$\mathbb{P}_{\sum_{m=0}^{\infty}n_m}\{J_j\le J_0\ \forall\,j\ge j_0\}\ge 1-\sum_{m=0}^{\infty}\delta_m$.}
\end{theorem}
\end{fullblock}
\begin{shortblock}
\subeqtighton
\vspace{0\baselineskip}
\begin{theoremshort}[Performance convergence]\label{thm:performance_convergence}
Let $V(r)$ have Lipschitz constant $L_V$ on $\calR$ so that $|V(r)-V(r')|\le L_V|r-r'|$ for all $r,r'\in \calR$. Then, we have that $|J_{j+1}-J^\star|\le L_V\,e_{j+1}$ where $J^\star:=V(r^\star)$ and $r^\star$, $e_j$, $\eta_j$, $A_j$ are defined in Theorem~\ref{thm:convergence-main}. Furthermore, the following hold:
\pitemshort{thm:perf:P1}{Improvement over the initial policy.}
If $r_0>r^\star$ and $V(r)$ is strictly increasing on $[r^\star,r_0]$, then $J_j<J_0$ for any $r_j\in[r^\star,r_0)$. If there exists $m_V>0$ such that $V(r)-V(r')\ge m_V(r-r')$ for all $r,r'\in[r^\star,r_0]$, then every $r_j\le r_0$ satisfies $J_0-J_j\ge m_V(r_0-r^\star-e_j)$.
\pitemshort{thm:perf:P2}{One-step improvement.}
If $r_j\ge r^\star$ and $|\eta_j|<(1-\kappa)(r_j-r^\star)$, then $q_j<r_j$ and $J_{j+1}\le J_j$. If $A_j<(1-\kappa)(r_j-r^\star)$ for all $j\in\{J,\ldots,K\}$, then $\mathbb{P}_{\sum_{j=J}^{K}n_j}\!\big\{J_{j+1}\le J_j\ ,\forall\,j\big\}\ge 1-\sum_{j=J}^{K}\delta_j$.
\pitemshort{thm:perf:P3}{Asymptotics.}
Under the assumptions in Theorem~\ref{thm:convergence-main}(\ref{thm:conv:P4}), $\mathbb{P}_{\sum_{m=0}^{\infty}n_m}\{J_j\to J^\star\}\ge 1-\sum_{m=0}^{\infty}\delta_m$. If $r_0>r^\star$, then there exists $j_0$ such that $\mathbb{P}_{\sum_{m=0}^{\infty}n_m}\{J_j\le J_0\ , \forall\,j\ge j_0\}\ge 1-\sum_{m=0}^{\infty}\delta_m$.
\end{theoremshort}
\subeqend
\end{shortblock}

\begin{fullblock}
\section{Case Studies}
\label{sec:case_studies}
\end{fullblock}
\begin{shortblock}
\vspace{-.1cm}
\sectionshort{Case Studies}
\label{sec:case_studies}
\end{shortblock}

We consider two case studies: a two-dimensional car--pedestrian interaction (Section~\ref{sec:case-pedestrian}) and a multi-quadcopter navigation task in QuadSwarm~\cite{quadswarm} \iffullcontent(Section~\ref{sec:quadswarm})\else(in Appendix~\ref{app:quadswarm})\fi. Both instantiate the  program $\mathbf{P}[j;r_j]$ in~\eqref{eq:episodic} with  $\alpha:=0.1$, $\delta_j:=0.05$, $r_0=r_{\max}$, and $n_j:=1000$.
We report the radius $r_j$, the cost $J_j:=V(r_j)$, and the empirical coverages for $\mathbb{P}\{y_{j,0:T}\in\mathcal{C}_{r_j}(\hat y_{0:T})\}$ and $\mathbb{P}\{H(x_{j,0:T},y_{j,0:T})\le 0\}$ at target $1-\alpha=0.9$ to show Theorem~\ref{thm:per_episode_safety}.
We compare five rules that differ in (i) whether they recalibrate $q_j$ each episode, (ii) whether they transfer the certificate through the policy update via~\eqref{eq:implicit_problem} or~\eqref{r_update_explicit}, and (iii) how the transfer is resolved:
\emph{Robust CP -- implicit (ours)} enforces~\eqref{eq:implicit_problem} directly (see Appendix~\ref{sec:implicit_solver});
\emph{Robust CP -- explicit (ours)} uses~\eqref{r_update_explicit} with fixed $\kappa=\beta_T L_U$;
\emph{Naive CP} recalibrates but drops the transfer, setting $r_{j+1}=q_j$ and ignoring the perturbation $M_{j+1}$ in~\eqref{eq:episodic_transfer};
\emph{One-time CP} fixes $q_j\equiv q_{\mathrm{base}}$ from the first episode but keeps the transfer;
\emph{Split CP} computes $q_{\mathrm{base}}$ from non-interactive rollouts and sets $r_j\equiv q_{\mathrm{base}}$.
\begin{fullblock}
\subsection{Two-Dimensional Car--Pedestrian Interaction}
\label{sec:case-pedestrian}
\end{fullblock}

\begin{shortblock}
\textbf{4.1. Two-Dimensional Car--Pedestrian Interaction}
\label{sec:case-pedestrian}
\end{shortblock}
The ego state $x_t\in\mathbb{R}^2$ follows  single-integrator dynamics $x_{t+1}=x_t+\Delta u_t$ with $\Delta:=0.1$ and $\|u_t\|_\infty\le 5$. The pedestrian state $y_t\in\mathbb{R}^2$ follows  interactive dynamics $y_{t+1}=y_t+\Delta(v_0+\phi(\|r_t\|_2)\,e_t)+w_t$ with relative position $r_t:=y_t-x_t$, direction $e_t:=r_t/\|r_t\|_2$, nominal velocity $v_0:=[-0.5,0]^\top$, noise $w_t\sim\mathcal{N}(0,\Delta\sigma^2 I)$ for $\sigma:=0.05$, and repulsion  $\phi(s):=v_{\max}\ell_c^2/(s^2+\ell_c^2)$ with $v_{\max}:=1$ and $\ell_c:=1$. The predictor
$\hat y_{t+1}=\hat y_t+\Delta v_0$ is intentionally misspecified: it omits $\phi(\cdot)$. The cost is $J(x,u):=\|x_T-x_{\mathrm{goal}}\|_2^2+10^{-3}\sum_{t=0}^{T-1}\|u_t\|_2^2$ with $x_{\mathrm{goal}}:=[6,0.5]^\top$ and $T:=50$; the safety function is $H(x,y):=\max_{t}(d_{\min}-\|x_t-y_t\|_2)$ with $d_{\min}:=0.8$. By the triangle inequality, the constraint $H(x,\zeta)\le 0\ \forall\,\zeta\in\mathcal{C}_{r_j}(\hat y_{0:T})$ reduces to $\|x_{j,t}-\hat y_t\|_2\ge d_{\min}+r_j$ for all $t=0,\ldots,T$. 

\textbf{Observations.} Figure~\ref{fig:pedestrian-results} shows that under both \emph{Robust CP} variants the radius $r_j$ contracts from $r_0=r_{\max}$ and stabilizes near $r_j\approx 0.78$, the planner cost $J_j$ decreases from $J_0$ and stabilizes, and both tube and safety coverages remain at the target $1-\alpha$, in agreement with Theorem~\ref{thm:per_episode_safety} and Theorem~\ref{thm:performance_convergence}. Both solvers converge to the same steady-state radius and cost, with the implicit solver yielding a marginally larger radius during the transient ($r\approx 0.80$ vs.\ $r\approx 0.78$ at $j=3$, see Figure~\ref{fig:pedestrian-trajectories}). \emph{Naive CP} sets $r_{j+1}=q_j$, omitting $M_{j+1}$ in~\eqref{eq:episodic_transfer}: tube coverage drops to near zero in the early episodes before recovering once the radius stabilizes, precisely the tube coverage failure that~\eqref{eq:episodic_transfer} is designed to prevent. \emph{One-time CP} and \emph{Split CP} hold $q_j$ fixed and therefore cannot track $\mathcal{D}(\pi_j)$; their tube coverage collapses to near zero as the stale $q_{\mathrm{base}}$ does not account for the policy-induced distribution shift per~\eqref{eq:episodic_transfer}, and while their smaller radii ($r_j\approx 0.54$) yield lower planner costs, these lack valid coverage guarantees. Only the two \emph{Robust CP} variants simultaneously maintain both coverages and reduce $J_j$ from the conservative initialization, matching the shrinkage branch of~\eqref{r_update_explicit} under Theorem~\ref{thm:performance_convergence}. Figure~\ref{fig:pedestrian-trajectories} visualizes the trajectories and tube contraction at episodes $j=0,3,19$.
\begin{figure}[!t]
    \centering
    \captionsetup{font=small,skip=4pt}
    \setlength{\tabcolsep}{1pt}
    \begin{tabular}{@{}cccc@{}}
        \includegraphics[width=0.25\linewidth]{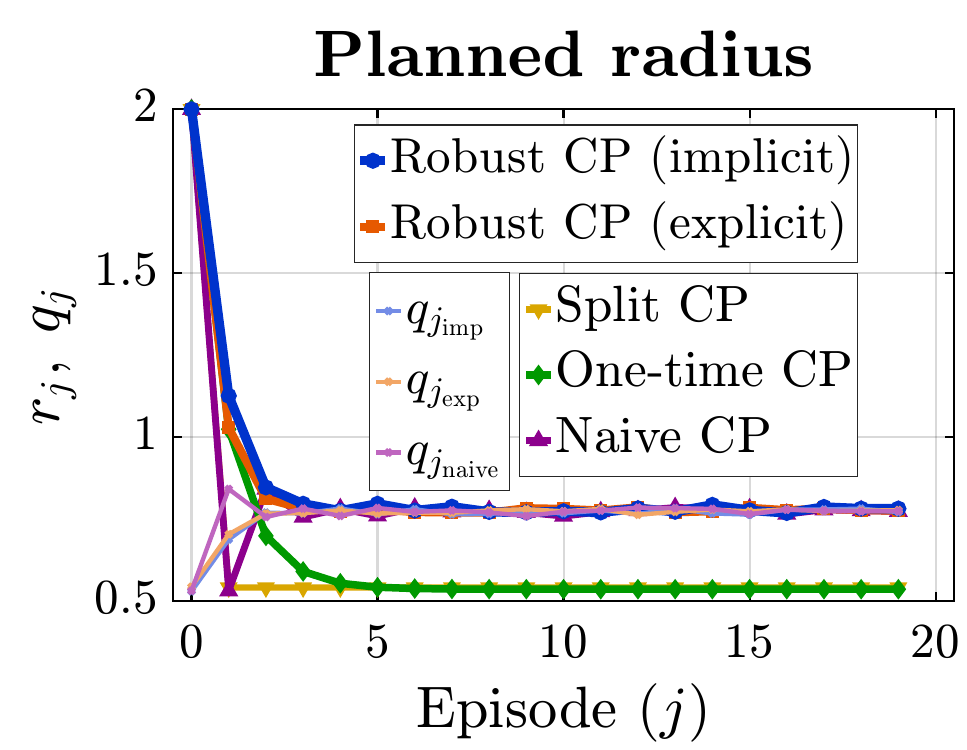} &
        \includegraphics[width=0.25\linewidth]{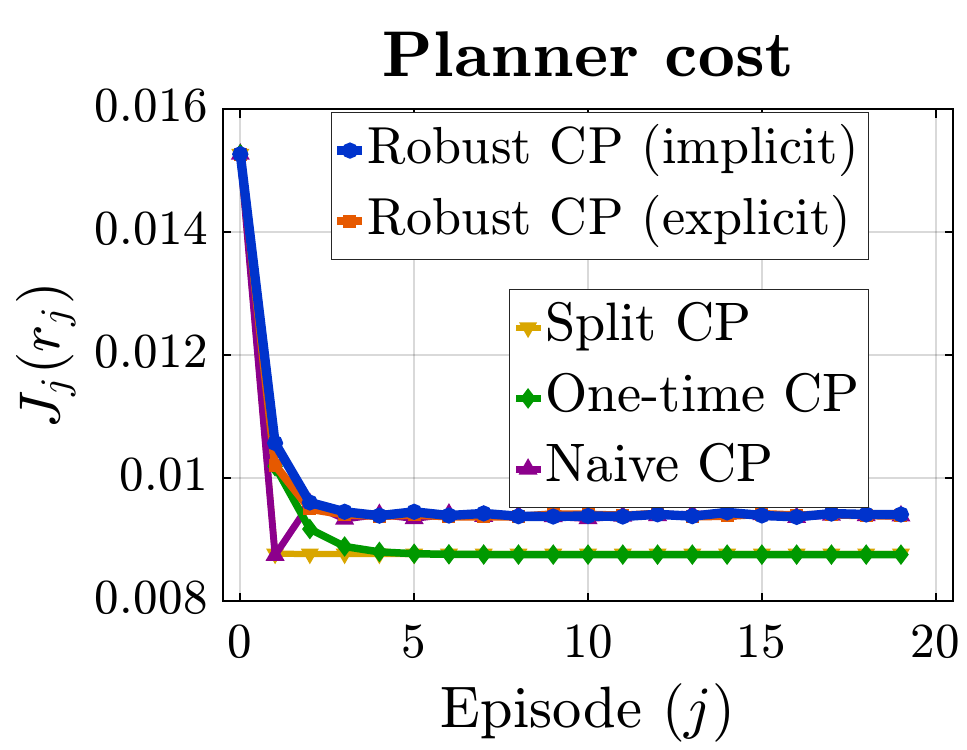} &
        \includegraphics[width=0.25\linewidth]
        {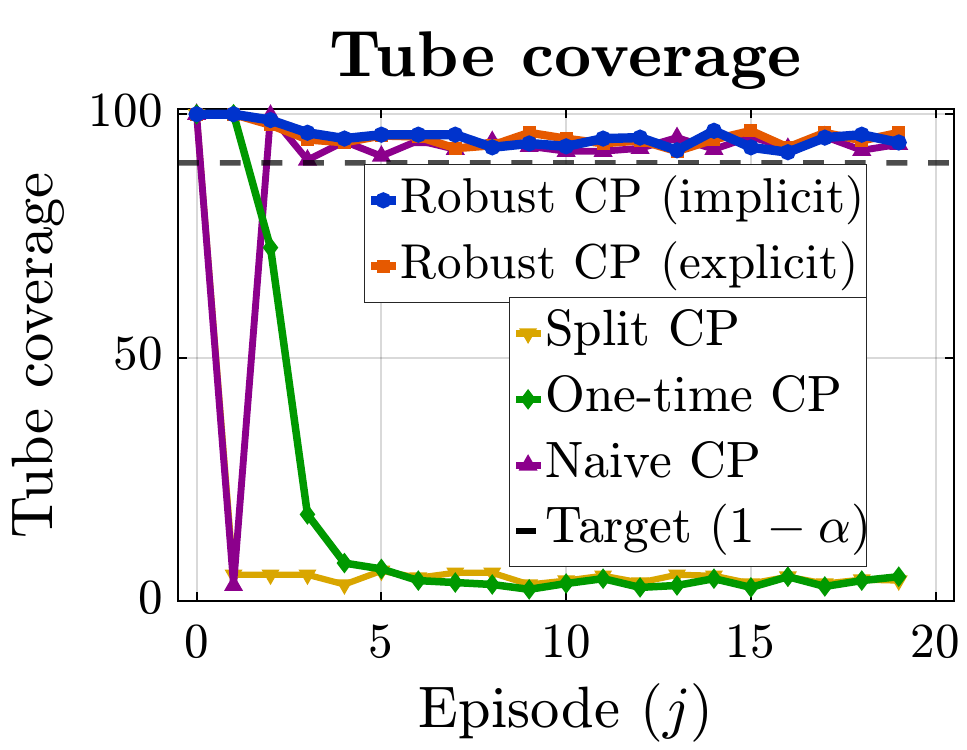} &
        \includegraphics[width=0.25\linewidth]{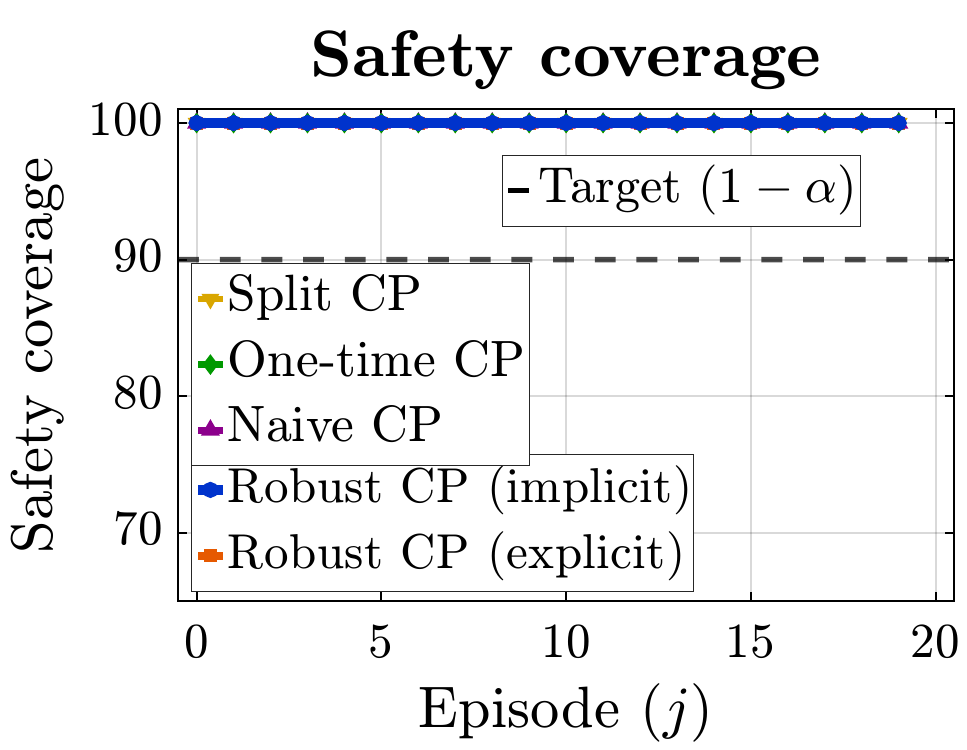}
    \end{tabular}
       \ldconly{ \vspace{-.3cm}}
        \caption{Car--pedestrian metrics. Left to right: $r_j$, $J_j$, empirical tube and safety coverages across episodes.}
    \label{fig:pedestrian-results}
    \ldconly{\vspace{-.55cm}}
\end{figure}

\begin{figure}[!t]
    \centering
    \captionsetup{font=small,skip=4pt}
    \setlength{\tabcolsep}{2pt}
    \begin{tabular}{@{}cc@{}}
        \includegraphics[width=0.28\linewidth]{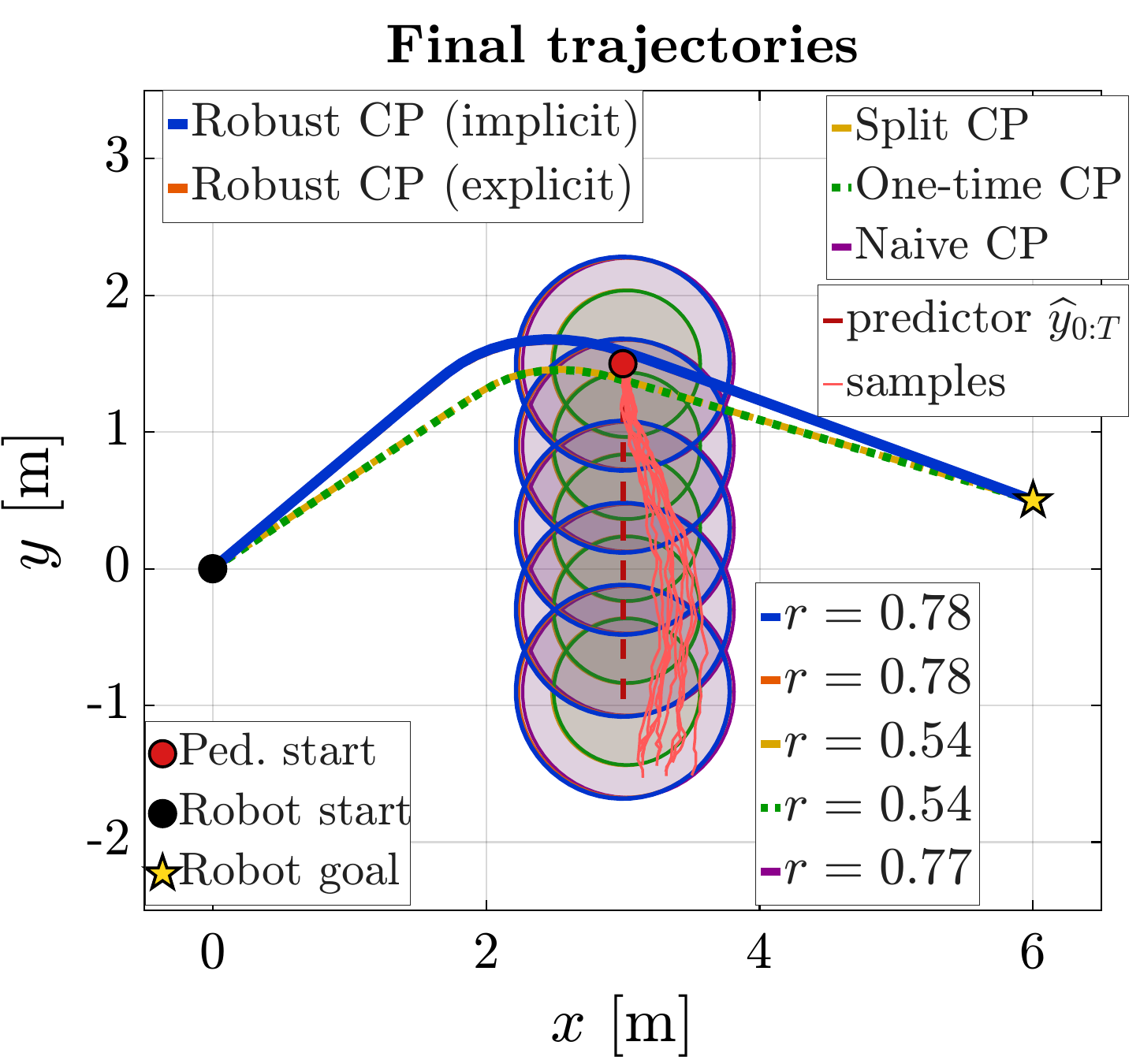} &
        \includegraphics[width=0.59\linewidth]{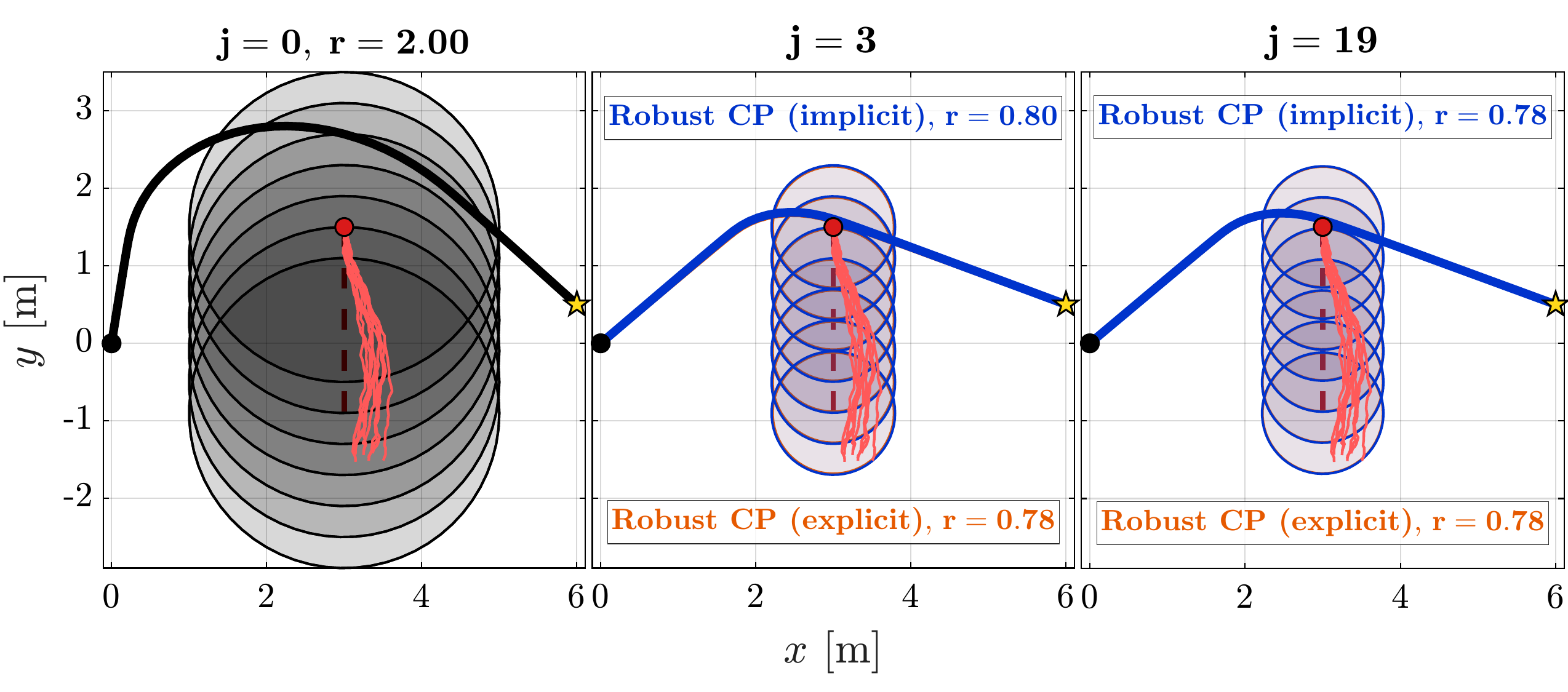}
    \end{tabular}
        \ldconly{\vspace{-.3cm}}
    \caption{Car–pedestrian trajectories. Left: final ego agent paths, all five rules. Right: tubes at $j = 0, 3, 19$.}
    \label{fig:pedestrian-trajectories}
    \ldconly{\vspace{-.7cm}}
\end{figure}

\begin{fullblock}

\section{Multi-Quadcopter Navigation}
\label{sec:quadswarm}

This appendix reports the results for the multi-quadcopter case study. All notation{,} targets ($1-\alpha=0.9$, {$\delta_j=0.05$}){,} {and baselines for comparison} are as defined in Section~\ref{sec:case_studies}. {Since the implicit solver requires multiple evaluations of $\pi^\star(r)$ per episode (see Appendix~\ref{sec:implicit_solver}), which is computationally expensive for the CBF-QP planner in~\eqref{eq:quadswarm-deployed-policy}, we employ only the explicit solver with update rule~\eqref{r_update_explicit} and report four rules: \emph{Robust CP -- explicit (ours)}, \emph{Naive CP}, \emph{One-time CP}, and \emph{Split CP}.}

\textbf{System and deployed policy.} We consider one ego agent and $N_{\mathrm{env}}=5$ environment quadrotors, each modeled as a double integrator system operating in $\mathbb{R}^3$ (an instantiation of~\eqref{eq:pf:dyn}), using the QuadSwarm simulator~\cite{quadswarm}. The environment agents execute learned navigation policies trained with collision penalties and do not deploy a test-time safety filter. {The ego agent uses nominal policy $\pi^{\mathrm{RL}}$ to track a goal position {at} every timestep {without knowledge of the environment agents}.  We approximate the cost minimization in $\mathbf{P}[j;r]$ by minimizing the distance between {the} robust action and the nominal action {at} every timestep, subject to the safety constraints.} {Concretely, we define for each environment agent $i\in[N_{\mathrm{env}}]$ and time $t$ the barrier function
\begin{equation}
h_i(x_t,t) := \|x_t - \hat y_t^{(i)}\|_2 - d_{\mathrm{coll}} - r_j,
\label{eq:quadswarm-barrier}
\end{equation}
which encodes separation from $\mathcal{C}_{r_j}(\hat y_{0:T})$. The ego agent then deploys the CBF-QP safety filter~\cite{ames2019cbf}
\begin{equation}
\pi_j(x_t) := \argmin_{u\in\mathcal{U}}\ \|u - \pi^{\mathrm{RL}}(x_t)\|_2^2 \;\;\; \text{s.t.}\;\; h_i\bigl(f_X(x_t,u),\,t{+}1\bigr) \ge (1{-}\gamma)\,h_i(x_t,t),\; \forall\, i\in[N_{\mathrm{env}}],
\label{eq:quadswarm-deployed-policy}
\end{equation}
where $\gamma\in(0,1]$ is the CBF decay rate and $\mathcal{U}$ is the admissible control set. Since $f_X$ is a double integrator (affine in $u$), \eqref{eq:quadswarm-deployed-policy} is a QP at each timestep.} Hence $\pi_j$ depends on $j$ only through $r_j$, playing the role of $\pi^\star(r_j)$ in Section~\ref{sec:iterative_planning}.
A collision is declared when $\|x_{j,t}-y_{j,t}^{(i)}\|_2<d_{\mathrm{coll}}:=2.5\,\ell_{\mathrm{arm}}$, with $\ell_{\mathrm{arm}} = 0.046$ being the quadrotor arm length.

\textbf{Calibration and update rule.} We use~\eqref{eq:score-tube} with the Euclidean norm in $\mathbb{R}^3$ and $n_j=200$. 
{We collect empirical quantiles $q_j^{(i)}$}%
\fullonly{ from~\eqref{eq:episodic_quantile}}%
\shortonly{ defined in Section~\ref{sec:acp_policy_shift}}
{for every agent $i \in [N_{\mathrm{env}}]$, and aggregate them {as}}
$\hat q_j:=\max_i q_j^{(i)}$ so that safety holds simultaneously for all agents{, with} $\hat q_j$ {replacing} $q_j$ in~\eqref{eq:implicit_problem} and~\eqref{r_update_explicit}. The admissible radius set is $\calR=[2\,\ell_{\mathrm{arm}},8\,\mathrm{m}]$ and the sensitivity gain is $\kappa=0.6$. All four update rules share $r_0=r_{\max}$ and the same predictor $\hat y_{0:T}$, initialized from a non-interactive rollout. An additional $100$ rollouts per episode estimate empirical tube coverage, empirical safety coverage, and the cumulative progress-to-goal reward.

\textbf{Observations.} Figure~\ref{fig:quadswarm-metrics} reports the four metrics. Under {\emph{Robust CP}}, $r_j$ contracts from $r_0=8\,\mathrm{m}$ toward $\approx 1\,\mathrm{m}$ (e.g.\ $r_1\approx 3.82\,\mathrm{m}$, $r_5\approx 1.02\,\mathrm{m}$); the first step is a strict convex combination of $r_0$ and $\hat q_0$ via the shrinkage branch of~\eqref{r_update_explicit}. 
Cumulative reward increases from $\approx 1.6$ to $\approx 2.68$, empirical tube coverage remains above $1-\alpha=0.9$, and empirical safety coverage stays at one, in agreement with Theorem~\ref{thm:per_episode_safety}. {\emph{Naive CP}} attains lower cumulative reward than the robust rule{{, as} the robust rule safely attain{s} an equilibrium with a smaller final radius}; {\emph{One-time CP}} and {\emph{Split CP}} hold $r_j$ fixed at their first certificate $q_{\mathrm{base}}\approx 1.31\,\mathrm{m}$ and likewise attain lower cumulative reward {at convergence}. Only {\emph{Robust CP}} converts the conservative initialization $r_0$ into a smaller stabilized operating radius without loss of either coverage, matching the shrinkage {branch of~\eqref{r_update_explicit} under} Theorem~\ref{thm:performance_convergence}. 
Figures~\ref{fig:quadswarm-qual} and~\ref{fig:quadswarm-qual-baselines} show the corresponding trajectories and tube evolution.

{We note that all four rules maintain empirical safety coverage at $100\%$ throughout (Figure~\ref{fig:quadswarm-metrics}$(a_4)$). This is attributable to the experimental setup:} {the environment agents were trained jointly in a homogeneous multi-agent environment in which all agents share{d} the same collision-avoidant objective and policy.} {Their learned policies are therefore conditioned on collision-averse behavior from all agents, including the ego agent, so that the policy-induced distribution shift $\mathcal{D}(\pi_j)\to\mathcal{D}(\pi_{j+1})$ per Lemma~\ref{prop:coupling_sensitivity} remains small even without the robustification term $M_{j+1}$ in~\eqref{eq:episodic_transfer}. This explains why \emph{Naive CP} does not exhibit the tube coverage collapse observed in the car--pedestrian study (Section~\ref{sec:case-pedestrian}).  Nevertheless, only \emph{Robust CP} systematically reduces $r_j$ while maintaining valid coverage guarantees per~\eqref{eq:episodic_transfer} and Theorem~\ref{thm:per_episode_safety}.}

{The qualitative panels in Figures~\ref{fig:quadswarm-qual} and~\ref{fig:quadswarm-qual-baselines} support this interpretation. As {$r_j$} contracts, the ego trajectory straightens and the tubes around representative environment agents shrink accordingly, yet {continue to contain} the realized trajectories.} {The baseline trajectories in Figure~\ref{fig:quadswarm-qual-baselines} remain visually similar across episodes, consistent with the small distribution shift in this setting, while operating at a fixed and larger safety margin than the converged \emph{Robust CP} radius.}

\begin{figure}[!htbp]
    \centering
    \captionsetup{font=small,skip=2pt}
    \begin{subfigure}[t]{0.24\linewidth}
        \centering\small $(a_1)$\\[-0.1em]
        \includegraphics[width=\linewidth]{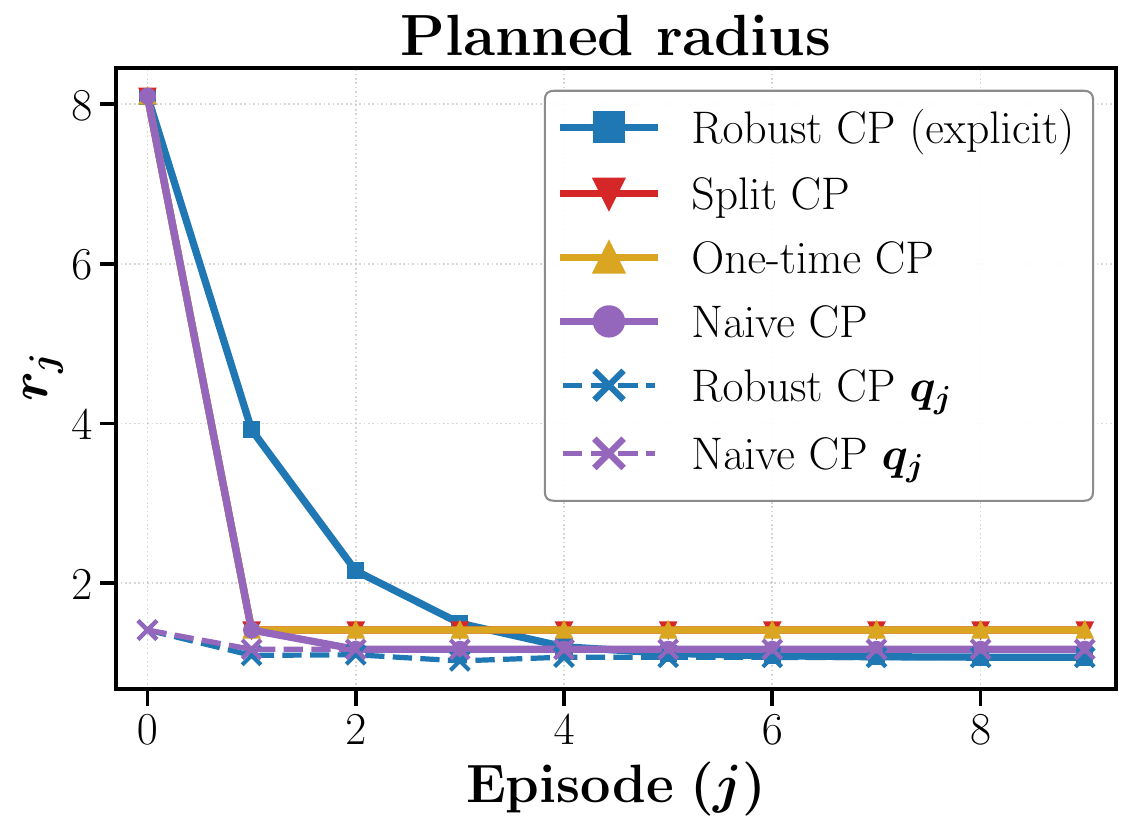}
    \end{subfigure}\hfill
    \begin{subfigure}[t]{0.24\linewidth}
        \centering\small $(a_2)$\\[-0.1em]
        \includegraphics[width=\linewidth]{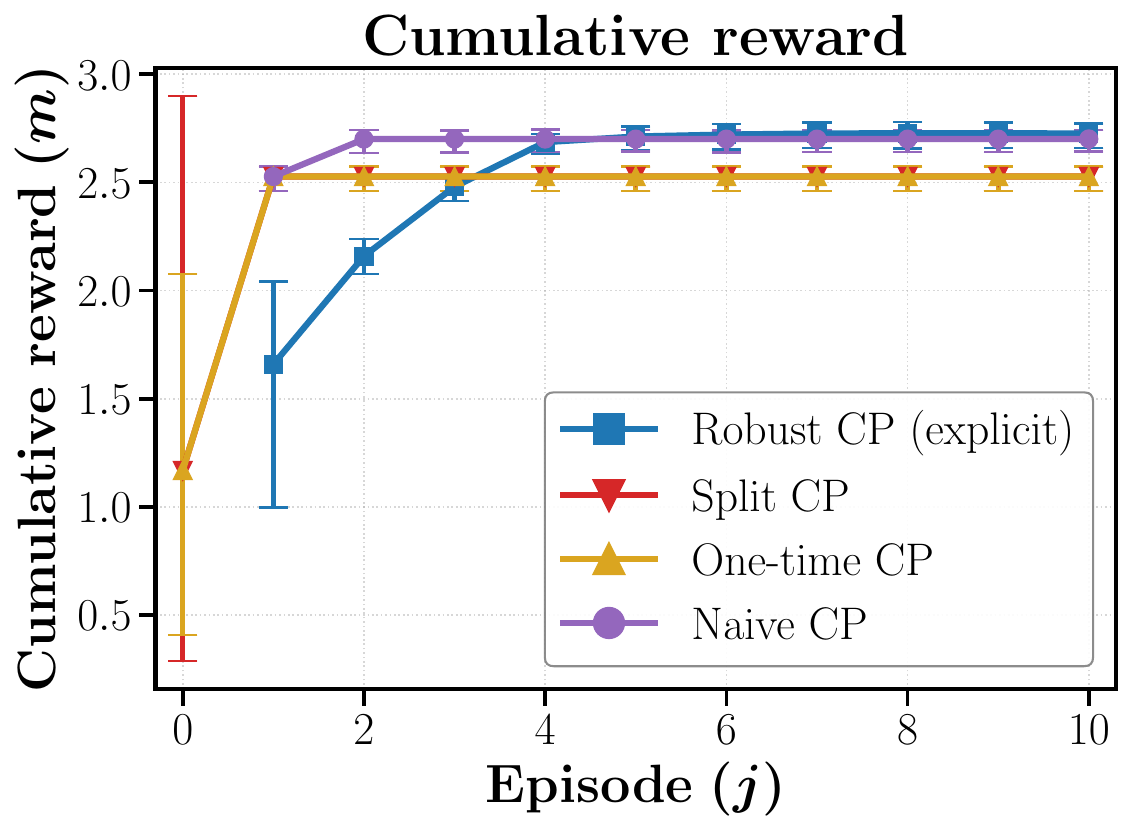}
    \end{subfigure}\hfill
    \begin{subfigure}[t]{0.24\linewidth}
        \centering\small $(a_3)$\\[-0.1em]
        \includegraphics[width=\linewidth]{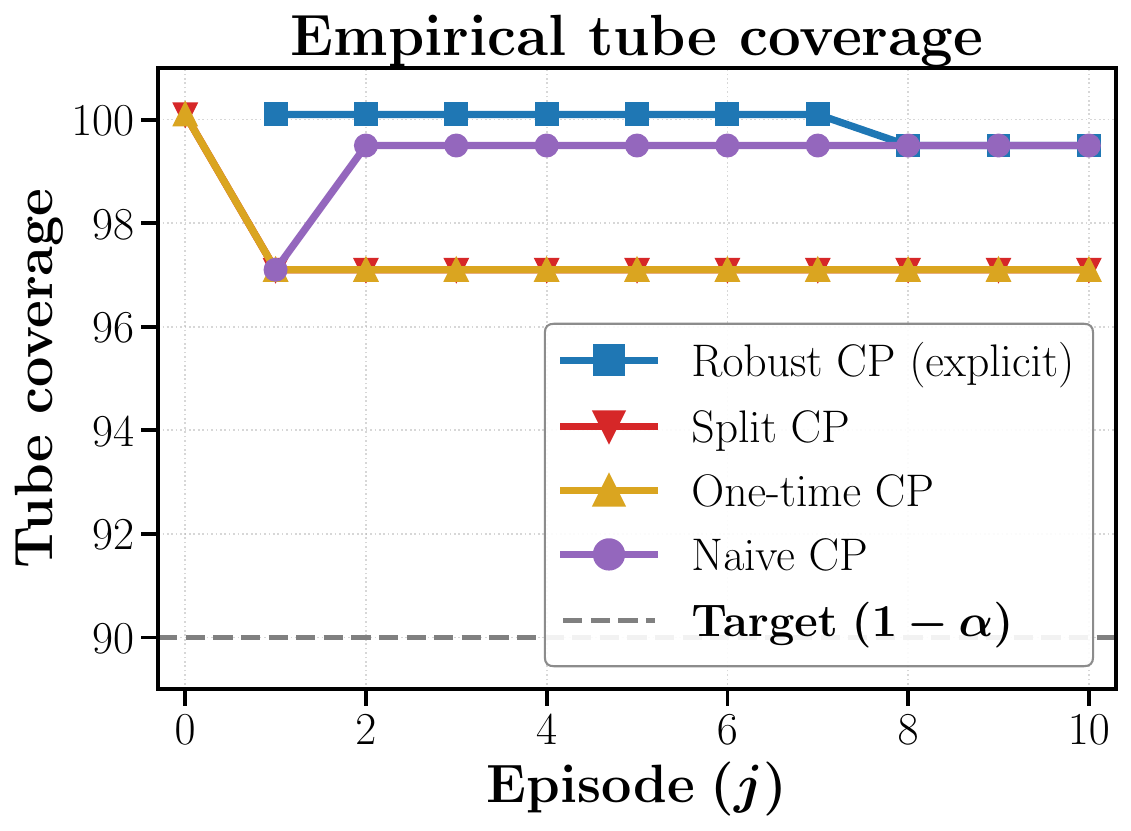}
    \end{subfigure}\hfill
    \begin{subfigure}[t]{0.24\linewidth}
        \centering\small $(a_4)$\\[-0.1em]
        \includegraphics[width=\linewidth]{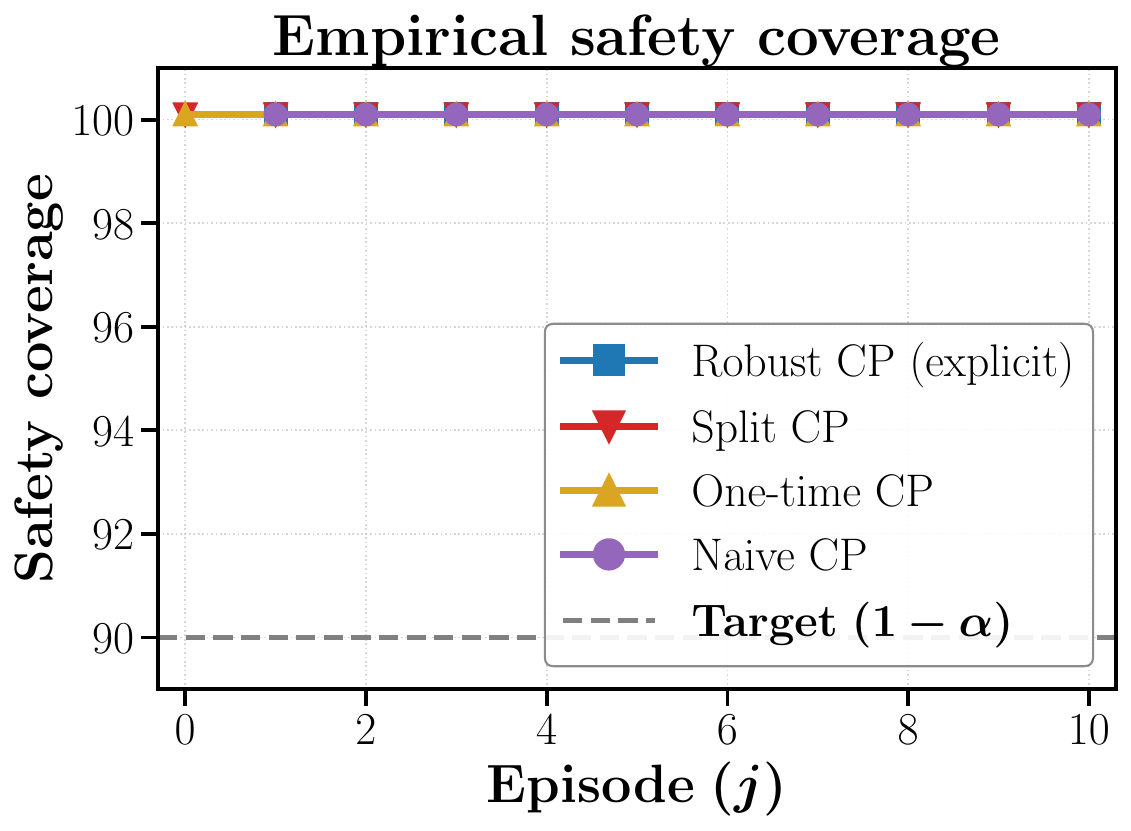}
    \end{subfigure}
    \caption{Multi-quadcopter quantitative results. Left to right: deployed radius $r_j$ (with the current aggregate quantile $\hat q_j$ overlaid for the robust rule), cumulative reward, empirical tube coverage (target $1-\alpha=0.9$), and empirical safety coverage. {\emph{Robust CP}} contracts $r_j$ and attains the highest cumulative reward while retaining both coverages at their target.}
    \label{fig:quadswarm-metrics}
\end{figure}

\begin{figure*}[!thbp]
    \centering
    \captionsetup{font=small,skip=2pt}
    \setlength{\tabcolsep}{0pt}
    \renewcommand{\arraystretch}{0.0}
    \begin{tabular}{@{}ccccc@{}}
        \includegraphics[width=0.20\linewidth]{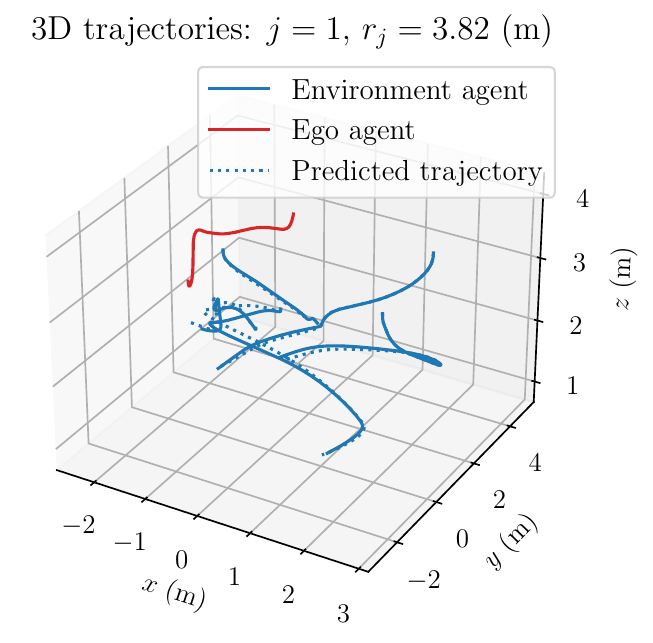} &
        \includegraphics[width=0.20\linewidth]{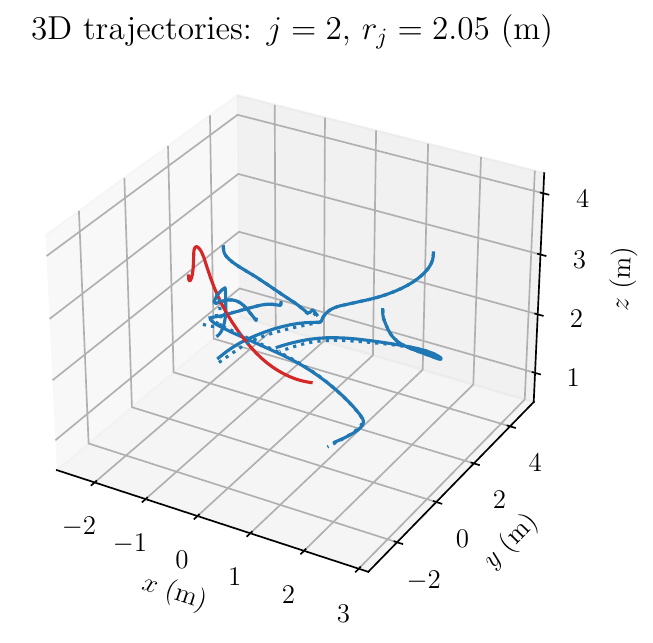} &
        \includegraphics[width=0.20\linewidth]{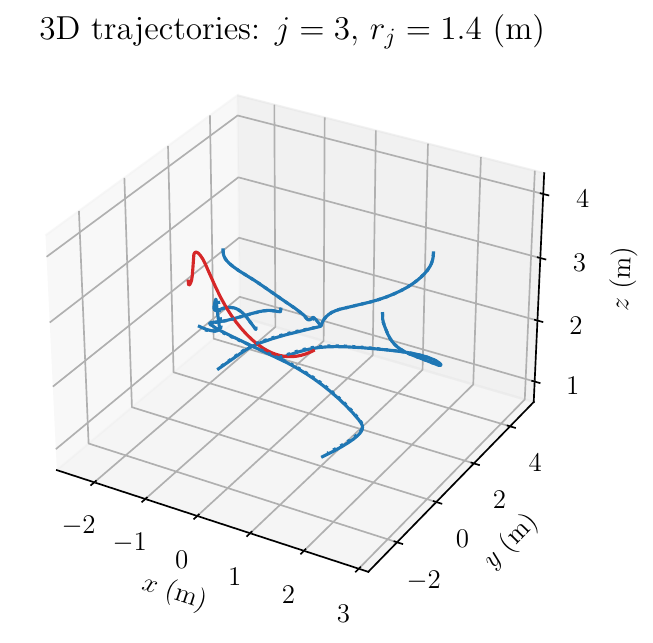} &
        \includegraphics[width=0.20\linewidth]{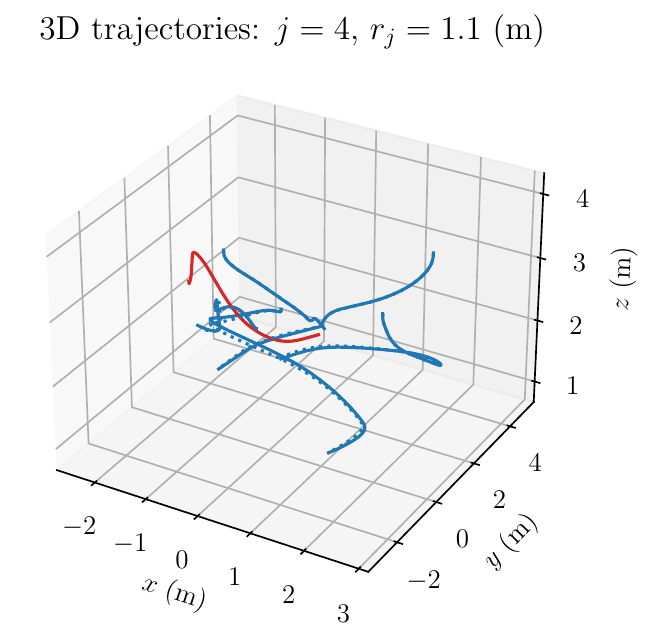} &
        \includegraphics[width=0.20\linewidth]{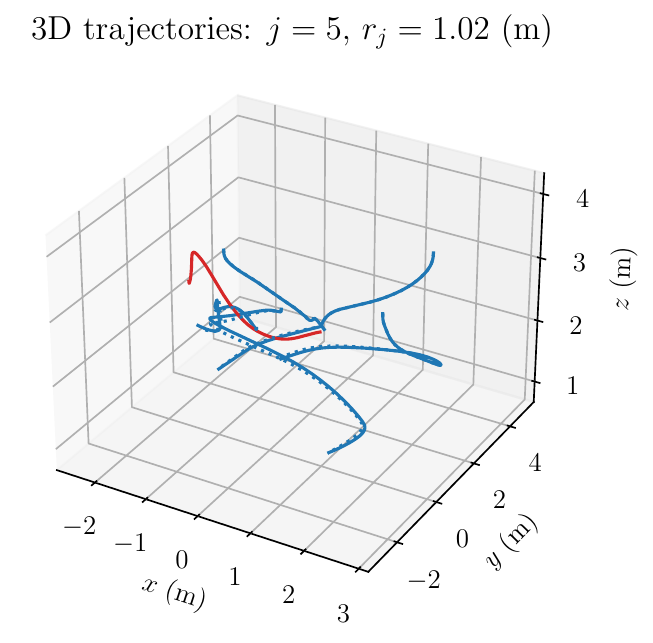}
    \end{tabular}
    \begin{tabular}{@{}ccccc@{}}
        \includegraphics[width=0.20\linewidth]{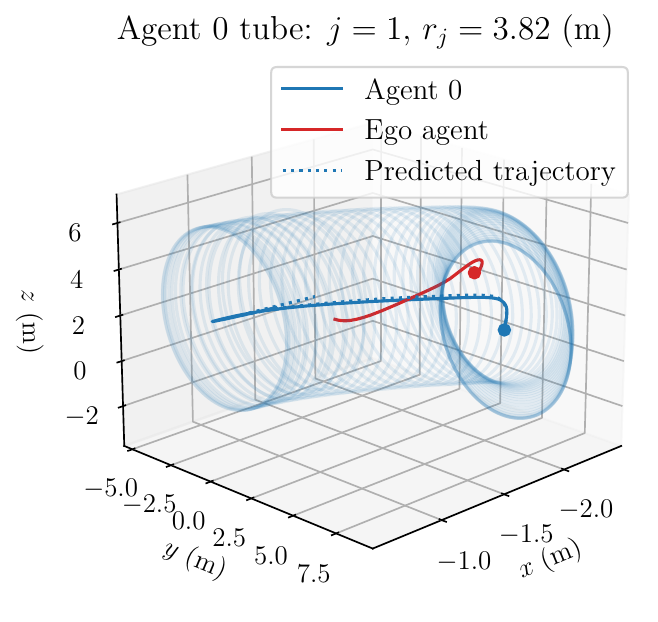} &
        \includegraphics[width=0.20\linewidth]{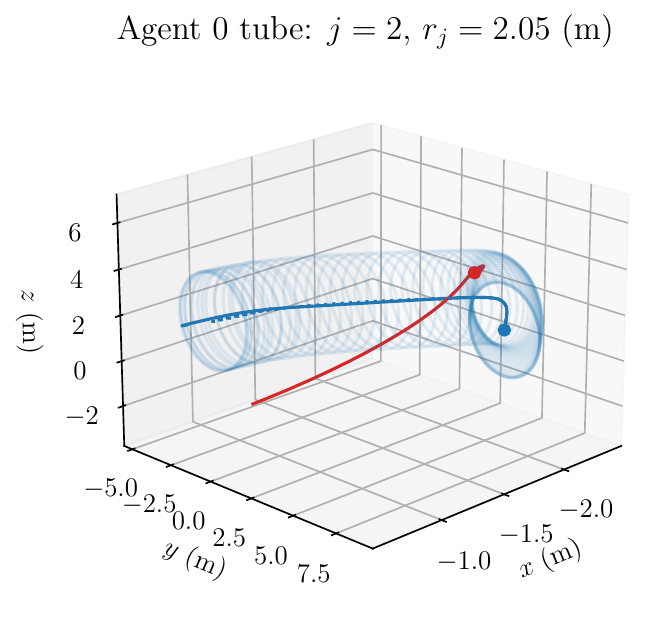} &
        \includegraphics[width=0.20\linewidth]{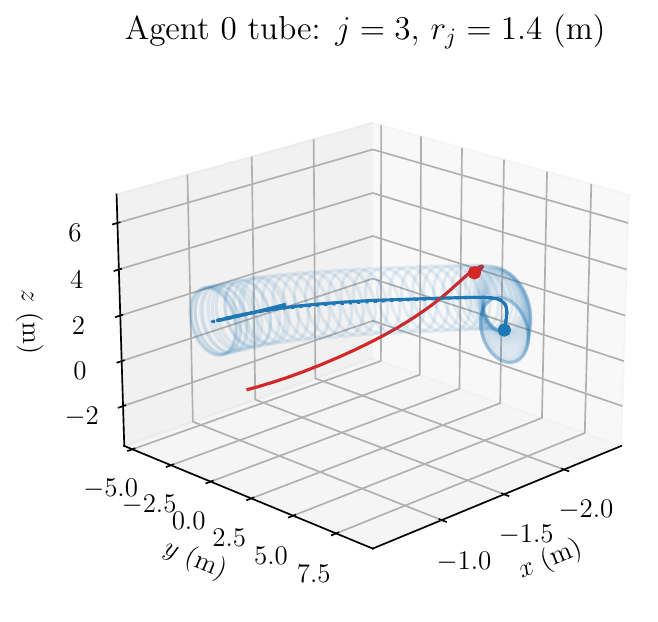} &
        \includegraphics[width=0.20\linewidth]{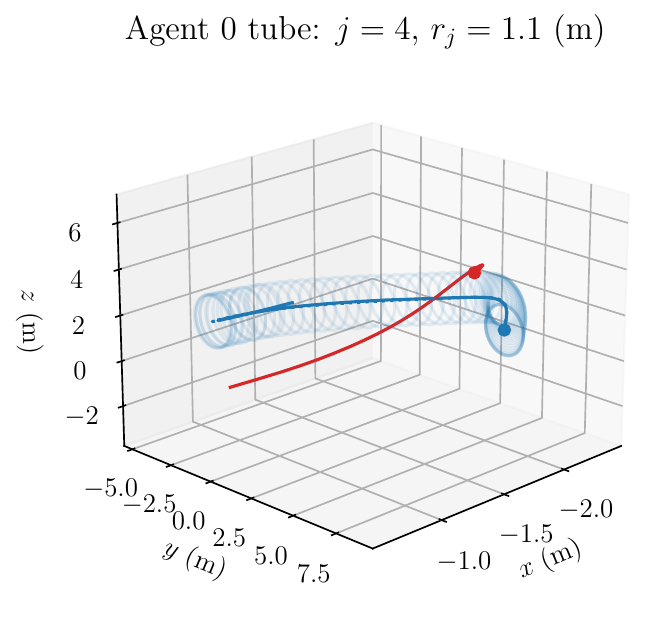} &
        \includegraphics[width=0.20\linewidth]{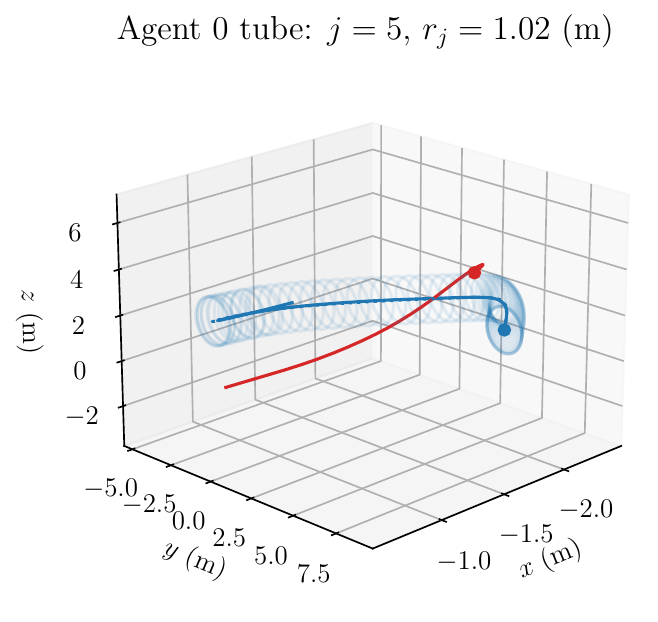}
    \end{tabular}
    \begin{tabular}{@{}ccccc@{}}
        \includegraphics[width=0.20\linewidth]{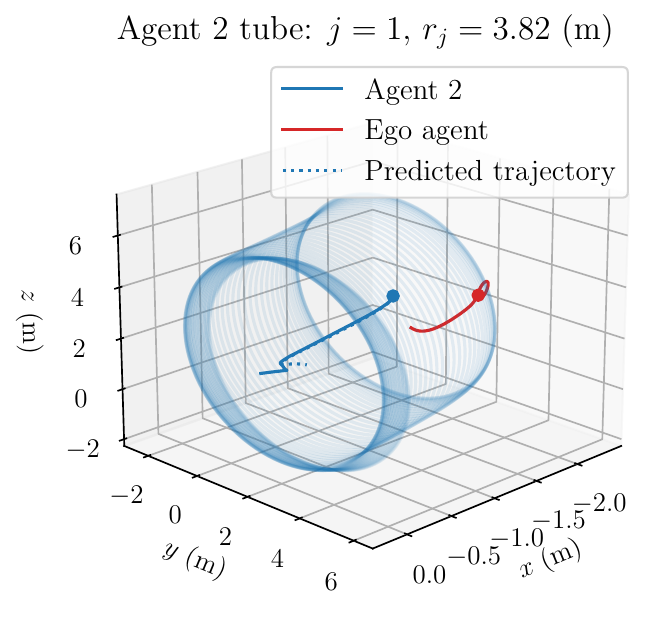} &
        \includegraphics[width=0.20\linewidth]{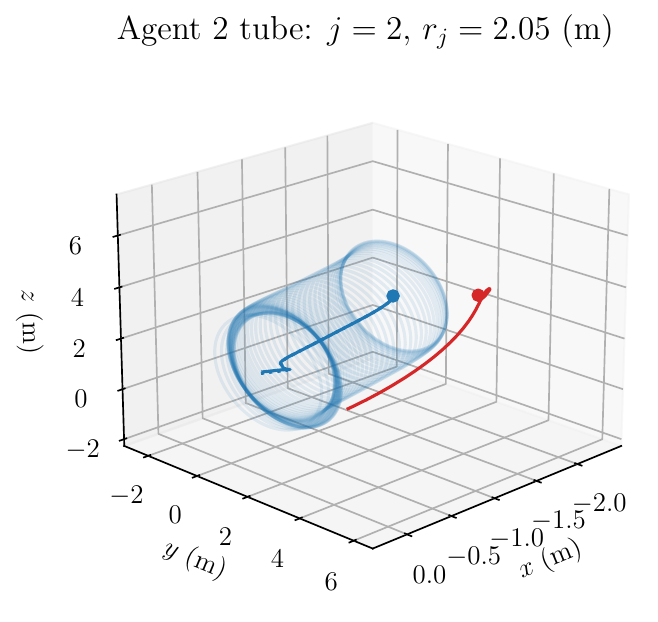} &
        \includegraphics[width=0.20\linewidth]{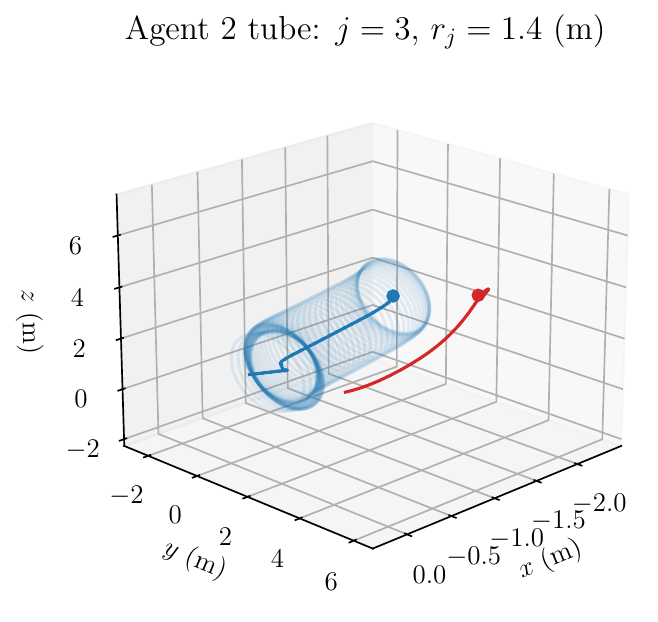} &
        \includegraphics[width=0.20\linewidth]{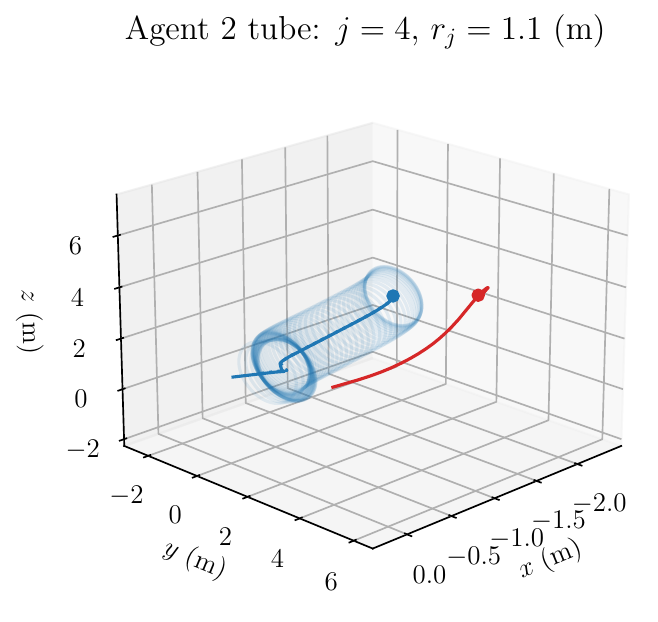} &
        \includegraphics[width=0.20\linewidth]{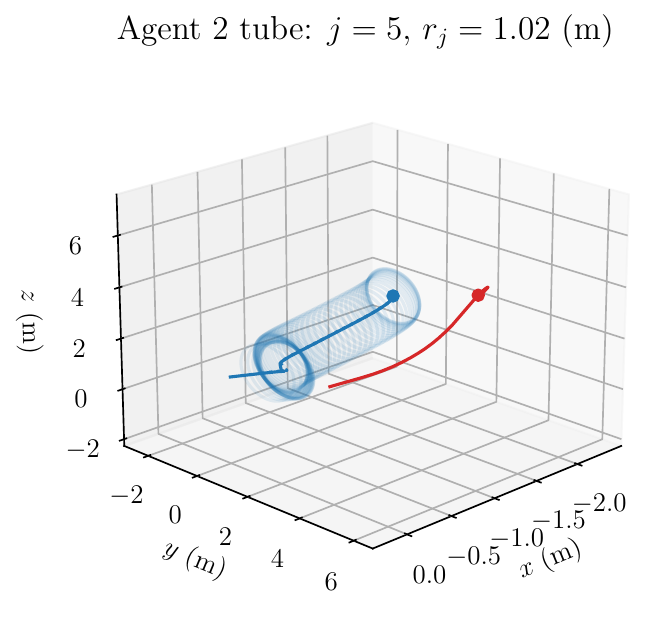}
    \end{tabular}
    \caption{Qualitative evolution under {\emph{Robust CP}} across episodes $j=1,\ldots,5$ (left to right). Row~1: realized trajectories (solid) and nominal predicted trajectories (dotted). Rows~2--3: conformal tubes $\mathcal{C}_{r_j}(\hat y_{0:T})$ for two representative environment agents. As $r_j$ contracts, the ego agent's path straightens while the tubes continue to contain the realized environment trajectories.}
    \label{fig:quadswarm-qual}
\end{figure*}
\begin{figure}[!htbp]
    \centering \captionsetup{font=small,skip=2pt}
    \setlength{\tabcolsep}{0pt}
    \renewcommand{\arraystretch}{0.0}
    \begin{tabular}{@{}ccccc@{}}
        \multicolumn{2}{c}{{\emph{Naive CP}}} &
        \multicolumn{2}{c}{{\emph{One-time CP}}} &
        {\emph{Split CP}} \\
        \includegraphics[width=0.19\linewidth]{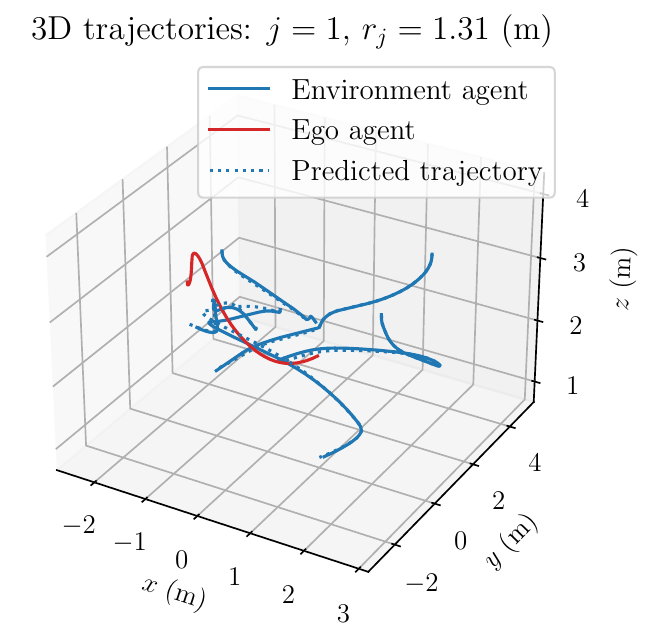} &
        \includegraphics[width=0.19\linewidth]{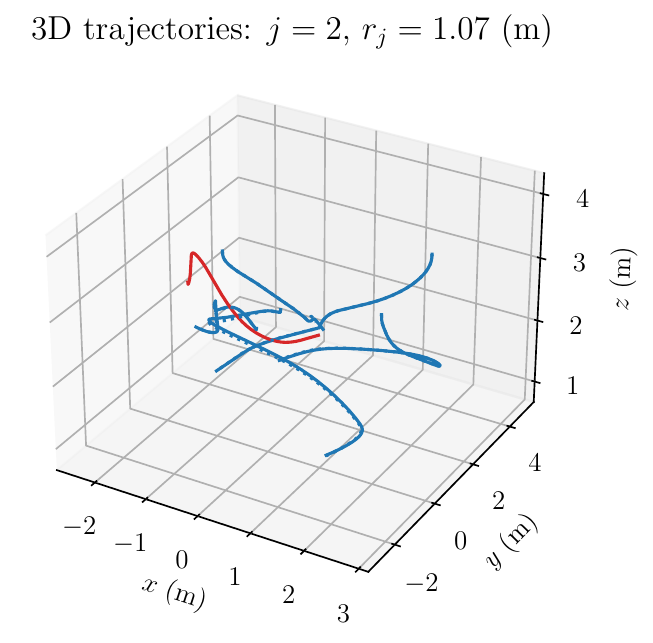} &
        \includegraphics[width=0.19\linewidth]{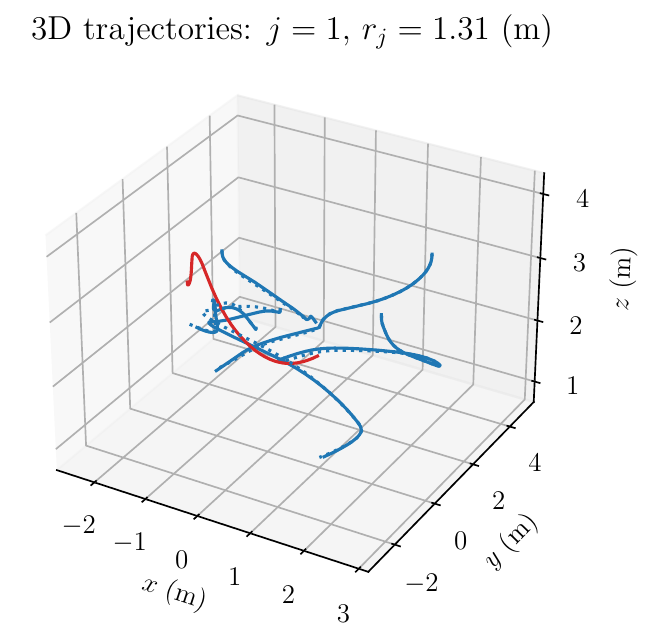} &
        \includegraphics[width=0.19\linewidth]{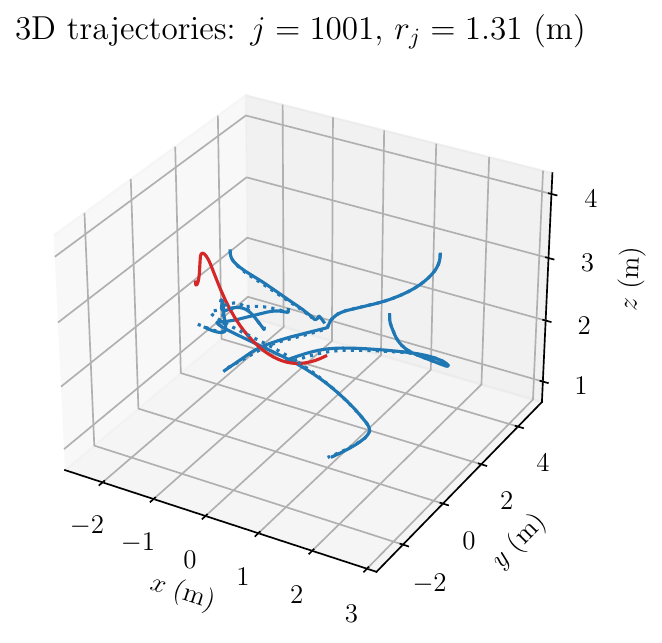} &
        \includegraphics[width=0.19\linewidth]{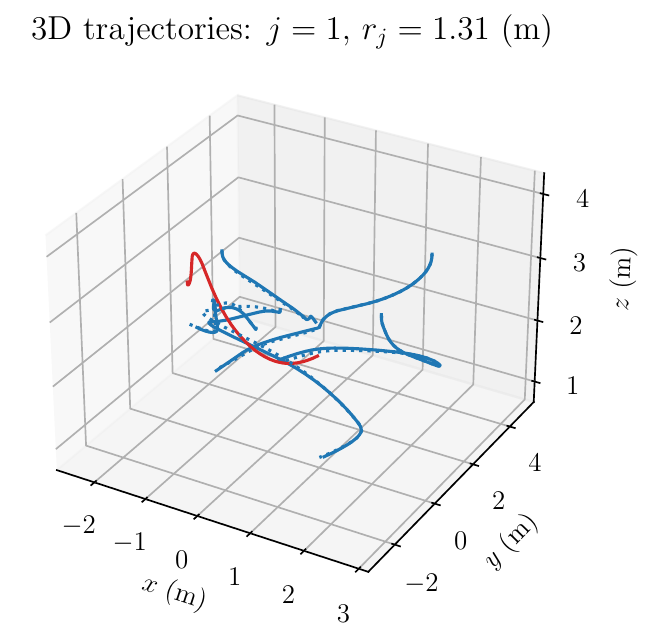} \\
        \includegraphics[width=0.19\linewidth]{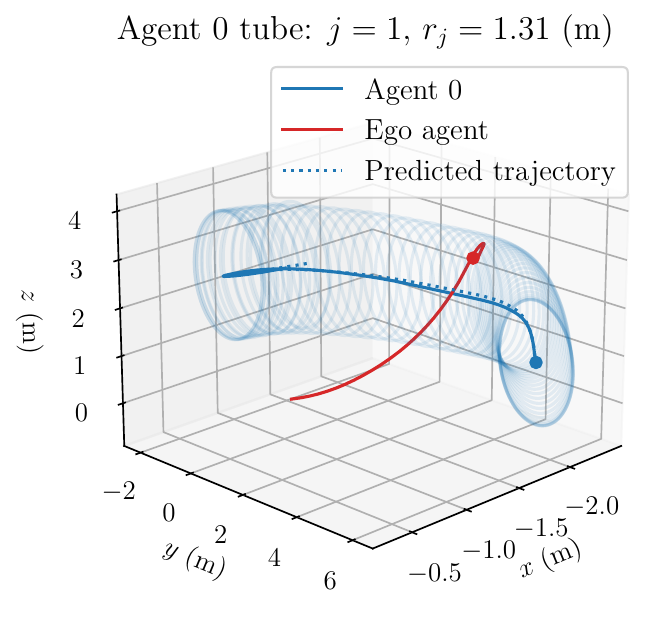} &
        \includegraphics[width=0.19\linewidth]{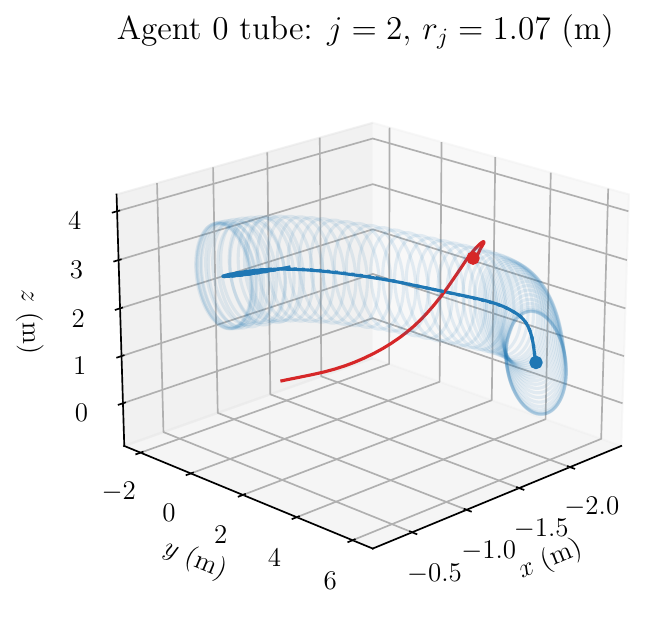} &
        \includegraphics[width=0.19\linewidth]{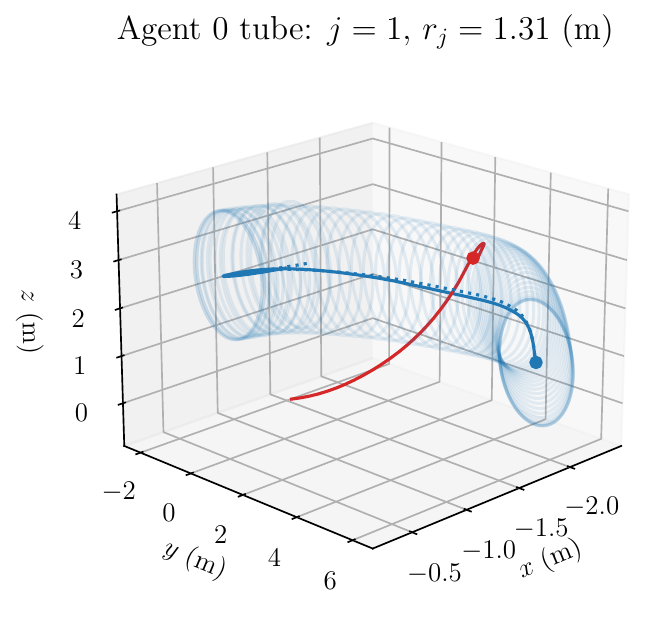} &
        \includegraphics[width=0.19\linewidth]{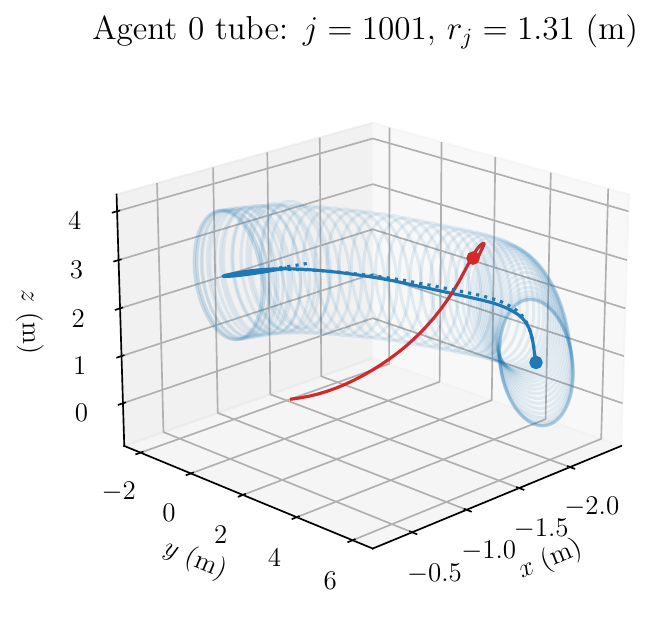} &
        \includegraphics[width=0.19\linewidth]{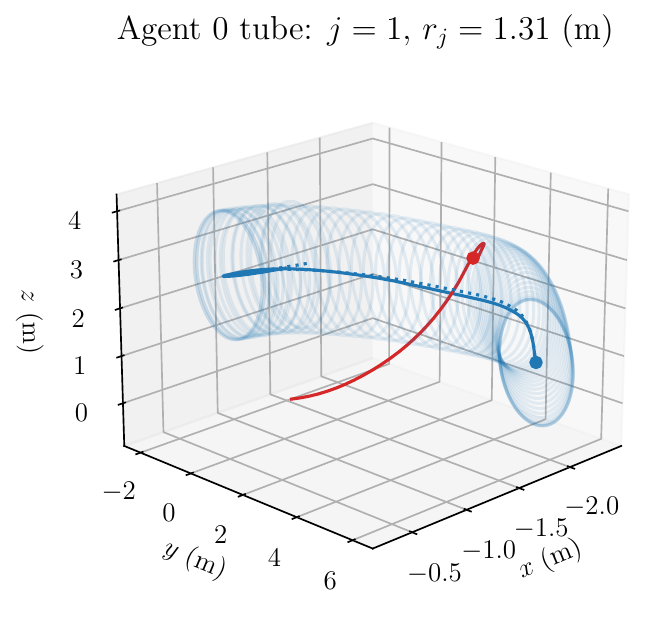} \\
        \includegraphics[width=0.19\linewidth]{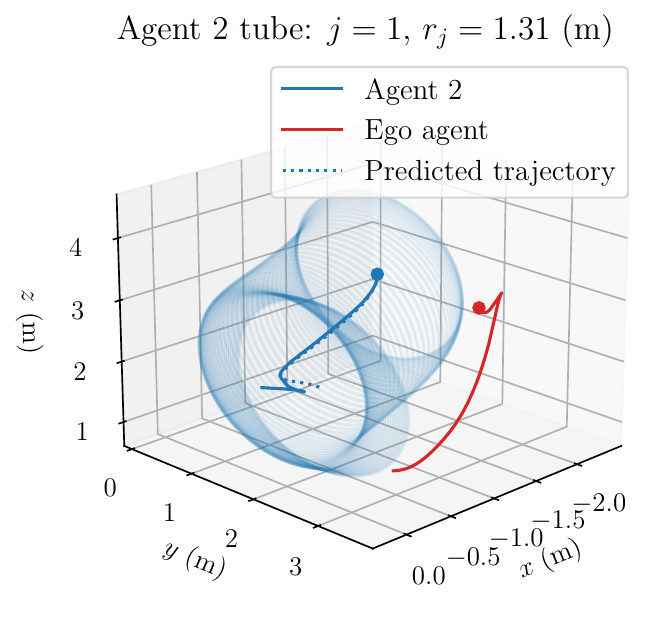} &
        \includegraphics[width=0.19\linewidth]{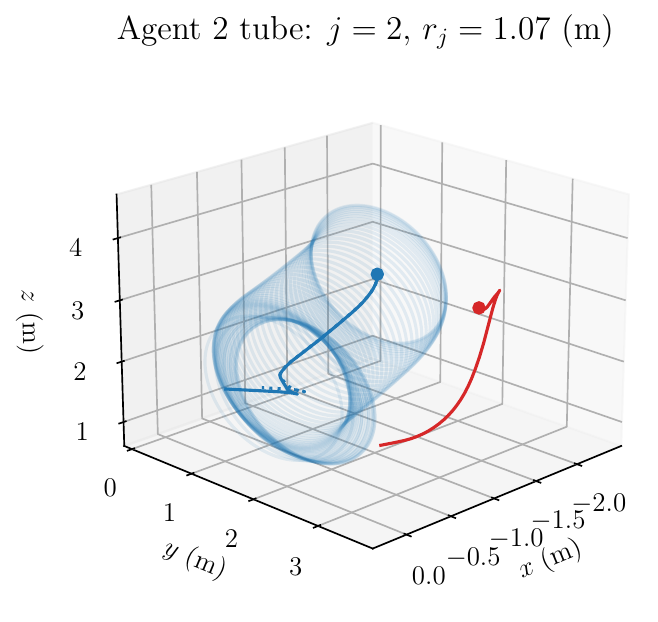} &
        \includegraphics[width=0.19\linewidth]{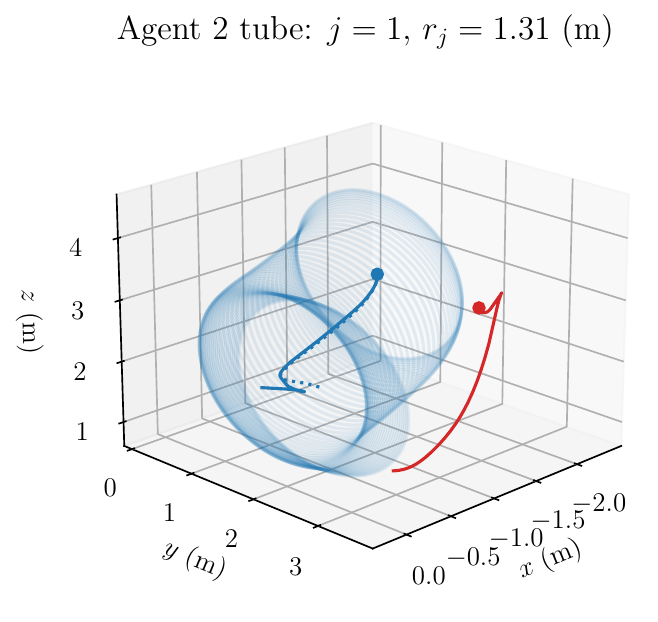} &
        \includegraphics[width=0.19\linewidth]{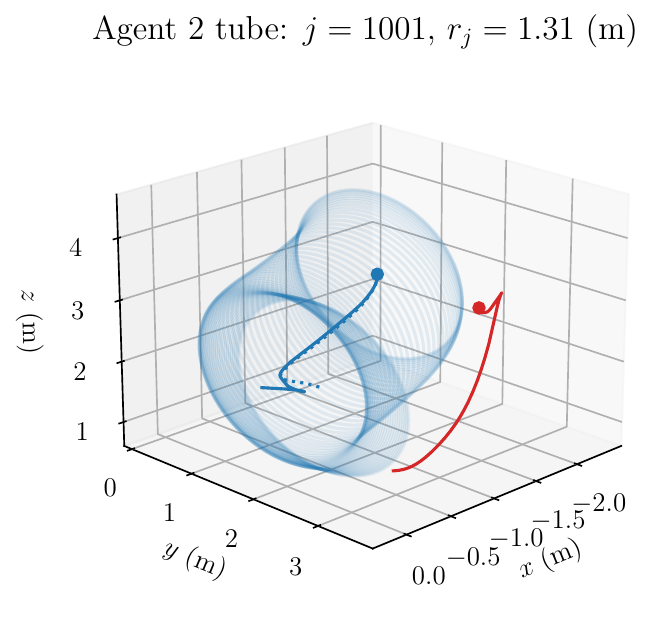} &
        \includegraphics[width=0.19\linewidth]{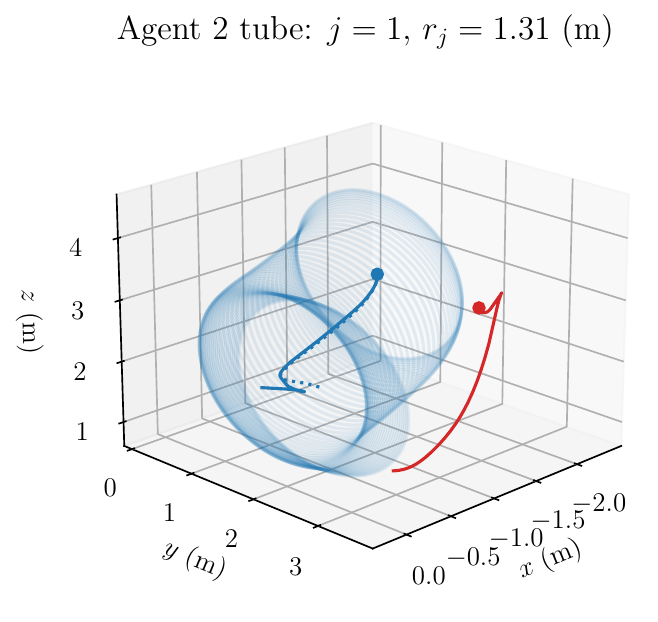}
    \end{tabular}
    \caption{Qualitative baseline comparison. {\emph{Naive CP}} at $j\in\{1,2\}$; {\emph{One-time CP}} at $j\in\{1,1001\}$; {\emph{Split CP}} at $j=1$. Row~1: realized trajectories; rows~2--3: conformal tubes for two representative environment agents.}
    \label{fig:quadswarm-qual-baselines}
\end{figure}
\end{fullblock}

\begin{fullblock}
\subsection{Discussion}\label{sec:discussion}
\end{fullblock}
\begin{shortblock}
\textbf{4.2 Discussion}\label{sec:discussion}
\end{shortblock}
Across both case studies, the two \emph{Robust CP} variants are the only rules that simultaneously (i) maintain empirical tube and safety coverages at the target $1-\alpha$ and (ii) contract $r_j$ from $r_0=r_{\max}$, yielding a reduction in $J_j$ in the car--pedestrian study and an increase in cumulative reward in the multi-quadcopter study. \emph{Naive CP} omits $M_{j+1}$ in~\eqref{eq:episodic_transfer} and exhibits early-episode tube undercoverage in the car--pedestrian study, recovering only after the radius stabilizes---precisely the failure mode that~\eqref{eq:episodic_transfer} is designed to prevent. \emph{One-time CP} and \emph{Split CP} cannot track $\mathcal{D}(\pi_j)$ and lose tube coverage; while their smaller radii yield lower planner costs, these lack valid coverage guarantees per~\eqref{eq:episodic_transfer}. Per-episode safety is delivered by the transfer requirement~\eqref{eq:episodic_transfer} through Theorem~\ref{thm:per_episode_safety}, and the cost improvement is delivered by the shrinkage branch of~\eqref{r_update_explicit} through Theorem~\ref{thm:performance_convergence}.

\begin{fullblock}
\section*{Acknowledgements}
Nikolai Matni and Eliot Shekhtman are supported in part by AFOSR Award FA9550-24-1-0102, SF Award SLES-2331880, and NSF CAREER award ECCS-2045834.
\end{fullblock}

\ifarticleclass
  \bibliographystyle{IEEEtran}
\fi
\bibliography{main}


\ifarticleclass
  \addtolength{\textheight}{-2cm}
  \newpage
\fi

\appendix
\ifincludeAppProofs \section{Deferred Proofs}
\label{app:proofs}

\subsection{Proof of Lemma \ref{prop:coupling_sensitivity}}
\label{app:proof_beta_t}
\subeqgentleon
Let us first recall the dynamics of ego and uncontrollable agents from \eqref{eq:pf:dyn} as
\begin{align*}
x_{t+1} = f_X(x_t,u_t), 
\qquad
y_{t+1} = f_Y(y_t,x_t,u_t,\nu_t),
\end{align*}
under the noise sequence $
\{\nu_t\}_{t=0}^{T-1}$. 
Under Assumption \ref{ass:lipschitz_dynamics}, i.e., Lipschitz continuity of $f_X$ and $f_Y$, our goal is now to bound the deviation in the environment trajectory from $y_{0:T}$ to $y'_{0:T}$ caused by a change in the policy from $\pi = \{u_t\}_{t=1}^{T-1}$ to $\pi' = \{u'_t\}_{t=1}^{T-1}$. Indeed, we want to show that
\begin{align}
\|x_{0:T}-x_{0:T}'\|_{\infty}&\le A_T\,\Delta u,\qquad 
A_T:=L_{Xu}\sum_{t=0}^{T-1}(L_{Xx})^t,
\label{eq:C-Xsens}\\
\|y_{0:T}-y_{0:T}'\|_{\infty}&\le \beta_T\,\Delta u,\qquad 
\beta_T:=(L_{Yx}A_T+L_{Yu})\sum_{t=0}^{T-1}(L_{Yy})^t .
\label{eq:C-Ysens}
\end{align}
where $\Delta u:=\|\pi-\pi'\|_\infty$ with $
\|\pi-\pi'\|_\infty := \max_{0\le t\le N-1}\|u_t-u'_t\|_2
$.
\begin{proof}
Let $\Delta x_t:=x_t-x'_t$, $\Delta y_t:=y_t-y'_t$, and $\Delta u_t:=u_t-u'_t$. From Assumption \ref{ass:lipschitz_dynamics},  we get
\begin{equation}
\|\Delta x_{t+1}\|_2\le L_{Xx}\|\Delta x_t\|_2+L_{Xu}\|\Delta u_t\|_2.
\label{eq:C-recX}
\end{equation}
Unrolling \eqref{eq:C-recX} until time $t$  with $\Delta x_0=0$ and $\|\Delta u_t\|_2\le\Delta u$ then gives
\begin{equation}
\|\Delta x_t\|_2\le L_{Xu}\sum_{i=0}^{t-1}(L_{Xx})^i\,\Delta u.
\label{eq:C-Xbound}
\end{equation}
Since all Lipschitz constants are nonnegative, the partial sum in \eqref{eq:C-Xbound} is monotone in $t$ so that
\[
\|x_{0:T}-x'_{0:T}\|_{\infty}=\max_{0\le t\le T}\|\Delta x_t\|_2=\|\Delta x_T\|_2
\le L_{Xu}\sum_{t=0}^{T-1}(L_{Xx})^t\,\Delta u,
\]
which proves \eqref{eq:C-Xsens}. Next, from Assumption \ref{ass:lipschitz_dynamics}, we get
\begin{align*}
\|\Delta y_{t+1}\|_2\le L_{Yy}\|\Delta y_t\|_2+L_{Yx}\|\Delta x_t\|_2+L_{Yu}\|\Delta u_t\|_2.
\end{align*}
Using \eqref{eq:C-Xbound} and $\|\Delta u_t\|_2\le\Delta u$, it the follows that
\begin{equation}
\|\Delta y_{t+1}\|_2\le L_{Yy}\|\Delta y_t\|_2+\big(L_{Yx}A_T+L_{Yu}\big)\Delta u.
\label{eq:C-recY2}
\end{equation}
Unrolling \eqref{eq:C-recY2} until time $t$ with $\Delta y_0=0$ then gives
\begin{align*}
\|\Delta y_t\|_2\le \big(L_{Yx}A_T+L_{Yu}\big)\sum_{i=0}^{t-1}(L_{Yy})^i\,\Delta u.
\end{align*}
Again by nonnegativity of the Lipschitz constants, this sum is monotone in $t$ so that
\[
\|y_{0:T}-y'_{0:T}\|_{\infty}=\|\Delta y_T\|_2
\le \big(L_{Yx}A_T+L_{Yu}\big)\sum_{t=0}^{T-1}(L_{Yy})^t\,\Delta u,
\]
which proves \eqref{eq:C-Ysens}, what was to be shown. 
\end{proof}

\subsection{Derivation of Planner Sensitivity \texorpdfstring{$L_U$}{L extunderscore U}}
\label{app:proof_planner_sensitivity}

\begin{proposition}[Planner Regularity \& Lipschitzness]
\label{ass:planner_lipschitz}
Assume there exists a compact interval of radii $\mathcal{R} := [r_{\min}, r_{\max}]$ such that for all $r \in \mathcal{R}$, the robust planning problem $\mathbf{P}[j;\,r_j]$ in \eqref{eq:episodic} is feasible, convex, and has a unique, regular optimizer $\pi^*(r)$ satisfying standard KKT conditions (LICQ, strict complementarity, and a nonsingular KKT matrix). Then the solution map $r \mapsto \pi^*(r)$ is locally Lipschitz on $\mathcal{R}$. In particular, there exists a uniform constant $L_U<\infty$ such that
\begin{equation}
\label{eq:planner_lip}
\|\pi^{*}(r)-\pi^{*}(r^{\prime})\|_\infty \;\le\; L_{U}\,|r-r^{\prime}|\,, \qquad \forall\, r, r' \in \mathcal{R}.
\end{equation}
\end{proposition}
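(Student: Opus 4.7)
The plan is to reduce the semi-infinite robust constraint to a finite system of smooth inequalities, apply a parametric sensitivity theorem to the resulting KKT system, and then patch local Lipschitz estimates into a uniform bound using compactness of $\mathcal{R}$.

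First, I would reformulate the robust safety constraint $H(x_{0:T},\zeta)\le 0$ for all $\zeta\in\mathcal{C}_r(\hat y_{0:T})$ as a finite collection of smooth inequalities depending on the parameter $r$. For the collision-avoidance form $H(x_{0:T},y_{0:T})=\max_t\{b_t-c(x_t,y_t)\}$, the supremum over $\mathcal{C}_r$ reduces to stepwise tightenings $c(x_t,\hat y_t)\ge b_t+r$, which are affine in $r$; more generally, Lipschitzness of $H$ in $y$ yields a tightened constraint of the form $H(x_{0:T},\hat y_{0:T})+L_H r\le 0$. Combining this with the dynamics $x_{t+1}=f_X(x_t,u_t)$ as equality constraints, the episodic problem $\mathbf{P}[j;r]$ becomes a standard finite-dimensional parametric NLP whose data are smooth in $(u_{0:T-1},r)$.

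Next, at the unique regular optimizer $\pi^*(r)$ I would write the KKT system $\Phi(u,\lambda,\mu;r)=0$ collecting stationarity, primal feasibility (active constraints), and the complementarity conditions for the active inequalities. By the hypotheses of the proposition (LICQ, strict complementarity, and a nonsingular KKT matrix, which together are Robinson's strong regularity condition), the Jacobian $\partial_{(u,\lambda,\mu)}\Phi$ is invertible at the optimizer. The implicit function theorem, applied in its Fiacco/Bonnans--Shapiro parametric form, then yields a $C^1$ map $r\mapsto(\pi^*(r),\lambda^*(r),\mu^*(r))$ on a neighborhood of each $r\in\mathcal{R}$. In particular, the derivative $\tfrac{d\pi^*}{dr}$ exists and is locally bounded, giving a local Lipschitz constant $L_U(r)$ on some open neighborhood of $r$.

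Finally, I would promote this to a uniform bound on $\mathcal{R}$. Since $\mathcal{R}=[r_{\min},r_{\max}]$ is compact and $r\mapsto\pi^*(r)$ is continuous by the sensitivity result, the open neighborhoods on which each local Lipschitz estimate holds cover $\mathcal{R}$, and a finite subcover plus a standard chaining argument (bounding $\|\pi^*(r)-\pi^*(r')\|_\infty$ by summing local estimates along a chain of overlapping neighborhoods) yields \eqref{eq:planner_lip} with $L_U=\max_i L_U(r_i)$ over the finite subcover. The main obstacle is the first step: justifying the reduction of the semi-infinite robust constraint into a finite smooth parametric NLP in a way that keeps the KKT data $C^2$ in $(u,r)$ so that Fiacco's theorem applies. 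For the Lipschitz-$H$ and separation-type constraints considered in the paper this reduction is explicit, but in full generality one would need to invoke Danskin-type arguments on $\sup_{\zeta\in\mathcal{C}_r}H(x_{0:T},\zeta)$ and verify that the active-set structure is preserved locally in $r$, which is exactly what strict complementarity buys us.
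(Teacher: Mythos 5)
Your proof takes essentially the same route as the paper: reduce the robust constraint to a finite smooth parametric NLP (the paper assumes the inequality constraints are affine in $r$, matching your Lipschitz-$H$ reduction), apply parametric sensitivity via the implicit function theorem to the KKT system under strong regularity (LICQ + strict complementarity + nonsingular KKT matrix), and pass to a uniform Lipschitz constant on the compact interval $\mathcal{R}$. The only cosmetic difference is that the paper writes out the linearized KKT system and derives the explicit Schur-complement formula for $d\pi^*/dr$, then defines $L_U$ as a supremum and applies the mean value theorem, whereas you invoke Fiacco/Bonnans--Shapiro abstractly and patch local estimates via a finite-subcover chaining argument.
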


\begin{proof}
We derive $L_U$ via parametric sensitivity of $\mathbf{P}[j;r]$ in  \eqref{eq:episodic}. Throughout, we assume the inequality constraints depend affinely on the radius,
\begin{align*}
a_k(\pi) + b_k\, r \le 0 \quad (k=1,\dots,m),    
\end{align*}
and any equality constraints $h(\pi)=0$ do not depend on $r$.\footnote{For instance, this is the case for the program in  \eqref{eq:episodic} when $H$ is Lipschitz continuous with Lipschitz constant $L_H$ so that $H(x_{j,0:T},\zeta)\le 0\quad \forall\,\zeta\in \mathcal{C}_{r_j}(\hat y_{0:T})$ can be rewritten as $H(x_{j,0:T},\hat y_{0:T}) \;\le\; -\,L_H\,r_{j}$ (see Appendix \ref{app:sufficiency}).}\footnote{If $h$ (or the gradients of $a_k$) depend on $r$, the derivative formula below includes additional right-hand-side terms; see Remark~\ref{rem:r_dep}. The Lipschitz conclusion \eqref{eq:planner_lip} is unchanged under strong regularity.} 

\paragraph{Active set and multipliers.}
Fix $r\in\mathcal{R}$ and let $\pi^*(r)$ denote the unique optimizer. Define the active set
\begin{align*}
 \mathcal{A}(r) \;:=\; \big\{\,k \;:\; a_k(\pi^*(r)) + b_k\,r = 0\,\big\},   
\end{align*}
and let $\lambda_k^*(r)>0$ be the corresponding Lagrange multipliers for $k\in\mathcal{A}(r)$ (strict complementarity). Let $\mu^*(r)$ be the multipliers for $h(\pi)=0$ (if present).

\paragraph{Second-order objects.}
Define the Hessian of the Lagrangian, the Jacobian of active constraints, and the vector of active radius coefficients:
\begin{align}
H(r)
&:= \nabla_{\!\pi\pi}^2 J(\pi^*(r)) + \sum_{k\in\mathcal{A}(r)} \lambda_k^*(r)\,\nabla_{\!\pi\pi}^2 a_k(\pi^*(r)),
\label{eq:H}
\\[1mm]
G_{\mathcal{A}}(r)
&:= \begin{bmatrix} \nabla_\pi a_k(\pi^*(r)) \end{bmatrix}_{k\in\mathcal{A}(r)} \in \mathbb{R}^{|\mathcal{A}(r)|\times n},
\label{eq:GA}
\\[1mm]
b_{\mathcal{A}}(r)
&:= \begin{bmatrix} b_k \end{bmatrix}_{k\in\mathcal{A}(r)} \in \mathbb{R}^{|\mathcal{A}(r)|}.
\label{eq:bA}
\end{align}
If equalities are present, let $E(r):=\nabla_\pi h(\pi^*(r))$.

\paragraph{KKT conditions.}
At $(\pi^*,\lambda^*,\mu^*;r)$ the KKT conditions read
\begin{align}
&\nabla_\pi J(\pi^*(r)) \;+\; \sum_{k\in\mathcal{A}(r)} \lambda_k^*(r)\,\nabla_\pi a_k(\pi^*(r)) \;+\; E(r)^\top \mu^*(r) \;=\; 0,
\label{eq:kkt_stat}
\\
& a_k(\pi^*(r)) + b_k\,r \;\le\; 0 \ \ (k=1,\dots,m), 
\qquad h(\pi^*(r)) \;=\; 0,
\label{eq:kkt_primal}
\\
& \lambda_k^*(r) \;\ge\; 0 \ \ (k=1,\dots,m), 
\qquad \lambda_k^*(r)\,\big(a_k(\pi^*(r)) + b_k\,r\big) \;=\; 0 \ \ (k=1,\dots,m).
\label{eq:kkt_comp}
\end{align}
Under LICQ, strict complementarity, and the second-order condition, the KKT matrix is nonsingular (strong regularity), and the active set $\mathcal{A}(r)$ is locally constant.

\paragraph{Linearized KKT system.}
Differentiate \eqref{eq:kkt_stat}–\eqref{eq:kkt_comp} with respect to $r$ over any subinterval where $\mathcal{A}(r)$ is fixed. With $E\equiv E(r)$ and dropping the explicit $r$ for brevity,
\begin{equation}
\label{eq:linKKT}
\begin{bmatrix}
H & G_{\mathcal{A}}^\top & E^\top\\
G_{\mathcal{A}} & 0 & 0\\
E & 0 & 0
\end{bmatrix}
\begin{bmatrix}
\dfrac{d\pi^*}{dr}\\[1mm] \dfrac{d\lambda^*_{\mathcal{A}}}{dr}\\[1mm] \dfrac{d\mu^*}{dr}
\end{bmatrix}
=
\begin{bmatrix}
0\\[1mm] -\,b_{\mathcal{A}}\\[1mm] 0
\end{bmatrix}.
\end{equation}
Nonsingularity of the KKT matrix (strong regularity) ensures a unique solution to \eqref{eq:linKKT}.

\paragraph{Sensitivity formula.}
Eliminate multipliers in \eqref{eq:linKKT} via the Schur complement. If no equalities are present ($E=0$), we obtain the classical expression
\begin{equation}
\label{eq:pi_prime}
\frac{d\pi^*(r)}{dr}
\;=\;
-\,H(r)^{-1}\,G_{\mathcal{A}}(r)^\top
\Big(G_{\mathcal{A}}(r)\,H(r)^{-1}\,G_{\mathcal{A}}(r)^\top\Big)^{-1}\,b_{\mathcal{A}}(r).
\end{equation}
With equalities, replace $G_{\mathcal{A}}$ by $W(r):=\begin{bmatrix}G_{\mathcal{A}}(r)\\[0.2em]E(r)\end{bmatrix}$ and $b_{\mathcal{A}}$ by $c_{\mathcal{A}}(r):=\begin{bmatrix}b_{\mathcal{A}}(r)\\[0.2em]0\end{bmatrix}$ in \eqref{eq:pi_prime}.

\paragraph{Uniform bound and Lipschitz continuity.}
Define
\begin{equation}
\label{eq:LU_def}
L_U \;:=\; \sup_{r\in\mathcal{R}}
\left\|
H(r)^{-1}G_{\mathcal{A}}(r)^\top
\Big(G_{\mathcal{A}}(r)H(r)^{-1}G_{\mathcal{A}}(r)^\top\Big)^{-1}
b_{\mathcal{A}}(r)
\right\|_\infty.
\end{equation}
Continuity of the matrices in \eqref{eq:H}–\eqref{eq:bA} and uniform nonsingularity on each fixed-active-set region imply the supremum in \eqref{eq:LU_def} is finite. The mean-value theorem then yields \eqref{eq:planner_lip}.
\end{proof}

\begin{remark}[Radius-dependent data]
\label{rem:r_dep}
If $h(\pi,r)$ depends on $r$ or if $\nabla_\pi a_k(\pi,r)$ depends on $r$, the right-hand side of \eqref{eq:linKKT} becomes $\big[\,g^\top,\,-b_{\mathcal{A}}^\top,\,-d_h^\top\big]^\top$, where
$g:=\partial_r\big(\nabla_\pi J + \sum_{k\in\mathcal{A}}\lambda_k^*\nabla_\pi a_k + E^\top \mu^*\big)$
and $d_h:=\partial_r h(\pi^*,r)$. The proof proceeds identically, and the Lipschitz property \eqref{eq:planner_lip} continues to hold under strong regularity.
\end{remark}

\subsection{Proof of Theorem~\ref{thm:per_episode_safety}}\label{app:thm-pe-proof}

Let $y_{j+1,0:T}$ and $y_{j,0:T}$ be trajectories generated by the control policies $\pi_{j+1}$ and $\pi_j$ in episodes $j+1$ and $j$, respectively. Recall from our previous discussion and Lemma \ref{prop:coupling_sensitivity} that 
\begin{align}
    \|y_{j+1,0:T} - y_{j,0:T}\|_\infty \;\le\; \beta_T\,\|\pi_{j+1}-\pi_j\|_\infty.
\end{align}
The nonconformity score $s$ defined in \eqref{eq:score-tube} has Lipschitz constant one so that we have
\begin{align*}
  |s(\hat y_{0:T},y_{j,0:T})-s(\hat y_{0:T},y_{j+1,0:T})|\le \beta_T \, \|\pi_{j+1} - \pi_j\|_\infty.
\end{align*}
Next, applying adversarial CP from Lemma \ref{lemma_ccp} immediately results in the guarantee
\begin{align*}
\mathbb{P}_{n_j}
\big\{\,\mathbb{P}
\big\{\,
s(\hat y_{0:T}, y_{j+1,0:T})
\;\le\;
q_j + \beta_T \, \|\pi_{j+1} - \pi_j\|_\infty
\big\}
\;\ge\;
1-\alpha\big\} \ge 1-\delta_j.
\end{align*}
 Since the implicit safety requirement in \eqref{eq:implicit_problem} holds by assumption, we know that $r_{j+1} \ge {q}_j + \beta_T \norm{\pi^\star(r_{j+1}) - \pi_j}_{\infty}$ is satisfied. This in turn implies that 
\begin{equation}\label{eq:safeeeeee}
\mathbb{P}_{n_j}\!\left\{\,\mathbb{P}\!\left\{\,s(\hat y_{0:T}, y_{j+1,0:T}) \le r_{j+1}\,\right\} \ge 1-\alpha\,\right\} \;\ge\; 1-\delta_j,
\end{equation} 
which is the first claim of Theorem~\ref{thm:per_episode_safety}.

Assume now that the optimization problem $\mathbf{P}[j{+}1;\,r_{j+1}]$ in \eqref{eq:episodic} is feasible. This means that 
\begin{align}\label{eq:safee}
    H(x_{j+1,0:T},\zeta)\le 0\quad \forall\,\zeta\in \mathcal{C}_{r_{j+1}}(\hat y_{0:T})
\end{align}
where we recall that the set $\mathcal{C}_{r_{j+1}}(\hat y_{0:T})$ is defined as $\mathcal{C}_{r_{j+1}}(\hat y_{0:T})\ :=\ \big\{\,y_{0:T}:\ s(\hat y_{0:T},y_{0:T})\le r_{j+1}\,\big\}$. Combining  \eqref{eq:safeeeeee} and \eqref{eq:safee} implies that 
\begin{align*}
\mathbb{P}_{n_j}\!\left\{\,\mathbb{P}\!\left\{\,H(x_{j+1,0:T},y_{j+1,0:T})\le 0\,\right\} \ge 1-\alpha \right\} \;\ge\; 1-\delta_j,
\end{align*}
which is the second claim of Theorem~\ref{thm:per_episode_safety}.

\subsection[Sufficient Condition for Feasibility of P(j+1; r) in Eq. (episodic)]{Sufficient Condition for Feasibility of \texorpdfstring{$\mathbf{P}[j{+}1;\,r_{j+1}]$}{P[j+1; r_{j+1}]} in \texorpdfstring{\eqref{eq:episodic}}{Eq. (episodic)}}
\label{app:sufficiency}

First note that feasibility of the optimization problem $\mathbf{P}[j;\,r_{j}]$ in \eqref{eq:episodic} is guaranteed if the constraint
\begin{align}\label{eq:safee__}
    H(x_{j,0:T},\zeta)\le 0\quad \forall\,\zeta\in \mathcal{C}_{r_{j}}(\hat y_{0:T})
\end{align}
is satisfied by the trajectory $x_{j,0:T}$, where we recall that $\mathcal C_{r_{j}}(\hat y_{0:T})=\{\,y_{0:T}:\ \|y_{0:T}-\hat y_{0:T}\|_\infty\le r_{j}\,\}$. 

Assume now that there exists a Lipschitz constant $L_H\ge 0$ such that 
\begin{align*}
  \big|\,H(x_{0:T},y_{0:T}) - H(x_{0:T},y'_{0:T})\,\big|
  \;\le\; L_H\,\|y_{0:T}-y'_{0:T}\|_\infty
\end{align*}
for all permissible $x_{0:T}$ and all $y_{0:T},y'_{0:T}$.
Consequently, if the trajectory $x_{j,0:T}$ satisfies
\begin{align*}
  H(x_{j,0:T},\hat y_{0:T}) \;\le\; -\,L_H\,r_{j},
\end{align*}
then, for any $y_{0:T}\in\mathcal C_{r_{j}}(\hat y_{0:T})$, we know that
\begin{align*}
  H(x_{j,0:T},y_{0:T})
  &\le H(x_{j,0:T},\hat y_{0:T}) + \big|H(x_{j,0:T},y_{0:T})-H(x_{j,0:T},\hat y_{0:T})\big|
  \nonumber\\
  &\le -\,L_H r_{j} + L_H\,\|y_{0:T}-\hat y_{0:T}\|_\infty
  \;\le\; -\,L_H r_{j} + L_H r_{j}
  \;=\; 0,
\end{align*}
by which we have shown that \eqref{eq:safee__} is satisfied so that $\mathbf{P}[j;\,r_{j}]$  is feasible.

\subsection{Proof of Lemma \ref{thm:explicit_update_rule}}
\label{app:proof_explicit_update}
Our goal  is to derive the solution to the tractable inequality in \eqref{eq:r_update}, which we recall for convenience:
\begin{equation}
  r_{j+1} \;\ge\; q_j \;+\; \kappa\,\big|r_{j+1}-r_j\big|.
  \label{eq:explicit_update_start}
\end{equation}
By assumption, we have that $\kappa<  1$. Since the right-hand side of~\eqref{eq:explicit_update_start} is nondecreasing in $r_{j+1}$, the minimal solution is obtained by saturating~\eqref{eq:explicit_update_start}. Due to the absolute value in $\lvert r_{j+1}-r_j\rvert$ in \eqref{eq:explicit_update_start}, we have to consider the two cases of shrinkage ($r_{j+1}\le r_j$) and expansion ($r_{j+1}> r_j$).

\paragraph{Case 1: Shrinkage ($r_{j+1}\le r_j$).}
We have that $\lvert r_{j+1}-r_j\rvert=r_j-r_{j+1}$ so that ~\eqref{eq:explicit_update_start} becomes
\begin{align*}
  r_{j+1} \;\ge\; q_j \;+\; \kappa\,(r_j-r_{j+1})
  \;\Longleftrightarrow\;
  (1+\kappa)\,r_{j+1} \;\ge\; q_j + \kappa r_j.
\end{align*}
Since $1+\kappa>0$, the minimal value of $r_{j+1} $ saturates the inequality, i.e.,
\begin{equation}
  r_{j+1} \;=\; \frac{q_j+\kappa r_j}{1+\kappa}.
  \label{eq:shrink_solution}
\end{equation}
Consistency with the shrinkage assumption requires
\begin{align*}
  \frac{q_j+\kappa r_j}{1+\kappa}\;\le\; r_j
  \;\Longleftrightarrow\;
  q_j \;\le\; r_j.
\end{align*}

\paragraph{Case 2: Expansion ($r_{j+1}> r_j$).}
We have that $\lvert r_{j+1}-r_j\rvert=r_{j+1}-r_j$ so that~\eqref{eq:explicit_update_start} becomes
\begin{align*}
  r_{j+1} \;\ge\; q_j \;+\; \kappa\,(r_{j+1}-r_j)
  \;\Longleftrightarrow\;
  (1-\kappa)\,r_{j+1} \;\ge\; q_j - \kappa r_j.
\end{align*}
Since $1-\kappa>0$, the minimal value of $r_{j+1} $ saturates the inequality, i.e.,
\begin{equation}
  r_{j+1} \;=\; \frac{q_j-\kappa r_j}{1-\kappa}.
  \label{eq:expand_solution}
\end{equation}
Consistency with the expansion assumption requires
\begin{align*}
  \frac{q_j-\kappa r_j}{1-\kappa}\;>\; r_j
  \;\Longleftrightarrow\;
  q_j \;>\; r_j.
\end{align*}

In conclusion, the above condition partition the line by $q_j\le r_j$ and $q_j>r_j$. Combining this with ~\eqref{eq:shrink_solution} and~\eqref{eq:expand_solution} yields the closed-form solution in \eqref{r_update_explicit}, which was to be shown.

\subsection{Proof of Theorem \ref{thm:stability_shrinkage}}
\label{app:proof_stability_shrinkage}
We analyze the magnitude $\abs{r_{j+1} - r_j}$ in the same two cases considered in Theorem \ref{thm:explicit_update_rule}.
\begin{itemize}
    \item \textbf{Case 1: Shrinkage (${q}_j \le r_j$).} We can derive that
    \begin{equation}\label{eq:addd}
    \abs{r_{j+1} - r_j} = r_j - r_{j+1} = r_j - \left(\frac{{q}_j + \kappa r_j}{1+\kappa}\right) = \frac{r_j(1+\kappa) - {q}_j - \kappa r_j}{1+\kappa} = \frac{r_j - {q}_j}{1+\kappa} = \frac{\abs{{q}_j - r_j}}{1+\kappa}.
    \end{equation}
    
    \item \textbf{Case 2: Expansion (${q}_j > r_j$).} We can derive that 
    \begin{align*}
    \abs{r_{j+1} - r_j} = r_{j+1} - r_j = \left(\frac{{q}_j - \kappa r_j}{1-\kappa}\right) - r_j = \frac{{q}_j - \kappa r_j - r_j(1-\kappa)}{1-\kappa} = \frac{{q}_j - r_j}{1-\kappa} = \frac{\abs{{q}_j - r_j}}{1-\kappa}.
    \end{align*}
\end{itemize}
Since $\kappa < 1$ while being positive, we have $1-\kappa < 1+\kappa$. The denominator is smaller in the expansion case, so the worst-case bound for $\abs{r_{j+1} - r_j}$ is given as
\begin{equation}
\abs{r_{j+1} - r_j} \le \frac{1}{1-\kappa} \abs{{q}_j - r_j},
\end{equation}
which was to be proven. Finally, in the case that ${q}_j < r_j$, we see that the numerator in \eqref{eq:addd} is positive, and since $1+\kappa > 0$, the change is positive, meaning $r_{j+1}< r_j $.

\subsection{Proof of Theorem~\ref{thm:convergence-main}}\label{app:thm8-proof}

\noindent\textbf{Step 1: Setup and notation.} For the convenience of the reader, we first recall our notation. Let
\begin{align*}
  T(r) \;=\; Q_{1-\alpha}\!\big(\pi^\star(r)\big)
\end{align*}
denote the $(1-\alpha)$ true {population} quantile under policy $\pi^\star(r)$. We assume the Lipschitz property
\begin{align*}
  \big|T(r)-T(r')\big| \;\le\; \kappa\,|r-r'|
  \qquad\text{for all }r,r'\in \mathbb{R}_{\ge 0},
\end{align*}
where $\kappa=\beta_T L_U\in(0,1)$ is the closed-loop gain. Suppose there exists a fixed point $r^\star\in \mathbb{R}_{\ge 0}$ with
\begin{align*}
  T(r^\star)\;=\;r^\star.
\end{align*}
Let the  error and calibration perturbation at episode $j$ be
\begin{equation}\label{eq:e-eta-def-app}
  e_j \;=\; |r_j - r^\star|,
  \qquad
  \eta_j \;=\; q_j - T(r_j),
\end{equation}
where $q_j$ is the empirical $(1-\bar\alpha_j)$ quantile formed with
\begin{align*}
  \bar\alpha_j \;=\; \alpha \;-\; \sqrt{\frac{\ln(1/\delta_j)}{2\,n_j}} \in (0,\alpha).
\end{align*}

\medskip
\noindent\textbf{Step 2: Branch-wise identities induced by the explicit update.}
Write $q_j=T(r_j)+\eta_j$ by \eqref{eq:e-eta-def-app}. The explicit update from Theorem~\ref{thm:explicit_update_rule} gives us  the following expressions:
\begin{align}
  \text{if } q_j\le r_j\ \text{(shrinkage):}\quad
  (1+\kappa)\,\big(r_{j+1}-r^\star\big)
  &= \big(T(r_j)-T(r^\star)\big) + \kappa\,(r_j-r^\star) + \eta_j,
  \label{eq:branch-shrink-balance-app}
  \\
  \text{if } q_j> r_j\ \text{(expansion):}\quad
  (1-\kappa)\,\big(r_{j+1}-r^\star\big)
  &= \big(T(r_j)-T(r^\star)\big) - \kappa\,(r_j-r^\star) + \eta_j.
  \label{eq:branch-expand-balance-app}
\end{align}
\emph{Derivation.} In the shrinkage case, in order to obtain \eqref{eq:branch-shrink-balance-app}, we compute $r_{j+1}-r^\star$ while substituting $r_{j+1}=\frac{q_j+\kappa r_j}{1+\kappa}$ and $q_j=T(r_j)+\eta_j$ so that
\[
(1+\kappa)(r_{j+1}-r^\star)=q_j+\kappa r_j-(1+\kappa)r^\star
=T(r_j)-T(r^\star)+\kappa(r_j-r^\star)+\eta_j.
\]
The expansion case is identical with $r_{j+1}=\frac{q_j-\kappa r_j}{1-\kappa}$.

\medskip
\noindent\textbf{Step 3: One-step bounds and contraction threshold.}
Taking absolute values in \eqref{eq:branch-shrink-balance-app}–\eqref{eq:branch-expand-balance-app} and using the Lipschitz property
\begin{align*}
  \big|T(r_j)-T(r^\star)\big|\;\le\;\kappa\,|r_j-r^\star|\;=\;\kappa\,e_j,
\end{align*}
we obtain
\begin{align*}
  |r_{j+1}-r^\star|
  &\le \frac{1}{1+\kappa}\Big(\,|T(r_j)-T(r^\star)|+\kappa e_j+|\eta_j|\,\Big)
   \;\le\; \frac{2\kappa}{1+\kappa}\,e_j + \frac{1}{1+\kappa}\,|\eta_j|,\\
  |r_{j+1}-r^\star|
  &\le \frac{1}{1-\kappa}\Big(\,|T(r_j)-T(r^\star)|+\kappa e_j+|\eta_j|\,\Big)
   \;\le\; \frac{2\kappa}{1-\kappa}\,e_j + \frac{1}{1-\kappa}\,|\eta_j|.
\end{align*}
The coefficients in the expansion case dominate those coefficients in the shrinkage case so for that for {all} episodes $j$ we obtain the one-step recursion
\begin{equation}\label{eq:uniform-recursion-app}
  e_{j+1} \;\le\; \gamma_\kappa\,e_j \;+\; B_\kappa\,|\eta_j|,
  \qquad
  \gamma_\kappa \;=\; \frac{2\kappa}{1-\kappa},\quad
  B_\kappa \;=\; \frac{1}{1-\kappa}.
\end{equation}
 We specifically note that
\begin{equation}\label{eq:gamma-cond-app}
  \gamma_\kappa < 1
  \iff \frac{2\kappa}{1-\kappa}<1
  \iff 3\kappa<1
  \iff \kappa<\tfrac13.
\end{equation}

\medskip
\noindent\textbf{Step 4: Explicit finite-horizon error bound.}
We now unroll \eqref{eq:uniform-recursion-app} over multiple episodes to obtain~\ref{thm:conv:P1} of Theorem~\ref{thm:convergence-main}. For convenience, we recall this bound as
\begin{equation}\label{eq:explicit-sum-app}
  e_{j+1}
  \;\le\;
  \gamma_\kappa^{\,j+1}\,e_0
  \;+\;
  B_\kappa \sum_{m=0}^{j} \gamma_\kappa^{\,j-m}\,|\eta_m|.
\end{equation}

\underline{Base case $j=0$.} From \eqref{eq:uniform-recursion-app}, $e_1\le \gamma_\kappa e_0+B_\kappa|\eta_0|$, which matches \eqref{eq:explicit-sum-app} with $j=0$.

\underline{Induction step.} Assume \eqref{eq:explicit-sum-app} holds for some $j\ge 0$. Then
\[
e_{j+2} \;\le\; \gamma_\kappa e_{j+1} + B_\kappa|\eta_{j+1}|
\;\le\; \gamma_\kappa\!\left(\gamma_\kappa^{\,j+1}e_0 + B_\kappa\sum_{m=0}^{j}\gamma_\kappa^{\,j-m}|\eta_m|\right)
      + B_\kappa|\eta_{j+1}|,
\]
which simplifies to
\[
e_{j+2} \;\le\; \gamma_\kappa^{\,j+2}e_0 \;+\; B_\kappa\sum_{m=0}^{j+1}\gamma_\kappa^{\,j+1-m}|\eta_m|,
\]
i.e., \eqref{eq:explicit-sum-app} with $j\leftarrow j+1$. This completes the induction.

\medskip
\noindent\textbf{Step 5: Closed-form bounds.}
If $|\eta_m|\le C$ for all $m\in\{0,1,\dots,j\}$, then applying \eqref{eq:explicit-sum-app} and summing the geometric series gives
\begin{align}
  e_{j+1}
  &\le \gamma_\kappa^{\,j+1}e_0
       + B_\kappa C\sum_{m=0}^{j}\gamma_\kappa^{\,j-m}
   \;=\; \gamma_\kappa^{\,j+1}e_0 + B_\kappa C\sum_{\ell=0}^{j}\gamma_\kappa^{\,\ell}
   \nonumber\\
  &= \gamma_\kappa^{\,j+1}e_0 \;+\; B_\kappa C\,\frac{1-\gamma_\kappa^{\,j+1}}{1-\gamma_\kappa}.
  \label{eq:closed-form-const-eta-app}
\end{align}
If moreover $\kappa<\tfrac13$ (so that $\gamma_\kappa<1$), then letting $j\to\infty$ in \eqref{eq:closed-form-const-eta-app} yields the steady-state bound
\begin{align*}
  \limsup_{j\to\infty} e_j
  \;\le\; \frac{B_\kappa}{1-\gamma_\kappa}\,C
  \;=\; \frac{1}{1-3\kappa}\,C.
\end{align*}
This proves~\ref{thm:conv:P2} of Theorem~\ref{thm:convergence-main}.

\medskip
\noindent\textbf{Step 6: High-probability control of \(\eta_j\) (level shift + empirical error).}
First, decompose
\begin{align*}
  \eta_j
  \;=\;
  \underbrace{Q_{1-\bar\alpha_j}\!\big(\pi^\star(r_j)\big) - Q_{1-\alpha}\!\big(\pi^\star(r_j)\big)}_{\Delta^{\mathrm{lvl}}_j}
  \;+\;
  \underbrace{q_j - Q_{1-\bar\alpha_j}\!\big(\pi^\star(r_j)\big)}_{\varepsilon^{\mathrm{est}}_j}
\end{align*}
into a level shift and empirical error component.

By assumption, the CDF $F_{r_j}(s)$ is differentiable and has density no less than $f_\star>0$  in a sufficiently large neighborhood of its $(1-\alpha)$ quantile $Q_{1-\alpha}\!\big(\pi^\star(r_j)\big)$. Then, by Lemma \ref{lem:level-lip} which we separately present in Appendix \ref{app:quant-lemmas}, the inverse-CDF is $1/f_\star$-Lipschitz in the probability level, and therefore 
\begin{equation}\label{eq:level-shift-app}
  \big|\Delta^{\mathrm{lvl}}_j\big| \;\le\; \frac{|\alpha-\bar\alpha_j|}{f_\star}.
\end{equation}
Next, Lemma \ref{lem:emp-quant} which we separately present in Appendix \ref{app:quant-lemmas} and the inverse-CDF Lipschitz property  imply
\begin{equation}\label{eq:emp-to-quantile-app}
  \mathbb{P}_{n_j}\!\left\{\big|\varepsilon^{\mathrm{est}}_j\big|
  \;\le\; \frac{1}{f_\star}\,\sqrt{\frac{\ln(2/\delta_j)}{2 n_j}}\right\}
  \;\ge\; 1-\delta_j.
\end{equation}
where $\delta_j\in(0,1)$ is any prescribed failure probability.

Combining \eqref{eq:level-shift-app}–\eqref{eq:emp-to-quantile-app} gives the per-episode high-probability bound
\begin{equation}\label{eq:eta-bound-per-episode}
  \mathbb{P}_{n_j}\!\left\{|\eta_j|
  \;\le\;
  \frac{|\alpha-\bar\alpha_j|}{f_\star}
  \;+\;
  \frac{1}{f_\star}\sqrt{\frac{\ln(2/\delta_j)}{2 n_j}}\right\}
  \;\ge\; 1-\delta_j,
\end{equation}
which proves the per-episode bound in~\ref{thm:conv:P3} of Theorem~\ref{thm:convergence-main}.

Finally, applying equation \eqref{eq:explicit-sum-app} and a union bounding argument over the first $j{+}1$ episodes, we obtain the explicit finite-horizon control
\begin{align*}
  \mathbb{P}_{\sum_{m=0}^{j}n_m}\!\left\{e_{j+1}
  \;\le\;
  \gamma_\kappa^{\,j+1} e_0
  \;+\;
  \frac{B_\kappa}{f_\star}\sum_{m=0}^{j}\gamma_\kappa^{\,j-m}\,|\alpha-\bar\alpha_m|
  \;+\;
  \frac{B_\kappa}{f_\star}\sum_{m=0}^{j}\gamma_\kappa^{\,j-m}\sqrt{\frac{\ln(2/\delta_m)}{2 n_m}}\right\}
  \;\ge\; 1-\sum_{m=0}^{j}\delta_m.
\end{align*}
This shows the joint bound in~\ref{thm:conv:P3} of Theorem~\ref{thm:convergence-main}.

\noindent\textbf{Step 7: Asymptotic conclusions.} By \eqref{eq:eta-bound-per-episode}, $|\eta_j|\le A_j$ holds with $\mathbb{P}_{n_j}$-probability at least $1-\delta_j$. If $n_j\to\infty$, $\bar\alpha_j\to\alpha$, and $\delta_j\to 0$, then $A_j\to 0$, so $\mathbb{P}_{n_j}\{|\eta_j|\to 0\}\ge 1-\delta_j$. Since calibration samples at distinct episodes are independent, a union bound under the product measure $\mathbb{P}_{\sum_{m=0}^{\infty}n_m}$ gives $\mathbb{P}_{\sum_{m=0}^{\infty}n_m}\{|\eta_j|\to 0\}\ge 1-\sum_{m=0}^{\infty}\delta_m$. If additionally $\kappa<\tfrac{1}{3}$, then $\gamma_\kappa<1$ by \eqref{eq:gamma-cond-app} and \eqref{eq:explicit-sum-app} gives $e_j\to 0$ on the same event, so $\mathbb{P}_{\sum_{m=0}^{\infty}n_m}\{e_j\to 0\}\ge 1-\sum_{m=0}^{\infty}\delta_m$.

If instead $\bar\alpha_j\equiv\bar\alpha<\alpha$, $n_j\to\infty$, and $\delta_j\to 0$, then $A_j\to(\alpha-\bar\alpha)/f_\star$, so $\mathbb{P}_{n_j}\{|\eta_j|\to(\alpha-\bar\alpha)/f_\star\}\ge 1-\delta_j$. Using \eqref{eq:closed-form-const-eta-app} with $C=(\alpha-\bar\alpha)/f_\star$ and the same union bound yields
\begin{align*}
\mathbb{P}_{\sum_{m=0}^{\infty}n_m}\!\Big\{\limsup_{j\to\infty} e_j\le \tfrac{1}{1-3\kappa}\cdot\tfrac{\alpha-\bar\alpha}{f_\star}\Big\}\;\ge\;1-\sum_{m=0}^{\infty}\delta_m.
\end{align*}
This completes the proof of Theorem~\ref{thm:convergence-main}.

\subsection{Quantile perturbation lemmas (expanded and annotated)}\label{app:quant-lemmas}

\paragraph{Setup and notation.}
Recall that $F_r:\mathbb R\to[0,1]$ denotes the cumulative distribution function (CDF) of the nonconformity score
$s(\hat y_{0:T},Y_{0:T})$ for the trajectory $y_{0:T}\sim\mathcal D(\pi^\star(r))$. We have also defined the right-continuous quantile map as
\[
Q_{p}(\pi^\star(r)) \;\coloneqq\; \inf\{\,s\in\mathbb R:\ F_r(s)\ge p\,\},
\qquad \alpha\in(0,1).
\]

Assume that the CDF $F_{r}(s)$ is differentiable and has density $f_{r}(s)$ no less than $f_\star>0$ for all $s$ in a sufficiently large neighborhood of its $p$ quantile $Q_{p}\!\big(\pi^\star(r)\big)$, i.e., 
\begin{equation}\label{eq:uniform-density}
  f_r(s)\ \ge\ f_\star\ >\ 0
  \qquad
  \text{for all $s$ in a sufficiently large neighborhood of $Q_{p}\!\big(\pi^\star(r)\big)$.}
\end{equation}
Condition \eqref{eq:uniform-density} guarantees that the inverse-CDF is {locally} Lipschitz; geometrically, the CDF has a slope bounded away from $0$ near the quantile, so small changes in {probability level} produce controlled changes in the {quantile value}.

\begin{lemma}[Level-to-quantile Lipschitzness]\label{lem:level-lip}
For any two probability levels $p,p'\in (0,1)$ and for any $Q_{p'}(\pi^\star(r))$ in the neighborhood of  $Q_{p}(\pi^\star(r))$ from \eqref{eq:uniform-density}, we have that
\begin{equation}\label{eq:level-lip}
  \big|\,Q_{p}(\pi^\star(r)) - Q_{p'}(\pi^\star(r))\,\big|
  \;\le\; \frac{|p-p'|}{f_\star}.
\end{equation}
\begin{proof}
\textbf{1) Fix the two quantiles.} Let
\[
q \;=\; Q_p(\pi^\star(r)),\qquad q' \;=\; Q_{p'}(\pi^\star(r)).
\]
By definition of $Q_p$, we have $F_r(q)\ge p$ and, by right continuity and strict monotonicity in our neighborhood, we can take $F_r(q)=p$ and $F_r(q')=p'$.

\smallskip
\textbf{2) Apply the Mean Value Theorem (MVT).} Since $F_r$ is differentiable on the open interval between $q$ and $q'$, there exists a point $\xi$ between $q$ and $q'$ such that
\[
F_r(q')-F_r(q) \;=\; f_r(\xi)\,(q'-q).
\]

\smallskip
\textbf{3) Use the density lower bound.} Because $f_r(\xi)\ge f_\star$ by \eqref{eq:uniform-density}, we get
\[
|p'-p| \;=\; |F_r(q')-F_r(q)| \;=\; f_r(\xi)\,|q'-q|
\;\ge\; f_\star\,|q'-q|.
\]

\smallskip
\textbf{4) Rearrange.} Hence
\[
|q'-q| \;\le\; \frac{|p'-p|}{f_\star},
\]
which is \eqref{eq:level-lip}.
\end{proof}
\end{lemma}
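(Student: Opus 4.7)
The plan is to reduce the quantile Lipschitz bound to a one-variable calculus statement about the CDF via the Mean Value Theorem, exploiting the uniform density lower bound \eqref{eq:uniform-density}. First, I would fix notation by letting $q=Q_{p}(\pi^\star(r))$ and $q'=Q_{p'}(\pi^\star(r))$, and, without loss of generality, assume $q\le q'$. The first substantive step is to upgrade the quantile definition from the inequality $F_r(q)\ge p$ to the equality $F_r(q)=p$ (and likewise $F_r(q')=p'$). This uses that, under \eqref{eq:uniform-density}, the density $f_r$ is strictly positive in a neighborhood of the quantile, so $F_r$ is continuous and strictly increasing there, which rules out jumps and flat parts that would otherwise make the right-continuous inverse disagree with the probability level.

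Next, I would apply the Mean Value Theorem to $F_r$ on the closed interval $[q,q']$, which is possible since $F_r$ is differentiable on the neighborhood containing both endpoints. This yields some $\xi\in(q,q')$ with
\begin{equation*}
  F_r(q')-F_r(q)\;=\;f_r(\xi)\,(q'-q).
\end{equation*}
Invoking the density lower bound $f_r(\xi)\ge f_\star>0$ and using the equalities $F_r(q)=p$, $F_r(q')=p'$ established in the previous step gives $|p'-p|\ge f_\star\,|q'-q|$, and rearranging produces the claimed Lipschitz bound \eqref{eq:level-lip}.

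The main obstacle, and the only nontrivial point in the argument, is ensuring that the entire segment $[q,q']$ lies within the neighborhood on which $f_r\ge f_\star$. The statement of the lemma already pushes this concern into its hypothesis (``$Q_{p'}(\pi^\star(r))$ in the neighborhood of $Q_{p}(\pi^\star(r))$ from \eqref{eq:uniform-density}''), so formally I just need to cite that hypothesis; conceptually, however, this is where the ``sufficiently large neighborhood'' language in \eqref{eq:uniform-density} earns its keep, because without it the MVT could dip into a region with vanishing density and the bound would degenerate. A minor bookkeeping point is to check the degenerate case $q=q'$ separately, where the inequality is trivial.
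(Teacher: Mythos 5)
Your proposal is correct and takes essentially the same route as the paper's own proof: upgrade the quantile definition to an equality via strict monotonicity and continuity of $F_r$ near the quantile, apply the Mean Value Theorem on the segment $[q,q']$, and invoke the density lower bound $f_r\ge f_\star$ to invert. Your additional remarks (WLOG ordering, the degenerate case $q=q'$, and the explicit observation that the hypothesis guarantees $[q,q']$ stays in the neighborhood where the density bound holds) are minor but correct clarifications of the paper's argument rather than a different approach.
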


\paragraph{Why we need Lemma~\ref{lem:level-lip}.}
In the convergence analysis (Theorem~\ref{thm:convergence-main}), we compare the population quantiles at two \emph{levels}, $1-\bar\alpha_j$ (used for calibration) and $1-\alpha$ (the target). Lemma~\ref{lem:level-lip} gives the clean bound
\[
\big|\,Q_{1-\bar\alpha_j}-Q_{1-\alpha}\,\big|
\;\le\; \frac{|\alpha-\bar\alpha_j|}{f_\star},
\]
which is the {level-shift} term in the perturbation $\eta_j$.

\begin{lemma}[Empirical quantile error via DKW Inequality]\label{lem:emp-quant}
Let $F_{r,n}$ be the empirical CDF from $n$ i.i.d.\ samples drawn from $F_r$, and let $q_{n,p}$ be the empirical $p$-quantile
\(
q_{n,p}=\inf\{t:\ F_{r,n}(t)\ge p\}.
\)
Then, for any outer tail level $\delta\in(0,1)$, we have
\begin{equation}\label{eq:DKW}
  \mathbb P_n\!\left\{\,\sup_{t\in\mathbb R}\big|F_{r,n}(t)-F_r(t)\big|\le \varepsilon_n(\delta)\,\right\}
  \;\ge\; 1-\delta,
  \qquad \varepsilon_n(\delta)\coloneqq \sqrt{\frac{\ln(2/\delta)}{2n}}.
\end{equation}
Under the assumption in equation \eqref{eq:uniform-density}, the empirical quantile satisfies
\begin{equation}\label{eq:empq-bound}
  \mathbb P_n\!\left\{\big|\,q_{n,p} - Q_p(\pi^\star(r))\,\big| \;\le\; \frac{\varepsilon_n(\delta)}{f_\star}\right\}
  \;\ge\; 1-\delta.
\end{equation}
\begin{proof}
\textbf{1) DKW event (uniform CDF control).} By the DKW inequality, the statement in equation \eqref{eq:DKW} immediately follows. 

\smallskip
\textbf{2) Pin the target quantile and a local window.} Let
\[
q^\star \;=\; Q_p(\pi^\star(r)),
\]
and pick a small $\Delta>0$ that keeps $[q^\star-\Delta,\,q^\star+\Delta]$ inside the neighborhood where $f_r\ge f_\star$.

\smallskip
\textbf{3) One-sided controls for the \emph{true} CDF using the density lower bound.}  
By the Mean Value Theorem applied to $F_r$ on $[q^\star,\,q^\star+\Delta]$ and $[q^\star-\Delta,\,q^\star]$ there exist points $\xi_+,\xi_-$ in those intervals with
\[
F_r(q^\star+\Delta)-F_r(q^\star) \;=\; f_r(\xi_+)\,\Delta \;\ge\; f_\star \Delta,
\qquad
F_r(q^\star)-F_r(q^\star-\Delta) \;=\; f_r(\xi_-)\,\Delta \;\ge\; f_\star \Delta.
\]
Since $F_r(q^\star)=p$, we get
\begin{equation}\label{eq:true-CDF-one-sided}
  F_r(q^\star+\Delta) \;\ge\; p + f_\star \Delta,
  \qquad
  F_r(q^\star-\Delta) \;\le\; p - f_\star \Delta.
\end{equation}

\smallskip
\textbf{4) Transfer these inequalities to the \emph{empirical} CDF on the DKW event.}  
Using \eqref{eq:DKW}, we have
\begin{align}
\mathbb P_n\{&F_{r,n}(q^\star+\Delta) \;\ge\; F_r(q^\star+\Delta) - \varepsilon \;\ge\; p + f_\star\Delta - \varepsilon\\
&\text{and} \; F_{r,n}(q^\star-\Delta) \;\le\; F_r(q^\star-\Delta) + \varepsilon \;\le\; p - f_\star\Delta + \varepsilon\}
  \;\ge\; 1-\delta.
\end{align}

\smallskip
\textbf{5) Choose $\Delta$ to make the inequalities straddle level $p$.}  
Set $\Delta \;=\; \frac{\varepsilon}{f_\star}$ so that
\[
\mathbb P_n\!\left\{F_{r,n}(q^\star+\Delta) \;\ge\; p \text{ and } F_{r,n}(q^\star-\Delta) \;\le\; p\right\}
  \;\ge\; 1-\delta.
\]

\smallskip
\textbf{6) Use the empirical quantile definition to trap $q_{n,p}$.}  
By definition, $q_{n,p}=\inf\{t:\ F_{r,n}(t)\ge p\}$.  
Since $F_{r,n}(q^\star-\Delta)\le p$ and $F_{r,n}(q^\star+\Delta)\ge p$, the monotonicity of $F_{r,n}$ implies
\[
\mathbb P_n\!\left\{q_{n,p} \;\in\; [\,q^\star-\Delta,\ q^\star+\Delta\,]\right\}
  \;\ge\; 1-\delta,
\]
where we used the union bound. Therefore,
\[
\mathbb P_n\!\left\{|q_{n,p}-q^\star| \;\le\; \Delta \;=\; \frac{\varepsilon}{f_\star}
\;=\; \frac{\varepsilon_n(\delta)}{f_\star}\right\}
  \;\ge\; 1-\delta.
\]
This is exactly \eqref{eq:empq-bound}.
\end{proof}
\end{lemma}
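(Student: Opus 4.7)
The lemma contains two claims. The first, the uniform CDF control \eqref{eq:DKW}, is the classical Dvoretzky--Kiefer--Wolfowitz--Massart inequality with Massart's sharp constant $2$ in front of $e^{-2n\varepsilon^2}$; inverting $\delta = 2e^{-2n\varepsilon_n^2}$ recovers $\varepsilon_n(\delta)=\sqrt{\ln(2/\delta)/(2n)}$. I would simply invoke this as a standard textbook result and move on. All real work is in upgrading \eqref{eq:DKW} to the quantile bound \eqref{eq:empq-bound} using the density lower bound \eqref{eq:uniform-density}.

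For the quantile bound, my plan is a two-sided sandwich argument that bounds $q_{n,p}$ between $q^\star - \Delta$ and $q^\star + \Delta$ for $q^\star := Q_p(\pi^\star(r))$ and $\Delta := \varepsilon_n(\delta)/f_\star$. The strategy has two parts. First, I use the density lower bound plus the mean value theorem on the intervals $[q^\star, q^\star+\Delta]$ and $[q^\star-\Delta, q^\star]$ to obtain the \emph{true-CDF} bounds $F_r(q^\star+\Delta) \ge p + f_\star\Delta$ and $F_r(q^\star-\Delta) \le p - f_\star\Delta$. The key calibration is the choice $\Delta = \varepsilon_n(\delta)/f_\star$, which makes $f_\star\Delta$ exactly equal to the DKW slack $\varepsilon_n(\delta)$, so when I transfer these inequalities through \eqref{eq:DKW} the slack is perfectly absorbed: on the DKW event, $F_{r,n}(q^\star+\Delta) \ge F_r(q^\star+\Delta) - \varepsilon_n(\delta) \ge p$ and $F_{r,n}(q^\star-\Delta) \le F_r(q^\star-\Delta) + \varepsilon_n(\delta) \le p$. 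Second, I invoke the definition $q_{n,p} = \inf\{t : F_{r,n}(t)\ge p\}$ together with the monotonicity of $F_{r,n}$: the first inequality forces $q_{n,p} \le q^\star + \Delta$, while the second forces $q_{n,p} \ge q^\star - \Delta$. Together these yield $|q_{n,p} - q^\star| \le \Delta$ on the DKW event, which has probability $\ge 1-\delta$.

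The main obstacle I expect is ensuring that the sandwich points $q^\star \pm \Delta$ both lie inside the neighborhood where \eqref{eq:uniform-density} guarantees $f_r \ge f_\star$. Since $\Delta = \varepsilon_n(\delta)/f_\star$ vanishes as $O(1/\sqrt{n})$, this is automatic for $n$ sufficiently large, but stating this cleanly requires interpreting the phrase ``sufficiently large neighborhood'' in \eqref{eq:uniform-density} as large enough to contain $[q^\star-\Delta, q^\star+\Delta]$ for the relevant range of $(n,\delta)$. A smaller subtlety involves the plateau case: if $F_{r,n}$ is flat at level $p$, the infimum definition of $q_{n,p}$ still commits to the left endpoint of that plateau, and both sandwich inequalities remain valid by monotonicity, so no extra care is needed beyond observing that $F_{r,n}$ is nondecreasing and right-continuous. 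Once the neighborhood condition is handled, the proof is essentially a mechanical combination of MVT, DKW, and the infimum definition, so I expect no deeper difficulties.
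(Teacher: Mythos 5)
Your proposal is correct and follows essentially the same route as the paper's proof: DKW for uniform CDF control, the mean value theorem with the density lower bound to get the one-sided true-CDF inequalities at $q^\star\pm\Delta$, the calibration $\Delta=\varepsilon_n(\delta)/f_\star$ to absorb the DKW slack, and the infimum definition plus monotonicity of $F_{r,n}$ to trap the empirical quantile. Your remark on interpreting the ``sufficiently large neighborhood'' in \eqref{eq:uniform-density} is a fair point of care that the paper handles implicitly, but it does not change the argument.
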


\paragraph{Why we need Lemma~\ref{lem:emp-quant}.}
In the convergence proof, the empirical quantile $q_j$ is compared to the population quantile at the \emph{same level} $1-\bar\alpha_j$. Lemma~\ref{lem:emp-quant} converts the uniform CDF error (obtained using the DKW inequality) into an error on the quantile value, with the sharp factor \(1/f_\star\). Concretely, we thereby get
\[
\mathbb P_{n_j}\!\left\{\big|\,q_j - Q_{1-\bar\alpha_j}(\pi^\star(r_j))\,\big|
\;\le\;
\frac{1}{f_\star}\sqrt{\frac{\ln(2/\delta_j)}{2\,n_j}}\right\}
  \;\ge\; 1-\delta_j.
\]

\fi
\ifincludeAppPerf   \subsection{Proof of Theorem~\ref{thm:performance_convergence}}
\label{app:proof-performance-convergence}
\subeqgentleon
The first result $|J_{j+1}-J^\star|=|V(r_{j+1})-V(r^\star)|\le L_V\,e_{j+1}$ follows directly from the Lipschitz constant $L_V$ of $V(r)$ and $e_{j+1}=|r_{j+1}-r^\star|$ from Theorem~\ref{thm:convergence-main}.

We next show that $V$ is nondecreasing, which is used in all three parts below. For $r\in\calR$, define the feasible policy set
\begin{align}\label{eq:app:feasible_set}
\mathcal{F}(r):=\big\{\pi:\ H(x_{0:T}(\pi),\zeta)\le 0\ \ \forall\,\zeta\in\mathcal{C}_r(\hat y_{0:T})\big\}.
\end{align}
For $r,r'\in\calR$ with $r'\le r$, we have that $\mathcal{C}_{r'}(\hat y_{0:T})\subseteq\mathcal{C}_{r}(\hat y_{0:T})$, so $\mathcal{F}(r)\subseteq\mathcal{F}(r')$. Taking infima of the same objective function over nested sets yields
\begin{align}\label{eq:app:V_monotone}
V(r')=\inf_{\pi\in\mathcal{F}(r')}J\big(x_{0:T}(\pi),\pi\big)\;\le\;\inf_{\pi\in\mathcal{F}(r)}J\big(x_{0:T}(\pi),\pi\big)=V(r).
\end{align}

\paragraph{Part~(\ref{thm:perf:P1}): Improvement over the initial policy.}
Fix any $r_j\in[r^\star,r_0)$ with $r_0>r^\star$. Since $r_j<r_0$, the monotonicity~\eqref{eq:app:V_monotone} gives $J_j=V(r_j)\le V(r_0)=J_0$, with $J_j=V(r_j)< V(r_0)=J_0$ when the function $V(r)$ is strictly increasing on $[r^\star,r_0]$.

Next, fix any $r_j\le r_0$. Assume additionally that there exists $m_V>0$ such that $V(r)-V(r')\ge m_V(r-r')$ for  all $r, r'\in[r^\star,r_0]$.  If $r_j\ge r^\star$, then $r_j\in[r^\star,r_0]$  so that we directly obtain
\begin{align*}
J_0-J_j\;=\;V(r_0)-V(r_j)\;\ge\;m_V(r_0-r_j)\;=\;m_V(r_0-r^\star-e_j)
\end{align*}
where we recall that $e_j=r_j-r^\star$. If $r_j<r^\star$, then $e_j=r^\star-r_j\ge 0$ and the monotonicity property~\eqref{eq:app:V_monotone} implies $V(r_j)\le V(r^\star)$, so that we can derive
\begin{align*}
J_0-J_j\;=\;V(r_0)-V(r_j)\;\ge\;V(r_0)-V(r^\star)\;\ge\;m_V(r_0-r^\star)\;\ge\;m_V(r_0-r^\star-e_j).
\end{align*}
In both cases $J_0-J_j\ge m_V(r_0-r^\star-e_j)$.

\paragraph{Part~(\ref{thm:perf:P2}): One-step improvement.}
Recall from Theorem~\ref{thm:convergence-main} that $\eta_j=q_j-T(r_j)$ where $T(r):=Q_{1-\alpha}(\pi^\star(r))$ satisfies $T(r^\star)=r^\star$ and has Lipschitz constant $\kappa$. For $r_j\ge r^\star$, this Lipschitz property $T(r)$ gives us that
\begin{align}\label{eq:app:Trj_upper}
T(r_j)\;\le\; T(r^\star)+\kappa\,(r_j-r^\star)\;=\;r^\star+\kappa\,(r_j-r^\star).
\end{align}
By the definition of $\eta_j$ and~\eqref{eq:app:Trj_upper}, we have $q_j=T(r_j)+\eta_j\le r^\star+\kappa(r_j-r^\star)+|\eta_j|$. Under the condition $|\eta_j|<(1-\kappa)(r_j-r^\star)$ assumed in~(\ref{thm:perf:P2}), this yields
\begin{align*}
q_j\;<\; r^\star+\kappa(r_j-r^\star)+(1-\kappa)(r_j-r^\star)\;=\;r_j.
\end{align*}
Since $q_j<r_j$, the update rule in~\eqref{r_update_explicit} is on the shrinkage branch, and Theorem~\ref{thm:stability_shrinkage} (Episode-to-Episode Stability and Shrinkage) gives us that $r_{j+1}<r_j$. The monotonicity property~\eqref{eq:app:V_monotone} then yields $J_{j+1}=V(r_{j+1})\le V(r_j)=J_j$.

For the last statement in \ref{thm:perf:P2}, let $\mathcal{E}_j:=\{J_{j+1}\le J_j\}$. By Theorem~\ref{thm:convergence-main}(\ref{thm:conv:P3}), $\mathbb{P}_{n_j}\{|\eta_j|\le A_j\}\ge 1-\delta_j$. On this event, the sufficient condition $A_j<(1-\kappa)(r_j-r^\star)$ from~(\ref{thm:perf:P2}) implies the deterministic condition above, so $\mathbb{P}_{n_j}(\mathcal{E}_j)\ge 1-\delta_j$. Since calibration samples across episodes are independent, a union bound argument yields
\begin{align*}
\mathbb{P}_{\sum_{j=J}^{K}n_j}\!\Big\{\textstyle\bigcap_{j=J}^{K}\mathcal{E}_j\Big\}\;\ge\;1-\sum_{j=J}^{K}\delta_j.
\end{align*}

\paragraph{Part~(\ref{thm:perf:P3}): Asymptotics.}
By Theorem~\ref{thm:convergence-main}(\ref{thm:conv:P4}), $e_j\to 0$ with probability at least $1-\sum_{m=0}^{\infty}\delta_m$. On this event, $|J_j-J^\star|\le L_V\,e_j\to 0$ by the Lipschitz constant $L_V$ of $V$, hence
\begin{align*}
\mathbb{P}_{\sum_{m=0}^{\infty}n_m}\{J_j\to J^\star\}\;\ge\;1-\sum_{m=0}^{\infty}\delta_m.
\end{align*}
For the second claim, fix $r_0>r^\star$. On the same event, $|r_j-r^\star|\to 0$, so $r_j\to r^\star$. Since $r^\star<r_0$, there exists $j_0$ such that $r_j<r_0$ for all $j\ge j_0$. The monotonicity~\eqref{eq:app:V_monotone} then gives $J_j=V(r_j)\le V(r_0)=J_0$ for all $j\ge j_0$, hence
\begin{align*}
\mathbb{P}_{\sum_{m=0}^{\infty}n_m}\{\exists\,j_0\ \text{s.t.}\ J_j\le J_0\ ,\forall\,j\ge j_0\}\;\ge\;1-\sum_{m=0}^{\infty}\delta_m,
\end{align*}
which completes the proof.
          \fi
\ifincludeAppPolicy 
\section{The Iterative Policy Update Algorithm}
\label{sec:iterative_algorithm}
We summarize the explicit and implicit solver in Algorithm \ref{alg:iterative}.
\begin{algorithm}[h!]
\caption{Iterative Safe Policy Improvement (Explicit \& Implicit Forms)}
\label{alg:iterative}
\begin{algorithmic}[1]
\STATE \textbf{Input:} confidence levels $1-\alpha$, $1-\delta$; initial safe policy $\pi_0$.
\STATE \textbf{Input:} closed-loop gain $\kappa=\beta_TL_U$; fixed predictor $\hat y_{0:T}$.
\STATE \textbf{Input:} solver choice: $\texttt{SOLVER\_TYPE}\in\{\text{'EXPLICIT'},\text{'IMPLICIT'}\}$.
\STATE \emph{Initialize:} choose $r_0\in\calR$; set $\pi_0\gets\pi^\star(r_0)$.
\FOR{$j=0,1,2,\ldots$}
\STATE Execute $\pi_j$; collect rollouts $\{y^{(i)}_{j,0:T}\}_{i=1}^{n_j}$.
\STATE Calibration level with $\delta$ inside: $\bar\alpha_j\gets\alpha-\sqrt{\ln(1/\delta)/(2n_j)}$, so that $\bar\alpha_j\in(0,\alpha)$.
\STATE Empirical quantile (scores at episode $j$): $q_j\gets q_{1-\bar\alpha_j}\!\big(\{s(\hat y_{0:T},y^{(i)}_{j,0:T})\}_{i=1}^{n_j}\big)$.
\IF{\texttt{SOLVER\_TYPE}=\text{'EXPLICIT'}}
\STATE \{use analytical solution from Lemma~\ref{thm:explicit_update_rule}\}
\IF{$q_j\le r_j$}
\STATE $r_{j+1}\gets(q_j+\kappa r_j)/(1+\kappa)$
\ELSE
\STATE $r_{j+1}\gets(q_j-\kappa r_j)/(1-\kappa)$
\ENDIF
\ELSIF{\blue{\texttt{SOLVER\_TYPE}=\text{'IMPLICIT'}}}
\STATE \blue{\{use iterative solver for equation \eqref{eq:implicit_problem}, see Appendix~\ref{sec:implicit_solver}\}}
\STATE \blue{$r_{j+1}\gets\mathrm{FindRoot}\!\big(r\mapsto r-(q_j+\beta_T\|\pi^\star(r)-\pi_j\|_\infty)\big)$}
\ENDIF
\STATE \emph{Certify and deploy:} solve $\mathbf{P}[j{+}1;r_{j+1}]$ to get policy $\pi_{j+1}\gets\pi^\star(r_{j+1})$.
\STATE \emph{Monitor:} record $|r_{j+1}-r_j|$ and $\|\pi_{j+1}-\pi_j\|_\infty$.
\STATE \textbf{if} changes are below threshold \textbf{then break.}
\ENDFOR
\end{algorithmic}
\end{algorithm}
 \fi
\ifincludeAppSolver \section{The Implicit Solver Algorithm}
\label{sec:implicit_solver}

We here detail the implicit solver of Approach~1, as presented in Section~\ref{sec:iterative_planning}. At episode $j$, the quantities $q_j$, $\pi_j$, and $\beta_T$ are fixed, so the implicit safety requirement in equation \eqref{eq:implicit_problem} reduces to a scalar one-dimensional program with residual
\[
g_j(r)\ :=\ r\ -\ \Big(q_j + \beta_T \big\|\pi^\star(r)-\pi_j\big\|_{\infty}\Big),
\]
and we seek the smallest $r\in[q_j,r_{\max}]$ such that $g_j(r)\ge 0$. Each evaluation of $g_j(r)$ requires one call to the planner $\pi^\star(\cdot)$, i.e., one solve of $\mathbf{P}[j{+}1;\,r]$ in equation \eqref{eq:episodic} (warm-started at $\pi_j$ in our implementation). In MATLAB, we handle this one-dimensional constrained problem using either (i) a bracketed line search on $r$ to find a feasible point by increasing $r$ until $g_j(r)\ge 0$ or $r=r_{\max}$, followed by bisection to a tolerance $\varepsilon$ when the feasible set is observed to be an interval, or (ii) a generic constrained optimizer such as \texttt{fmincon} with objective $\min_r r$ and constraint $g_j(r)\ge 0$. 

\section{Practical Instantiations of Sensitivity Constants}
\label{app:practical_instantiations}

Our analysis treats $\beta_T$ from Lemma~\ref{prop:coupling_sensitivity} and $L_U$ from Assumption~\ref{ass:planner_sensitivity} as fixed global upper bounds so that the closed-loop gain $\kappa=\beta_TL_U$ is constant. The closed-form expressions for $\beta_T$  in Appendix~\ref{app:proof_beta_t}  results from recursive unfolding of the coupled dynamics~\eqref{eq:pf:dyn}; our argument shows that $\beta_T$ grows like $(1+cL)^T$ in the horizon $T$, where $c$ is a constant $L$ is the largest Lipschitz constant in Assumption~\ref{ass:lipschitz_dynamics}. For even moderate $T$, this bound can be overly conservative.

A practical alternative is to replace $\beta_T$ and $L_U$ by data-driven estimates: the maximum pairwise slopes of the maps $r \mapsto \pi^\star(r)$ and $r \mapsto y_{0:T}(\pi^\star(r))$ evaluated on a finite grid in $\calR$. These are the kinky-inference-style Lipschitz estimators of~\cite{calliess2017lazily,huang2023sample}. Under mild density assumptions,~\cite{huang2023sample} shows these are certified upper bounds with confidence $1-\delta_\beta$. Since the robustification term $M_{j+1}$ then depends on estimation data, this substitution degrades the outer confidence in Theorem~\ref{thm:per_episode_safety} from $1-\delta$ to $1-\delta-\delta_\beta$ by a union bounding argument. The same applies to the implicit solver and to Lemma~\ref{thm:explicit_update_rule}.



 \fi       
\ifshortcontent\input{sections/app_quadswarm}\fi
\end{document}